\documentclass[superscriptaddress,amsmath,longbibliography,amssymb,twocolumn]{revtex4-1}

\usepackage{complexity}
\usepackage{graphicx}
\usepackage{amsmath}
\usepackage{fullpage}
\usepackage{amsfonts}
\usepackage{amssymb}
\usepackage{dsfont}
\usepackage{amsthm}
\usepackage{tensor}
\usepackage{amsthm}
\usepackage{xcolor}
\usepackage{comment}
\usepackage{mathtools}
\usepackage{physics}
\usepackage{bbold}

\usepackage{appendix}
\usepackage[ruled,vlined]{algorithm2e}
\usepackage[colorlinks]{hyperref}
\hypersetup{
pdfstartview={FitH},
pdfnewwindow=true,
colorlinks=true,
linkcolor=blue,
citecolor=blue,
filecolor=blue,
urlcolor=blue}
\usepackage{microtype}
\usepackage{tikz}
\newtheorem{theorem}{Theorem}
\newtheorem{problem}{Problem}

\renewcommand{\polylog}{\mathrm{polylog}}
\renewcommand{\poly}{\mathrm{poly}}

\def\sl{{l}}

\def\cO{\mathcal{O}}

\def\cS{\mathcal{S}}

\def\cU{\mathcal{U}}

\def\cV{\mathcal{V}}
\def\cW{\mathcal{W}}

\def\one{{\mathchoice {\rm 1\mskip-4mu l} {\rm 1\mskip-4mu l} {\rm
1\mskip-4.5mu l} {\rm 1\mskip-5mu l}}}
\def\dqc1{\textsc{DQC1}}
\newcommand{\bM}{{\bf M}}
\newcommand{\bA}{{\bf A}}
\newcommand{\bK}{{\bf K}}
 \newcommand{\sket}[1]{| #1 \rangle}
 \newcommand{\sbra}[1]{\langle #1|}

\newcommand{\lam}{\Lambda}
\makeatletter
\newtheorem*{rep@theorem}{\rep@title}
\newcommand{\newreptheorem}[2]{%
\newenvironment{rep#1}[1]{%
 \def\rep@title{#2 \ref{##1}}%
 \begin{rep@theorem}}%
 {\end{rep@theorem}}}
\makeatother
\newreptheorem{theorem}{Theorem}
\newtheorem{lemma}[theorem]{Lemma}
\newreptheorem{lemma}{Lemma}

\newreptheorem{claim}{Claim}

\newreptheorem{definition}{Definition}

\newreptheorem{remark}{Remark}

\newreptheorem{corollary}{Corollary}

\newreptheorem{observation}{Observation}
\newcommand{\nn}{\nonumber \\}

\setcounter{MaxMatrixCols}{20}

\begin{document}
\title{Exponential quantum speedup in simulating coupled classical oscillators}

\author{
Ryan Babbush}
\affiliation{Google Quantum AI, Venice, CA, United States}
\author{
Dominic W.~Berry}
\affiliation{School of Mathematical and Physical Sciences,
Macquarie University, Sydney, NSW, Australia}
\author{
Robin Kothari}
\affiliation{Google Quantum AI, Venice, CA, United States}
\author{
Rolando D.~Somma}
\affiliation{Google Quantum AI, Venice, CA, United States}
\author{
Nathan Wiebe}
\affiliation{Department of Computer Science, University of Toronto, Toronto, ON, Canada}
\affiliation{Pacific Northwest National Laboratory, Richland, WA, United States}
\affiliation{Canadian Institute for Advanced Research, Toronto, ON, Canada}
\date{\today}

\begin{abstract}
We present a quantum algorithm for simulating the classical dynamics of $2^n$ coupled oscillators (e.g., $2^n$ masses coupled by springs). Our approach leverages a mapping between the Schr\"odinger equation and Newton's equation for harmonic potentials such that the amplitudes of the evolved quantum state encode the momenta and displacements of the classical oscillators. When individual masses and spring constants can be efficiently queried, and when the initial state can be efficiently prepared, the complexity of our quantum algorithm is polynomial in $n$, almost linear in the evolution time, and sublinear in the sparsity. As an example application, we apply our quantum algorithm to efficiently estimate the kinetic energy of an oscillator at any time. We show that any classical algorithm solving this same problem is inefficient and must make $2^{\Omega(n)}$ queries to the oracle and, when the oracles are instantiated by efficient quantum circuits, the problem is \BQP-complete.
Thus, our approach solves a potentially practical application with an exponential speedup over classical computers. Finally, we show that under similar conditions our approach can efficiently simulate more general classical harmonic systems with $2^n$ modes.
\end{abstract}

\maketitle

\section{Introduction}
\label{sec:introduction}

The efficient simulation of quantum dynamics is among the most promising areas of quantum computing~\cite{Fey82}. This is due to having practical applications as well as universality providing strong complexity theoretic evidence for exponential quantum advantage \cite{Lloyd1996,Adleman1997}. Several other applications with similarly favorable qualities are known (e.g., factoring \cite{Shor1994}) but the discovery of more compelling use cases remains critical for understanding and motivating the value proposition of quantum computers.

One approach for leveraging the power of Hamiltonian simulation would be to consider the space of problems that reduces to it. Hamiltonian dynamics is an example of a homogeneous, first order partial differential equation that gives rise to unitary evolutions. Thus, it is natural to explore whether other differential equations can be mapped to Hamiltonian simulation. The goal would be a more natural (albeit perhaps more narrow) alternative to solving differential equations compared with methods \cite{Ber14,Clader2013a,Clader2013b} leveraging quantum linear systems algorithms \cite{HHL,Childs2017,Somma2019}. Past work has sought to develop Hamiltonian simulation approaches for a limited set of other differential equations but thus far, has only succeeded in obtaining polynomial speedups \cite{Costa19,Jin2022b,An2022b}.

In this paper, we discuss a rich set of problems in \emph{classical} dynamics that can be mapped to Hamiltonian simulation and solved with exponential quantum advantage. As a prominent example, our approach can simulate the dynamics of exponentially many coupled classical  oscillators in polynomial time. Such systems describe a variety of physical phenomena including: networks of masses and springs \cite{LandauLifshitzb}, circuits with capacitors and inductors \cite{JDJacksonb}, models of neuron activity \cite{Neven1992}, and vibrations in molecules \cite{WilsonVibrationsb}, materials, and mechanical structures. More generally, the harmonic approximation (defined by a quadratic potential) arises as the first order correction to equilibrium in bound systems.  

That the dynamics of coupled classical oscillators can be studied via Hamiltonian simulation is perhaps not too surprising. Solutions to the Schr\"odinger equation contain oscillatory terms and interference, and we are effectively using these properties to simulate the same phenomena in the classical system. This correspondence was also presaged by the finding that Grover search can be implemented (in the absence of errors) using mechanical wave interference~\cite{grover2002classical}. Similarly,
the problem of simulating the (discrete) classical wave equation is a special case of the oscillator dynamics problem considered here,
and a quantum algorithm
to solve the classical wave equation via Hamiltonian simulation has been studied in Ref.~\cite{Costa19}.
However, despite the seemingly natural connection, our mapping involves a number of subtle technical features and modern quantum algorithms are required to efficiently simulate the resultant Hamiltonians.

Besides providing a classical-to-quantum reduction, another key to obtain exponential quantum advantage for this problem is the way we encode physical quantities in quantum states. In our case, this encoding is motivated by energy conservation (being different to the one in Ref.~\cite{Costa19}). Specifically,
we encode quantities related to the displacements and momenta of the classical system in the amplitudes of a quantum state. Thus, extracting the full configuration of the classical system would scale polynomially in the Hilbert space dimension, precluding a large quantum speedup. Nevertheless, interesting global properties, like estimating the kinetic or potential energies at any time, can still be computed efficiently. Likewise, our methods only provide an exponential speedup for the dynamics of sparsely coupled oscillator networks when masses, spring constants, and initial states can be efficiently computed. Fortunately, many interesting systems meet those conditions.

 Finally, we provide strong evidence that such simulations cannot also be performed efficiently on a classical computer. In particular, we are able to show that estimating the kinetic energy of
 simple instances of these systems at any time that is $\poly(n)$
 requires $2^{\Omega(n)}$ queries to the oracles in the worst case and, when the oracles are instantiated by quantum circuits of $\poly(n)$ size,
 the problem is \BQP-complete.
Thus, all problems efficiently solved by a quantum computer can be reduced to an instance of simulating exponentially many coupled classical oscillators, which can also be solved efficiently by our approach.

The rest of the paper is organized as follows. In Sec.~\ref{sec:harmonicapprox} we describe classical systems of oscillators, define the associated simulation problems of interest, and state our main results. In Sec.~\ref{sec:quantumevolution} we show how these problems reduce to instances of Hamiltonian simulation, giving rise to an efficient quantum algorithm. We 
 discuss some applications of this algorithm in Sec.~\ref{sec:applications}, making emphasis on the problem of estimating the time-dependent kinetic (or potential) energies of the oscillators. In Sec.~\ref{sec:gluedtrees} we show the exponential lower bound for classical algorithms for this problem in the oracle setting, and in Sec.~\ref{sec:BQP} we show that
this problem is \BQP-complete when the oracles can be accessed via efficient quantum circuits. Finally, in Sec.~\ref{sec:generalization} we generalize our approach to efficiently simulating classical systems under the harmonic approximation. Detailed proofs of our claims are provided in the Appendix. We also provide a comparison of
our results with those of related work on quantum algorithms for differential equations in Appendix~\ref{app:relatedwork}.

\section{Simulating coupled oscillators: Main results} 
\label{sec:harmonicapprox}

We consider a classical system of coupled harmonic oscillators, i.e., $N=2^n$ point (positive) masses $m_1,\ldots,m_N$ that are coupled with each other by springs. At any time $t\ge 0$, the displacements (with respect to their rest position) and velocities of the masses are given by
$\vec x(t) =(x_1(t),\ldots,x_{N}(t))^T \in \mathbb R^{N}$
and $\dot{\vec x} (t)=( \dot x_1(t), \ldots,  \dot x_{N}(t))^T \in \mathbb R^N$, respectively, where $\dot a(t)=\frac {d}{dt}a(t)$ and $\ddot a(t)=\frac {d^2}{dt^2}a(t)$.
We let $\kappa_{jk}=\kappa_{kj}\ge 0$ be the spring constants that couple the $j^{\rm th}$ and $k^{\rm th}$ oscillator, and $\kappa_{jj}\ge 0$ is the spring constant that connects the $j^{\rm th}$ oscillator to a ``wall''.
We have described it for dimension $D=1$ for simplicity (i.e., we assigned one spatial coordinate to each oscillator), but the same formulation holds for coupled harmonic oscillators in $D$ spatial dimensions, for arbitrary $D$, by using $D$ coordinates to represent the position of a single oscillator (see Fig.~\ref{fig:2Dsprings} for an example).

\begin{figure}\centering
\includegraphics[scale=.4]{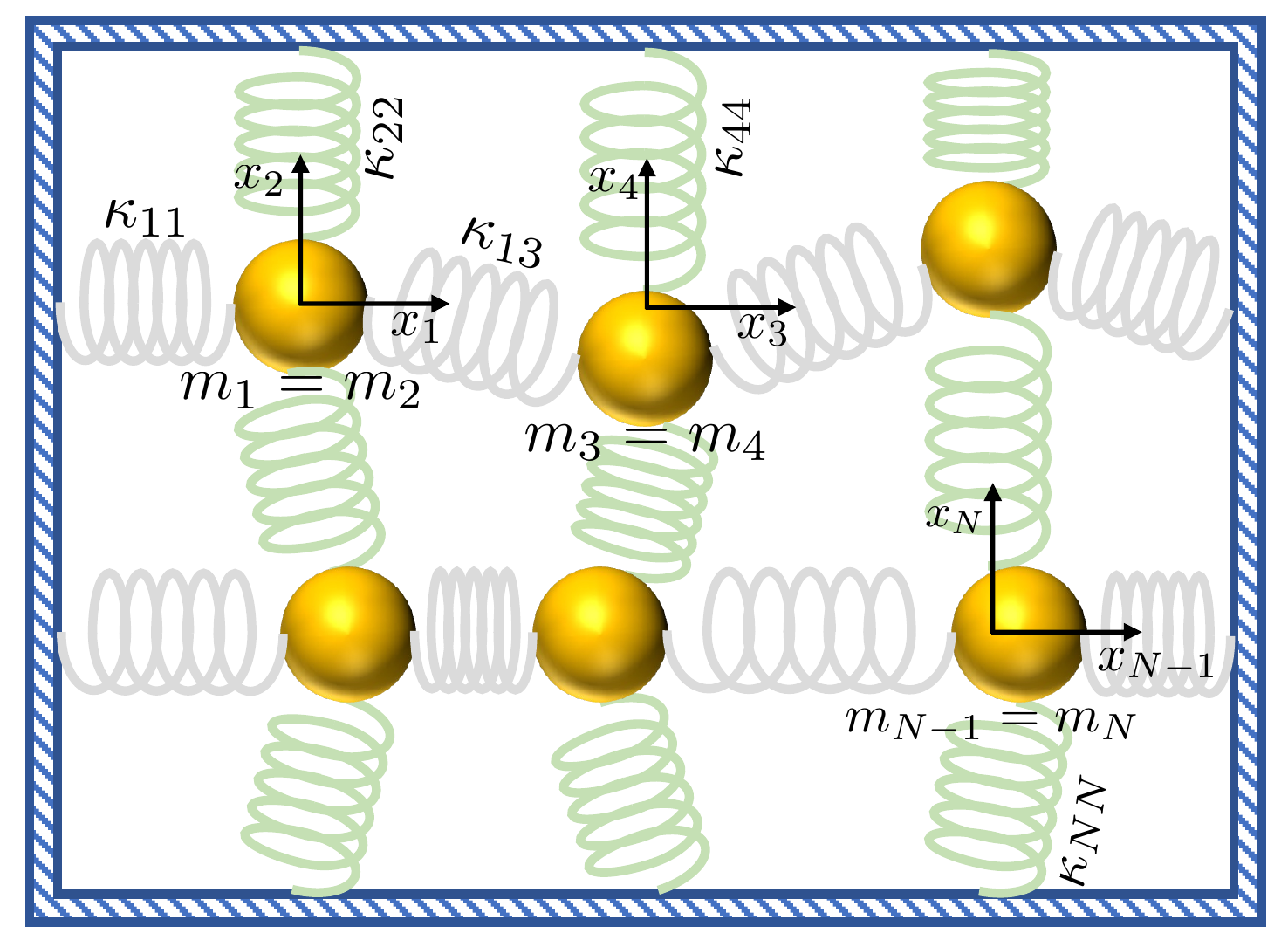}
\caption{
An example system of $N/2$ oscillators in $D=2$
 dimensions. We use $x_1(t)$ for the first coordinate of the first mass and $x_2(t)$ for the second coordinate of the first mass, and so on. Since the first and second mass now correspond to the same original mass, $m_1=m_2$.
\label{fig:2Dsprings}}
\end{figure}

In the harmonic approximation,
the dynamics of the oscillators
can be determined from
the initial values $\vec x(0)$ and $\dot{\vec x}(0)$ and
Newton's equation (for all $j \in [N]:=\{1,\ldots,N\}$): 
\begin{align}
m_j \ddot x_j(t) = \sum_{k\neq j}\kappa_{jk}\bigl(x_k(t)-x_j(t)\bigr) - \kappa_{jj}x_j(t)\;. 
\label{eq:newton}
\end{align}
In matrix form, this is ${\bf M}\ddot{\vec x}(t)=-{\bf F} \vec x(t)$, where ${\bf M}$ is a $N\times N$ diagonal matrix with entries $m_j>0$ and ${\bf F}$
is the $N\times N$
matrix whose diagonal and off-diagonal entries are $f_{jj}=\sum_{k}\kappa_{jk}$ and $f_{jk}=-\kappa_{jk}$, respectively.
(Observe that ${\bf F}$ is the discrete Laplacian of a weighted graph.)
Solutions to Eq.~\eqref{eq:newton} are well understood and can be expressed in terms of normal modes~\cite{LandauLifshitzb}, which is essentially a way of describing the system as $N$ uncoupled harmonic oscillators in a different basis. Classical algorithms that compute $\vec x(t)$ and $\dot{\vec x}(t)$
have complexity $\poly(N)$ or, equivalently, $\exp(n)$. (In this paper we use $\exp(m)$, $\poly(m)$ and $\polylog(m)$ to mean $\cO(k^m)$, $\cO(m^k)$, and $\cO(\log^k m)$ respectively, for some constant $k$. (We use ``log'' to mean logarithm to base 2.)

Our goal is to provide a quantum algorithm for simulating the dynamics of the classical oscillators
efficiently, in time $\poly(n)$. 
Doing so requires a particular notion 
of ``simulation'', since any algorithm that outputs the full vectors $\vec x(t)$ or $\dot{\vec x}(t)$ would necessarily have complexity at least linear in $N$.
Specifically, we consider a problem formulation where the output is a quantum state with some amplitudes proportional to $ \sqrt{m_j} \dot x_j(t)$ and others to $\sqrt{\kappa_{jk}}(x_j(t)-x_k(t))$ or $\sqrt{\kappa_{jj}}x_j(t)$, and the input masses and spring constants are provided through oracles in the usual way; see App.~\ref{app:accessmodel}. (Here and throughout this paper, $\sqrt{x}$ is the principal square root of $x\geq 0$ and $\sqrt{\bf X}$ is the principal square root of a positive semidefinite matrix ${\bf X} \succcurlyeq 0$.)
The formal problem is as follows:
\begin{problem}
\label{problem:oscillators}
Let ${\bf K}$ be the $N \times N$ symmetric matrix of spring constants $\kappa_{jk}\ge 0$ and assume it is $d$-sparse (i.e., there are at most $d$ non-zero entries in each row). Let ${\bf M}$ be the $N \times N$ diagonal matrix of masses $m_j>0$ and define the normalized state
\begin{align}
\label{eq:encodedstate}
\ket{\psi(t)} := \frac 1 {\sqrt {2E}} \begin{pmatrix} \sqrt{\bM} \dot{\vec x}(t) \cr  i\vec \mu(t) \end{pmatrix} \;,
\end{align}
where $E>0$ is a constant, and $\vec \mu(t) \in \mathbb R^{M}$ ($M:=N(N+1)/2$) is a vector
with $N$ entries $\sqrt{\kappa_{jj}}x_j(t)$ and $N(N-1)/2$ entries $\sqrt{\kappa_{jk}}(x_j(t)-x_k(t))$, with $k>j$.
Assume we are given oracle access to ${\bf K}$ and ${\bf M}$, and 
oracle access to a unitary $\cW$ that prepares 
the initial state, i.e., $\cW \ket 0 \mapsto \sket{\psi(0)}$.
Given $t \ge 0$ and $\epsilon > 0$, the goal is to output a state that is $\epsilon$-close to $\ket{\psi(t)}$ in Euclidean norm.
\end{problem}

Our main result is  a quantum algorithm that prepares $\sket{\psi(t)}$ efficiently: 
\begin{theorem}
\label{thm:main}
    Problem~\ref{problem:oscillators} can be solved with a quantum algorithm that makes $Q=\cO(\tau + \log(1/\epsilon))$ queries to the oracles for ${\bf K}$ and ${\bf M}$, uses 
    \begin{equation}
        G= \cO\left(Q \times \log^2\left(\frac{N \tau}{\epsilon} \frac{m_{\max}}{m_{\min}}\right)\right)
    \end{equation}
    two-qubit gates, and uses $\cW$ once, where 
    $\tau := t \sqrt{\aleph d} \geq 1$,
    $\aleph:=\kappa_{\max}/m_{\min}$, 
    $m_{\max}\ge m_j \ge m_{\min}>0$ and $\kappa_{\max} \ge \kappa_{jk}$ for all $j,k \in [N]$ are known quantities. 
\end{theorem}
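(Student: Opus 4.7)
The plan is to recast classical Newton dynamics as unitary Schr\"odinger evolution of the encoded state $\sket{\psi(t)}$ under an explicit Hermitian Hamiltonian $H$, and then apply an optimal sparse Hamiltonian simulation routine. The key structural ingredient is a weighted signed ``incidence'' matrix ${\bf B}$ of size $N\times M$: the column indexed by $(j,k)$ with $j<k$ has $+\sqrt{\kappa_{jk}}$ in row $j$ and $-\sqrt{\kappa_{jk}}$ in row $k$, and the ``wall'' column $(j,j)$ has the single entry $\sqrt{\kappa_{jj}}$ in row $j$. By inspection, ${\bf B}{\bf B}^T={\bf F}$ and $\vec\mu(t)={\bf B}^T\vec x(t)$, so the components of $\vec\mu$ in Eq.~\eqref{eq:encodedstate} are precisely ${\bf B}^T\vec x(t)$. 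Together with conservation of energy $E=\tfrac12\dot{\vec x}^T{\bf M}\dot{\vec x}+\tfrac12|\vec\mu|^2$ (derived directly from Eq.~\eqref{eq:newton}), this already gives $\|\sket{\psi(t)}\|=1$ for all $t$.

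Differentiating Eq.~\eqref{eq:encodedstate} and substituting ${\bf M}\ddot{\vec x}=-{\bf F}\vec x=-{\bf B}\vec\mu$ yields $i\,d_t\sket{\psi(t)}=H\sket{\psi(t)}$ with the block-off-diagonal Hermitian operator
\begin{equation}
H=-\begin{pmatrix} 0 & {\bf M}^{-1/2}{\bf B} \\ {\bf B}^T{\bf M}^{-1/2} & 0 \end{pmatrix}.
\end{equation}
Setting $A:={\bf M}^{-1/2}{\bf B}$, one has $\|H\|^2=\|A A^T\|=\|{\bf M}^{-1/2}{\bf F}{\bf M}^{-1/2}\|\le\|{\bf F}\|/m_{\min}\le d\,\kappa_{\max}/m_{\min}=\aleph d$ via a Gershgorin estimate on the $d$-sparse Laplacian ${\bf F}$. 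Crucially, $A$ has at most $d$ nonzeros per row (inherited from ${\bf K}$) but at most $2$ nonzeros per column (inherited from ${\bf B}$), with entries bounded in magnitude by $\sqrt{\aleph}$.

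The remaining step is to obtain a block encoding of $H$ with subnormalization $\cO(\sqrt{\aleph d})$. Using oracle access to ${\bf K}$ and ${\bf M}$, one constructs a sparse-access oracle for $A$: given a row index $j$, one queries the $\ell$-th nonzero of row $j$ of ${\bf K}$, divides by $\sqrt{m_j}$, and emits the column label encoding the edge $(j,k)$; given an edge label $(j,k)$, one returns the (at most two) row indices. With a $\polylog(N)$ bijection between edge labels and a power-of-two register, the standard \emph{asymmetric}-sparsity block-encoding construction produces a $(\sqrt{2\aleph d},\cdot,0)$-block encoding of $A$ using $\cO(1)$ oracle calls and $\cO(\polylog(Nm_{\max}/(m_{\min}\epsilon)))$ two-qubit gates for arithmetic on $\sqrt{\kappa_{jk}/m_j}$. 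A routine reflection/permutation lift turns this into a block encoding of $H$. Feeding the result into a qubitization-based simulator outputs $e^{-iHt}$ to Euclidean error $\epsilon$ using $Q=\cO(\tau+\log(1/\epsilon))$ queries and $G=\cO\bigl(Q\log^2(N\tau m_{\max}/(m_{\min}\epsilon))\bigr)$ two-qubit gates; applying it to the state produced by a single call to $\cW$ gives the required $\epsilon$-approximation to $\sket{\psi(t)}$.

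The principal technical obstacle is obtaining the subnormalization $\sqrt{\aleph d}$ rather than the naive $d\sqrt{\aleph}$. A symmetric Gershgorin-style bound on $H$ treats both axes of $A$ as $d$-sparse and loses a factor of $\sqrt{d}$, which would propagate to the wrong query scaling. Exploiting the fact that column-sparsity of ${\bf B}$ is $\le 2$ while row-sparsity is $\le d$ is what recovers the optimal $\tau=t\sqrt{\aleph d}$ dependence. Beyond this, the bookkeeping for signs and for the bijection between the edge register $(j,k)$ and a padded power-of-two index, together with verifying that ${\bf F}$ may have a nontrivial kernel without spoiling the encoding, are routine but necessary.
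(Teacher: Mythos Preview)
Your proposal is correct and follows essentially the same route as the paper: the weighted incidence matrix ${\bf B}$ (equivalently your $A={\bf M}^{-1/2}{\bf B}$) coincides with the paper's factor ${\bf B}$ satisfying ${\bf B}{\bf B}^\dagger={\bf A}$, the block-off-diagonal Hamiltonian is identical, and the simulation is carried out by qubitization on a block encoding with subnormalization $\sqrt{2\aleph d}$. The one cosmetic difference is how the crucial $\sqrt{d}$ (rather than $d$) subnormalization is justified: you frame it as an asymmetric-sparsity block encoding (row sparsity $d$, column sparsity $2$, max entry $\sqrt{\aleph}$), whereas the paper builds the block encoding of ${\bf B}^\dagger$ explicitly and attributes the single $1/\sqrt d$ to needing only one sparse state preparation rather than a matched prepare/unprepare pair; these are two ways of saying the same thing.
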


 In the $\cO$ notation above, the asymptotically large parameters are $N$, $\tau$, $1/\epsilon$, and $m_{\max}/m_{\min}$. This complexity has explicit dependence on $n=\log (N)$ that is polynomial. If $\tau$, $\log(1/\epsilon)$, and $\log(m_{\max}/m_{\min})$ are $\poly(n)$, and all oracles (including $\cW$) can be performed with cost $\poly(n)$, then the entire algorithm has complexity $\poly(n)$.
An interesting feature of this complexity is that it scales as the square root of the sparsity $d$, whereas Hamiltonian simulation complexities are usually linear in $d$. The complexity is also linear in $\tau$, which  bounds the maximum number of oscillations of the normal modes after time $t$.

Our algorithm can be used to determine properties of the system at time $t$. For example, the state $\sket{\psi(t)}$ encodes the velocities and displacements of the oscillators in a way that makes it easy for the estimation of the kinetic or potential energies, as we discuss below.  We discuss in App.~\ref{app:otherencodings} that other encodings are also possible, and the choice of encoding may determine which initial states and properties (observables) are efficient to prepare and measure. Nevertheless, our main goal is to establish results for this particular encoding. 

The constant in Eq.~\eqref{eq:encodedstate}
is $E=K(t)+ U(t)$, where $K(t) = \frac 1 2 \dot{\vec x}(t)^T {\bf M} \dot{\vec x}(t)$ is the kinetic energy, i.e.,
\begin{align}
K(t)
& = \frac 1 2  \sum_{j} m_j \dot x_j(t)^2 \;,
\end{align}
and $U (t) = \frac 1 2 {\vec \mu}(t)^T {\vec \mu}(t)$ is the potential energy, i.e.,
\begin{align}
\label{eq:potential}
U (t) =\frac 1 2 \biggl(\sum_{j} \kappa_{jj}x_j(t)^2 +   \sum_{k>j} \kappa_{jk}\bigl(x_j(t)-x_k(t)\bigr)^2 \biggr).
\end{align}
Hence, $E$ is the total energy of the system, which is time-independent as the system is closed. The support of $\sket{\psi(t)}$
on the subspace spanned by the first $N$ basis vectors
is $K(t)/E$, which can be estimated via a simple measurement on $\ket{\psi(t)}$; see Sec.~\ref{sec:applications}. The support on the other subspace is $U(t)/E$. As an example of the type of problem that can be solved by preparing $\sket{\psi(t)}$, consider the following:
\begin{problem}
\label{problem:kineticestimation}
    Given the same inputs as Problem \ref{problem:oscillators} and an oracle for $V \subseteq [N]$, which is a subset of the oscillators,
    output an estimate $\hat k_V(t) \in \mathbb R$ such that 
    \begin{align}
    \label{eq:kineticestimator}
    |\hat k_V(t)-K_V(t)/E| \le \epsilon \;,
    \end{align}
     where
    $K_V(t):=\frac 1 2 \sum_{j \in V} m_j \dot x_j(t)^2$ is the kinetic energy of $V$ at time $t$.
\end{problem}
\noindent In Sec.~\ref{sec:applications}, we prove that our quantum algorithm  solves this problem with high probability with $\cO(1/\epsilon)$ uses of the quantum algorithm from Thm.~\ref{thm:main}. We also show a related result for the estimation of the potential energy stored on a subset $V \subseteq[N] \times [N]$ of springs at time $t$.

Problems~\ref{problem:oscillators} and~\ref{problem:kineticestimation} are formulated using a specific input and output format, and one might wonder if a classical algorithm could also solve these problems efficiently. The answer is no: we show that any classical algorithm solving Problem~\ref{problem:kineticestimation} must make $\poly(N)$ or $\exp(n)$ queries to the oracles in general.
\begin{theorem}\label{thm:oracle}
    Any classical algorithm that solves Problem~\ref{problem:kineticestimation} with high probability must make $2^{\Omega(n)}$ queries to the oracle for $\bK$. This lower bound holds even if we further require that all $m_j=1$, all $\kappa_{jk}\in \{0,1 \}$, $d\leq 3$, and the initial state is $\ket{\psi(0)}=\ket{0}^{\otimes q}$ for some $q=\poly(n)$, which corresponds to $\vec x(0)=(0,0,\ldots,0)^T$ and $\dot{\vec x}(0)=(1,0,\ldots ,0)^T$.
\end{theorem}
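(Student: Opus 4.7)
The strategy is a reduction from the welded trees problem of Childs--Cleve--Deotto--Farhi--Gutmann--Spielman, for which a $2^{\Omega(n)}$ classical query lower bound is known. I would embed the welded trees graph---two balanced binary trees of depth $n$ connected through a random cycle matching their leaves---as the coupling graph of a system of $N=\poly(n)$ oscillators: set all masses to $1$, all edge spring constants $\kappa_{jk}$ to $1$, and all wall constants $\kappa_{jj}$ to $0$. The maximum graph degree is $3$, so $d\le 3$ is met; the specified initial condition $\dot{\vec x}(0)=e_{\text{entrance}}$ deposits a unit kinetic impulse at the root of the left tree, matching the hypothesis of the theorem. Under these conditions $\bF$ is the ordinary graph Laplacian, $\vec x(0)=0$, so $\dot{\vec x}(t)=\cos(\sqrt{\bF}\,t)\,e_{\text{entrance}}$, and $E=1/2$.

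The next step is to show that at some $t^{*}=\poly(n)$, one has $\dot x_{\text{exit}}(t^{*})^{2}\ge 1/\poly(n)$. I would use the standard column-subspace decomposition of the welded trees: all vertices in a given column share the same degree and only couple to adjacent columns, so the $(2n+2)$-dimensional subspace spanned by uniform superpositions over columns is invariant under both the adjacency matrix $\bA$ and the degree matrix $\bD$, hence under $\bF=\bD-\bA$ and under $\cos(\sqrt{\bF}\,t)$. The restriction is a tridiagonal matrix on a line, and a direct spectral analysis paralleling that of Childs et al.\ for $e^{-i\bA t}$ yields a constant-magnitude matrix element between entrance and exit column states at some polynomial time.

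For the hardness reduction, fix a canonical label $v^{*}$ for the exit and consider two oracle distributions: $D_{0}$ is a welded trees instance with random intermediate vertex labels and the exit labeled $v^{*}$; $D_{1}$ is the same graph with the exit labeled by a different random string while $v^{*}$ becomes an isolated vertex (zero row in $\bK$). Under $D_{0}$, $K_{\{v^{*}\}}(t^{*})/E=\Omega(1/\poly(n))$; under $D_{1}$ it is identically zero. An estimator achieving additive accuracy $\epsilon=\Theta(1/\poly(n))$ with success probability $\ge 2/3$ must therefore distinguish $D_{0}$ from $D_{1}$. However, because the intermediate labels live in an exponentially large label space, the classical welded-trees analysis shows that the transcripts of any classical algorithm making $2^{o(n)}$ queries to $\bK$ are statistically close under $D_{0}$ and $D_{1}$: with overwhelming probability the algorithm never reaches the region near $v^{*}$ through the glued cycle, and a never-explored $v^{*}$ looks identical in both worlds. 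This yields the claimed $2^{\Omega(n)}$ lower bound, and since only spring constants in $\{0,1\}$, $d\le 3$, unit masses, and the given product-state initial condition are used, the restricted hypotheses of the theorem are satisfied.

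The main obstacle is the spectral step: the original welded-trees argument is framed for unitary evolution by $\bA$, whereas here the propagator is $\cos(\sqrt{\bF}\,t)$ with $\bF=\bD-\bA$. I expect this to be handled by observing that $\bD$ is a scalar on each column (equal to $3$ in the bulk, and $2$ on the two boundary columns containing the roots), so $\sqrt{\bF}$ has an almost-shifted spectrum relative to $-\bA$ with only a two-column boundary perturbation. Propagating this through the Childs et al.\ wave-packet argument to extract an $\Omega(1/\poly(n))$ lower bound on the exit amplitude is conceptually straightforward but requires careful bookkeeping at the boundary columns where the degree drops; this is the step I would expect to consume the bulk of the technical work.
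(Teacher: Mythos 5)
Your high-level strategy matches the paper's: reduce from the welded-trees problem of Childs et al.\ by placing a unit mass at each vertex and a unit spring at each edge, and showing that the impulse deposited at the ENTRANCE reaches the EXIT with inverse-polynomial kinetic energy at a $\poly(n)$ time. However, there are two substantive differences, one of which is a genuine gap.

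\textbf{The gap: your hardness reduction does not work as stated.} You propose to distinguish $D_0$ (welded trees with the exit labeled by a fixed, canonical $v^*$) from $D_1$ (the same graph with a re-randomized exit label and $v^*$ an isolated vertex). But $v^*$ is a known label. A classical algorithm can make its \emph{first} query to ${\bf K}$ at $v^*$: under $D_0$ it sees the EXIT's two (or three, with walls) nonzero entries, under $D_1$ it sees an all-zero row. This distinguishes the two worlds with a single ${\bf K}$ query, and your statistical-closeness claim fails at step one; the assertion that ``a never-explored $v^*$ looks identical in both worlds'' only helps if the algorithm cannot locate $v^*$, which it trivially can here. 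For the argument to go through, the EXIT label must be random and hidden, and the lower bound must be inherited from the welded-trees search lower bound. The paper's route is exactly this: Problem~\ref{problem:gluedtrees} (find the EXIT given the ENTRANCE and oracle access to ${\bf K}$) is a minor variant of the welded-trees problem, so the Childs et al.\ $2^{\Omega(n)}$ classical lower bound applies directly, and Lemma~\ref{lem:kineticgluedtrees} then shows an accurate kinetic-energy estimate at a $\poly(n)$ time would let one locate the EXIT, contradicting that bound.

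\textbf{A technical complication you did not resolve (but the paper sidesteps).} You set all wall constants $\kappa_{jj}=0$. The paper instead attaches a wall spring of constant $1$ to the ENTRANCE and EXIT, i.e.\ $\kappa_{11}=\kappa_{NN}=1$. This is not cosmetic: it makes every row of ${\bf F}$ sum to exactly $3$, so ${\bf A}={\bf F}=3I_N - A$ is a \emph{scalar} shift of the adjacency matrix $A$. The eigenvectors are those of $A$, eigenvalues shift by $3$, and all of the spectral structure of Childs et al.\ (column-subspace reduction, $\Omega(1/n^3)$ gaps) transfers for free; moreover ${\bf F}\succ 0$, so there is no translational zero mode. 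In your version ${\bf F}={\bf D}-A$ with degree $2$ at the two roots, so the shift is nonuniform, the zero mode is present, and the boundary columns couple differently to the degree matrix. You explicitly flag this as ``the main obstacle'' and ``the step I would expect to consume the bulk of the technical work,'' but do not carry it out. Adding the walls costs nothing against the theorem's constraints: $\kappa_{jk}\in\{0,1\}$ still holds, $d\leq 3$ still holds (two tree edges plus one diagonal wall entry per root), and the walls additionally serve as the verification mechanism for the roots.

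A minor slip: you write $N=\poly(n)$ for the system size, but the welded trees of depth $n$ have $N=2^{n+1}-2=\exp(n)$ vertices; $n=\log N$ in the theorem. And the column subspace has dimension $2n$, not $2n+2$.
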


To provide more evidence of the impossibility of efficient classical simulations and prove a stronger hardness result,
we define a non-oracular version of Problem~\ref{problem:kineticestimation} and show that it is \BQP-complete. Thus, no classical algorithm can solve this non-oracular problem efficiently (in time $\poly(n))$ unless it can also solve every problem solved by quantum computers in polynomial time. The same observation essentially applies to Problem~\ref{problem:kineticestimation}
as a result. 
Formally, to be \BQP-complete a problem must be a decision problem (i.e., a problem where there are only two possible answers), so we define a decision version of Problem~\ref{problem:kineticestimation} with $|V|=1$ and a sparse initial state: 

\begin{problem}
    \label{prob:nonoracular}
    The setup is as in Problem \ref{problem:kineticestimation}, but we are given efficient quantum circuits (i.e., circuits with $\poly(n)$ gates) to implement the oracles for $\bf K$ and $\bf M$, an explicit description of $\ket{\psi(0)}$, which is required to have a constant number of nonzero entries, and a label $v \in [N]$ of a single oscillator.
    We additionally require that $\tau$, $1/\epsilon$, and $m_{\max}/m_{\min}$ are bounded by $\poly(n)$. 
    The problem is to decide if $K_v(t)/E$, the kinetic energy of oscillator $v$ as a fraction of total energy, is at least $1/\poly(n)$ or at most $1/\exp(\sqrt n)$, promised that one of these holds.
\end{problem}
Note that this problem has an input of size $\poly(n)$ and our algorithm for Problem~\ref{problem:oscillators} will solve this on a quantum computer in time $\poly(n)$. 
We then show this problem is \BQP-complete in Sec.~\ref{sec:BQP}:

\begin{theorem}
\label{thm:bqpcomplete}
Problem~\ref{prob:nonoracular} is \BQP-complete. The problem remains \BQP-complete even if we further require that all $\kappa_{jk}\leq 4$, $m_j=1$, $d \leq 4$, and the initial state is $\ket{\psi(0)}=\ket{0}^{\otimes q} \otimes \ket -$ for some $q=\poly(n)$, which corresponds to 
$\vec x(0)=(0,0,\ldots,0)^T$ and $\dot{\vec x}(0)=(+1,-1,0,\ldots ,0)^T$.
\end{theorem}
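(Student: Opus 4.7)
The plan is to prove both containment in \BQP\ and \BQP-hardness.

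For containment, Theorem~\ref{thm:main} gives a $\poly(n)$-time quantum algorithm to prepare a state within $\epsilon' = 1/\poly(n)$ of $\ket{\psi(t)}$, because Problem~\ref{prob:nonoracular} forces $\tau$, $\log(1/\epsilon)$, and $\log(m_{\max}/m_{\min})$ to be $\poly(n)$ and provides efficient quantum circuits for all oracles. The encoding in Eq.~\eqref{eq:encodedstate} gives $K_v(t)/E = |\langle v_{\mathrm{vel}}|\psi(t)\rangle|^2$, where $\ket{v_{\mathrm{vel}}}$ is the velocity-block basis state for oscillator $v$. A $\poly(n)$-shot computational-basis measurement (or amplitude estimation) distinguishes $K_v(t)/E \ge 1/\poly(n)$ from $K_v(t)/E \le 1/\exp(\sqrt n)$ with high probability, placing the problem in \BQP.

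For \BQP-hardness, I would reduce from an arbitrary \BQP\ promise problem. Given a circuit $C$ over the universal real gate set $\{H,\mathrm{Toffoli}\}$ on $w=\poly(n)$ qubits with $T=\poly(n)$ gates (amplified so $p_{\mathrm{accept}}\ge 2/3$ for yes-inputs and $\le 1/\exp(\sqrt n)$ for no-inputs), I construct an oscillator network whose Dirac-like Hamiltonian ${\bf H}$ acts as a continuous-time quantum-walk Hamiltonian of a Feynman--Kitaev-style propagation graph for $C$. Oscillators are labeled by (clock, work) pairs $(t,\psi)$, and for each nonzero matrix element $U_t(\phi,\psi)$ I add one spring connecting oscillators $(t-1,\psi)$ and $(t,\phi)$ with constant $\kappa = |U_t(\phi,\psi)|^2 \in \{1/2, 1\}$; the sign of $U_t(\phi,\psi)$ is absorbed by choosing which endpoint carries the $+\sqrt{\kappa}$ weight in ${\bf B}$ (with ${\bf B}^T$ on the other block of ${\bf H}$). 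Because Hadamard has row- and column-sparsity $2$ and Toffoli has both equal to $1$, each oscillator is incident to at most $2+2=4$ springs, giving $d\le 4$ with $\kappa\le 4$ and $m_j=1$; the oracles for ${\bf K}$, ${\bf M}$ are trivial to implement from the gate list of $C$.

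The restrictive initial state $\ket{\psi(0)} = \ket{0}^{\otimes q}\otimes\ket{-}$, corresponding to $\dot{\vec x}(0) = (+1,-1,0,\ldots,0)^T$ and $\vec x(0) = \vec 0$, is handled by designating oscillators $1$ and $2$ as the tips of a short wavepacket-injection gadget whose antisymmetric $\ket{-}$-mode couples with $\Omega(1)$ efficiency into the clock-zero, work-input sector of the propagation graph. A standard clock-Hamiltonian analysis, in the spirit of Childs' universal quantum walks, then shows that for a suitably chosen evolution time $t = \poly(n)$, the amplitude on the oscillator labeled $(T,\mathrm{accept})$ satisfies $|\langle v_{\mathrm{vel}}|\psi(t)\rangle|^2 = \Omega(p_{\mathrm{accept}}/\poly(n))$, which lies in the two advertised promise ranges for accepting vs.\ rejecting inputs.

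The main obstacle is the rigid form of the oscillator Dirac Hamiltonian: each column of ${\bf B}$ must contain exactly one $+\sqrt\kappa$ and one $-\sqrt\kappa$ entry, $d \le 4$ is a hard cap, and the initial state's two-amplitude support is pinned at fixed basis positions. Restricting to $\{H,\mathrm{Toffoli}\}$ fits the sparsity and magnitude budgets, and the bipartite $\pm$ sign structure of ${\bf B}$ naturally accommodates real gate signs. The most delicate step is analyzing the injection gadget: showing that the $\ket{-}$-mode on oscillators $1,2$ truly seeds the clock-zero input sector with $\Omega(1)$ amplitude, and that the subsequent walk concentrates $\Omega(1/\poly(n))$ of the probability on the output oscillator at a suitable time.
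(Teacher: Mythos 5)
Your containment argument is fine (it matches the paper). The hardness reduction, however, has a fundamental gap in how it handles the negative entries of the Hadamard gate.

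You propose to absorb the sign of $U_t(\phi,\psi)$ ``by choosing which endpoint carries the $+\sqrt{\kappa}$ weight in ${\bf B}$.'' This is a gauge freedom that has no physical effect. If you flip the sign on a column of ${\bf B}$ (i.e.\ replace the column $\sqrt{\kappa}(\ket{j}-\ket{k})$ with $-\sqrt{\kappa}(\ket{j}-\ket{k})$), the contribution to ${\bf A}={\bf B}{\bf B}^\dagger$ is unchanged, and the resulting ${\bf H}$ is related to the old one by conjugation with $\begin{pmatrix}\one & {\bf 0}\\ {\bf 0} & D\end{pmatrix}$ for a diagonal $\pm1$ matrix $D$ acting only on the spring block. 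Since both the initial state ($\dot{\vec x}(0)$) and the measured quantity ($K_v(t)$) live entirely in the velocity block, the dynamics you observe is exactly the same regardless of these sign choices. More fundamentally, if ${\bf B}$ is the (mass-normalized) incidence matrix of a spring system with $\kappa_{jk}\ge 0$, then the off-diagonal entries of ${\bf F}={\bf M}^{1/2}{\bf A}{\bf M}^{1/2}$ are forced to be $-\kappa_{jk}\le 0$: a spring system cannot produce a positive off-diagonal. So your reduction silently replaces the circuit's $-1/\sqrt 2$ Hadamard entries with $+1/\sqrt 2$, and you end up simulating the wrong circuit.

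This is precisely the obstacle the paper's proof is designed around. The paper keeps the last qubit $\ket{-}$ as a \emph{persistent} resource state for the entire evolution and replaces each Hadamard by a $4\times 4$ encoded gate $H_\mathrm{enc}$ with non-negative entries that satisfies $(\one_2\otimes\bra{-})H_\mathrm{enc}(\one_2\otimes\ket{-})=H$; commutation with the resulting ${\bf A}$ keeps the ancilla in $\ket{-}$ throughout the evolution, so the encoded gate faithfully reproduces the logical $H$. Your proposal instead treats $\ket{-}$ as a transient ``wavepacket-injection gadget'' seeding the clock-zero sector, which is a misreading of its role and does not fix the sign problem. You also do not address the second obstacle the paper has to overcome: the classical dynamics is generated by $\sqrt{\bf A}$, not ${\bf A}$, so the standard Feynman--Kitaev propagation analysis does not directly apply; the paper handles this with a dedicated spectral analysis of ${\bf X}'$ showing $|\alpha_{L+1}(t)|^2 = \Omega(1/L)$ at some $t = \widetilde{\cO}(L^2)$. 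Finally, your spring constants $\kappa\in\{1/2,1\}$ omit the wall springs $\kappa_{jj}$ needed to make every diagonal entry of ${\bf A}$ equal to $4$ (which is where the stated bound $\kappa_{jk}\le 4$ comes from) and to guarantee ${\bf A}\succcurlyeq 0$.
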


Finally, we show that our results can be extended to address more general classical systems that arise naturally under the harmonic approximation~\cite{LandauLifshitzb}. In these cases, for example,
the matrix ${\bf M}$ resulting from Eq.~\ref{eq:newton} is not necessarily diagonal and the off-diagonal entries of ${\bf F}$ are not necessarily non-positive.
A simple change of variables takes  Eq.~\ref{eq:newton}
to the standard form $\ddot {\vec y}(t)=-{\bf A} \vec y(t)$, where $\vec y(t)=\sqrt{\bM}\vec q(t)$,
and $\vec q(t) \in \mathbb R^N$ are the generalized displacements.
We then consider the following generalization of Problem~\ref{problem:oscillators}:
\begin{problem}
\label{problem:harmonicapprox}
Let ${\bf A} \succcurlyeq 0$ be an $N \times N$ real-symmetric,  PSD, $d$-sparse matrix.
Define the normalized state
\begin{align}
\label{eq:encodedstate2}
\ket{\psi(t)} := \frac 1 {\sqrt {2E}} \begin{pmatrix} \dot{\vec y}(t) \cr  i \vec \mu(t) \end{pmatrix} \;,
\end{align}
where $E>0$ is a constant and $\vec \mu(t):=\sqrt{\bA} {\vec y}(t) \in \mathbb C^{N}$.
Assume we are given oracle access to ${\bf A}$ 
and oracle access to a unitary $\cW$ that prepares 
the initial state $\sket{\psi(0)}$.
Given $t$ and $\epsilon$, the goal is to output a state that is $\epsilon$-close to $\ket{\psi(t)}$ in Euclidean norm.
\end{problem}

The constant $E$ in Eq.~\eqref{eq:encodedstate2}
is also the energy of the system.
The encoding is 
different from the 
one used for Problem~\ref{problem:oscillators}; for example $\sqrt{\bf A}$ is of dimension $N \times N$ and might not be sparse even if ${\bf A}$ is.
However, the support of $\sket{\psi(t)}$ on the subspace spanned by the first $N$ basis states is still $K(t)/E$, where $K(t)$ is the kinetic energy.
We then show the following:
\begin{theorem}
\label{thm:generalized}
    Problem~\ref{problem:harmonicapprox} can be solved with a quantum algorithm that makes
    \begin{equation}
        Q=\cO \left(  \|{\mathbf{A}}\|_{\max} d\log(1/\epsilon) \min \left( \frac{t\sqrt{\|\mathbf{A}^{-1}\|}}{\epsilon} , \frac{t^2}{\epsilon^2} \right)\right)
    \end{equation}
    queries to the oracles, uses
    $\widetilde \cO(Q \times {\rm polylog}(N/\epsilon))$ two-qubit gates, and uses $\cW$ once.
\end{theorem}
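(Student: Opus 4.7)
The plan is to follow the template of Theorem~\ref{thm:main}: recast Newton's equation as a Schr\"odinger-type equation, block-encode the generator from sparse access to the coefficient matrix, and then invoke optimal block-encoded Hamiltonian simulation. A direct computation starting from Eq.~\eqref{eq:encodedstate2} and $\ddot{\vec y}=-\mathbf{A}\vec y$ shows that $\ket{\psi(t)}=e^{-iHt}\ket{\psi(0)}$ with the manifestly Hermitian generator
\begin{equation}
H \;=\; \begin{pmatrix} 0 & -\sqrt{\mathbf{A}} \\ -\sqrt{\mathbf{A}} & 0 \end{pmatrix},
\end{equation}
so the task reduces to implementing $e^{-iHt}$ given only sparse oracle access to $\mathbf{A}$ itself (not $\sqrt{\mathbf{A}}$).

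The genuine obstacle, and what separates this from Theorem~\ref{thm:main}, is that $\sqrt{\mathbf{A}}$ is generically dense even when $\mathbf{A}$ is $d$-sparse. In Problem~\ref{problem:oscillators} the analogous square root came equipped with an explicit sparse factorization from the incidence/Laplacian structure, which is no longer available here. I would proceed by (i) constructing a standard $(\alpha,\cO(\log N),0)$ block-encoding of $\mathbf{A}$ with $\alpha=\cO(d\|\mathbf{A}\|_{\max})$ from the sparse oracles; (ii) applying the quantum singular value transformation (QSVT) to this block-encoding to obtain a block-encoding of a polynomial approximation $\widetilde{\sqrt{\mathbf{A}}}$, and hence of an approximation $\widetilde H$ of $H$; and (iii) feeding the result into a qubitization-based Hamiltonian simulation for time $t$. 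Since $\|e^{-i\widetilde H t}-e^{-iHt}\|\le t\,\|\widetilde H-H\|$, it suffices to take the polynomial-approximation error to be $\delta=\cO(\epsilon/t)$. The oracle $\cW$ is invoked exactly once to prepare $\ket{\psi(0)}$, and the $\widetilde\cO(Q\cdot\polylog(N/\epsilon))$ gate count follows from standard block-encoding and QSVT overheads.

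The two terms inside the $\min(\cdot,\cdot)$ come from two qualitatively different ways to handle the square-root singularity of $\sqrt{x}$ at $x=0$, and this is the main technical obstacle. When a finite bound on $\|\mathbf{A}^{-1}\|$ is exploited, one can approximate $\sqrt{x}$ uniformly on $[1/(\alpha\|\mathbf{A}^{-1}\|),1]$ by a polynomial of degree $\cO(\sqrt{\alpha\|\mathbf{A}^{-1}\|}\log(1/\delta))$ using standard constructions; after substituting $\delta\sim\epsilon/t$ and combining with the Hamiltonian-simulation subroutine, this produces the first branch. If no such spectral bound is used, one must instead approximate $\sqrt{x}$ uniformly on all of $[0,1]$, and the non-analyticity at the endpoint forces degree $\cO(1/\delta^2)$, producing the second branch. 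Taking the better of the two strategies yields the stated $\min$. The remaining work is bookkeeping: combining the polynomial-approximation error with the Hamiltonian-simulation error into a single $\epsilon$-guarantee and collecting the $\|\mathbf{A}\|_{\max}d$ prefactor from the sparse block-encoding normalization of $\mathbf{A}$.
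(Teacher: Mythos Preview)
Your route via QSVT is genuinely different from the paper's. The paper does not block-encode $\sqrt{\mathbf A}$ at all; it runs quantum phase estimation on a walk operator built from a block-encoding of $\mathbf H^{(2)}:=-X\otimes\mathbf A$, obtains a signed estimate $x\approx(-1)^\eta\lambda_j$ to additive precision $\epsilon_{\rm PE}$ with confidence $1-\delta_{\rm PE}$, computes $\mathrm{sign}(x)\sqrt{|x|}$ by coherent arithmetic, kicks back the phase $e^{-it\,\mathrm{sign}(x)\sqrt{|x|}}$, and inverts QPE. The two branches of the $\min$ arise from the elementary bound
\[
\bigl|\mathrm{sign}(\lambda+\varepsilon)\sqrt{|\lambda+\varepsilon|}-\sqrt{\lambda}\bigr|\le\sqrt 2\,\min\!\bigl(\varepsilon/\sqrt{\lambda},\sqrt{\varepsilon}\bigr),
\]
which forces $\epsilon_{\rm PE}=\cO\!\bigl(\max(\epsilon\sqrt{\lambda_{\min}}/t,\epsilon^2/t^2)\bigr)$; substituting into the QPE cost $\cO(\|\mathbf A\|_{\max}d\log(1/\delta_{\rm PE})/\epsilon_{\rm PE})$ with $\delta_{\rm PE}=\cO(\epsilon^2)$ yields the theorem verbatim. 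The QPE route buys a clean error analysis with no polynomial-approximation theory; your QSVT route, if carried through correctly, could in fact \emph{improve} the $\epsilon$-dependence beyond what the theorem states.

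As written, however, your argument has a bookkeeping gap that prevents it from establishing even the stated bound. After your step~(ii) you hold a block-encoding of $\widetilde H$ with subnormalisation $\sqrt\alpha$ (where $\alpha=\cO(d\|\mathbf A\|_{\max})$), so step~(iii) costs $\cO(t\sqrt\alpha)$ calls to that block-encoding, each of which costs the QSVT degree $D$ in oracle calls to $\mathbf A$; the total is $\cO(t\sqrt\alpha\cdot D)$, and you never carry this multiplication through when you assert that the two degree choices ``produce'' the two branches. Separately, your degree claim $D=\cO(1/\delta^2)$ for a uniform approximation of $\sqrt x$ on $[0,1]$ is one power too pessimistic: the substitution $x=s^2$ reduces the problem to approximating $|s|$ on $[-1,1]$, for which Bernstein's classical result gives degree $\cO(1/\delta)$. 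With \emph{your} degree and the missing $t\sqrt\alpha$ factor, the second branch becomes $\cO(t^3\alpha^{3/2}/\epsilon^2)$, which does not imply the theorem's $\cO(\alpha\, t^2\log(1/\epsilon)/\epsilon^2)$. If you correct the degree to $\cO(1/\delta)$ and track the multiplication honestly, both branches in fact come out \emph{stronger} than the theorem---but that is not what your write-up shows.
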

The  $\tilde \cO$ notation hides logarithmic factors in several parameters that specify the input.

\section{Reduction to quantum evolution} 
\label{sec:quantumevolution}

We show how to reduce
Problem \ref{problem:oscillators} to time evolution of a quantum system.
A  change of variables where $\vec y(t):=\sqrt{\bM} \vec x(t)$ allows us to write Eq.~\eqref{eq:newton} as 
\begin{equation}\label{eq:matrixnewton}
\ddot{\vec y}(t)=-{\bf A} \vec y(t),    
\end{equation}
where ${\bf A}:=\sqrt{\bM}^{-1} {\bf F} \sqrt{\bM}^{-1} \succcurlyeq 0$ is  PSD and real-symmetric, and ${\bf M}\succ 0$ is the diagonal matrix with entries $m_j$. 
Any solution to Eq.~\eqref{eq:matrixnewton} satisfies
\begin{align}
\label{eq:closedsolution}
 \ddot {\vec y} (t) + i \sqrt{\bA} \dot {\vec y}  (t)  = i \sqrt{\bA}\Bigl(\dot {\vec y}  (t) + i \sqrt{\bA} \vec y  (t) \Bigr)\;.
\end{align}
This is simply Schr\"odinger's equation induced by the Hamiltonian $-\sqrt{\bA}$. Hence its solution is
\begin{equation}
\label{eq:solution}
    \dot {\vec y}  (t) + i \sqrt{\bA} \vec y  (t) = 
    e^{it\sqrt{\bA}} 
    \Bigl(\dot {\vec y}  (0) + i \sqrt{\bA} \vec y  (0)\Bigr).
\end{equation}
Unfortunately, we do not have direct access to $-\sqrt{\bA}$ and casting the problem as quantum evolution with simpler Hamiltonians requires additional steps. It is possible, however, to do Hamiltonian simulation with  $-\sqrt{\bA}$ given oracle access to ${\bf A}$ using quantum phase estimation, as we describe in Sec.~\ref{sec:generalization}, but that is less efficient than what we describe here.
The phase estimation approach also gives rise to a
different encoding, and some properties of the system are not as easy to access. For example, in the encoding used in Problem~\ref{problem:oscillators},
 the support of $\sket{\psi(t)}$ in basis states are either proportional to the kinetic energies or potential energies of the springs, but this might not be the case in other encodings.

Let $\bf B$ be any $N \times M$ matrix
that satisfies $\bf B  \bf{B}^\dagger =  {\bf A}$
and ${\bf H}$ be the block Hamiltonian:
\begin{align}
\label{eq:hamiltonian}
{\bf H}:=- \begin{pmatrix} {\bf 0} &  \bf B \cr  \bf B^\dagger & {\bf 0} \end{pmatrix} \;,
\end{align}
where ${\bf 0}$ are matrices of all zeroes. (The dimension of each ${\bf 0}$ is clear from context.)
Note that   ${\bf H}$
acts on the space $\mathbb C^{N+M}$ and 
functions as the ``square root'' of ${\bf A}$, since
the first block of ${\bf H}^2$ is ${\bf A}$.
Schr\"odinger's equation induced by ${\bf H}$ is
\begin{align}
\label{eq:schrodinger}
 \sket{\dot \psi(t)} = -i {\bf H} \sket{\psi(t)} \;,
\end{align}
where $\ket{\psi(t)} \in \mathbb C^{N+M}$ is the  state of a quantum system at time $t$. 

It can be verified by direct substitution that 
 \begin{align}
\label{eq:quantumstate}
\ket{\psi(t)} \propto \begin{pmatrix} \dot{\vec y}(t) \cr i {\bf B}^\dagger {\vec y}(t) \end{pmatrix}
\, .
\end{align}
is a valid solution to Eq.~\eqref{eq:schrodinger}. Hence we have
\begin{equation}
    \begin{pmatrix} \dot{\vec y}(t) \cr i {\bf B}^\dagger {\vec y}(t) \end{pmatrix} = 
    e^{-it{\bf H}}
    \begin{pmatrix} \dot{\vec y}(0) \cr i {\bf B}^\dagger {\vec y}(0) \end{pmatrix},
\end{equation}
which lets us compute $\dot{\vec{y}}(t)$ and ${\bf B}^\dagger {\vec y}(t)$ using Hamiltonian simulation. This generalizes Eq.~\eqref{eq:solution} since ${\bf B}$ is not necessarily $-\sqrt{\bA}$. (This formulation also stores the positions and velocities in different components of the vector, but this is a minor difference.)

To match the state representation of Problem \ref{problem:oscillators}, we need to choose $\bf B$ such that $\vec \mu(t)={\bf B}^\dagger {\vec y}(t)={\bf B}^\dagger \sqrt{\bM} {\vec x}(t)$, where $\vec \mu(t)  \in \mathbb C^M$ (for $M=N(N+1)/2$). 
We can express $\vec \mu(t)$ in the basis $\{\ket{j,k}:j\le k \in [N]\}$, which is of size $M$. 
Since ${\bf F}$ is a graph Laplacian, we can obtain $\bf B$ from the incidence matrix of ${\bf F}$ given by
\begin{align}
\label{eq:Bdaggeractionmain}
  \sqrt{\bM}  {\bf B} \ket {j,k}= 
    \begin{cases}
        \sqrt{\kappa_{jj}}\ket j, & \text{if}\ j=k\\
        \sqrt{{\kappa_{jk}}}( \ket j -\ket k), & \text{if}\ j<k
    \end{cases}.    
\end{align}
This choice satisfies $\sqrt{\bM}{\bf B} (\sqrt{\bM}{\bf B})^\dagger= {\bf F}$, and hence $\bf B \bf{B}^\dagger={\bf A}$, because we can write 
${\bf F} = \sqrt{\bM}(\sum_{j\leq k} {\bf B} {\ket {j,k}} {\bra {j,k}}{\bf B}^\dagger ){\sqrt{\bM}}$
and each term in this sum describes the interaction between the $j^{\rm th}$ and $k^{\rm th}$ oscillators:
\begin{align}
\sqrt{\bM} {\bf{ B}} {\ket {j,j}} {\bra {j,j}}{\bf {B}}^\dagger \sqrt{\bM} =\kappa_{jj} \ketbra j\;,
\end{align}
and for $j<k$,
\begin{align}
\nonumber
   \sqrt{\bM} {\bf B} {\ket {j,k}} {\bra {j,k}}{\bf B}^\dagger \sqrt{\bM}& = \kappa_{jk}(\ketbra j + \ketbra k \\
   & \quad - \ket j \! \bra k - \ket k \! \bra j).
\end{align}
These reproduce ${\bf F}$ once we perform the sum.
To obtain the action of ${\bf B}$ we apply $\sqrt{\bM}^{-1}$
to Eq.~\eqref{eq:Bdaggeractionmain}.

Note that a similar approach based on the incidence matrix
was taken in Ref.~\cite{Costa19} to provide a quantum algorithm
that simulates the wave equation. 
The setting considered in that paper corresponds to the special case of our problem
where the graph is spatially local, and the masses and spring constants are uniform.

\section{Quantum algorithm}
\label{sec:algorithm}

Our quantum algorithm solves Problem~\ref{problem:oscillators} 
by preparing $\ket{\psi(0)}$ and simulating ${\bf H}$ for time $t$.

{\bf \em Access model:}
The Hamiltonian $\bf H$  is built from the matrix of spring constants ${\bf K}$
(or ${\bf F}$) and ${\bf M}$. As we wish to avoid complexities
that are polynomial in $N$, we require succinct representations
of these matrices. 
For maximum generality, we assume that access to ${\bf M}$ and the $d$-sparse ${\bf K}$ is provided by a black box, similar to that used in prior quantum algorithms~\cite{BAC07,BC12}.
The black box is a unitary $\cS$
that computes the masses $m_j$ on input $j$ and the nonzero entries of ${\bf K}$, i.e.\ $\kappa_{jk}$, on input $(j,k)$,  as well as their locations.
This black box readily gives query access to ${\bf H}$ in Eq.~\eqref{eq:hamiltonian}, which is also $d$-sparse. Its entries are $\pm\sqrt{\kappa_{jk}/m_j}$ and these factors can be applied using the values of $m_j$ and $\kappa_{jk}$; see App.~\ref{app:accessmodel}.

{\bf \em Simulation of ${\bf H}$:}
Time evolution with the sparse ${\bf H}$ can be simulated using one of the methods in Refs.~\cite{BCC+15,BCK15,LC17,LC19}, which apply an approximation of the exponential $e^{-it{\bf H}}$. Such methods are tailored to our access model and achieve almost optimal scaling in the parameters $\epsilon$, $d$, $t$, and $\|{\bf H}\|_{\max}$,  the largest  entry of ${\bf H}$ in absolute value.
Here $\|{\bf H}\|_{\max}$ is at most $\sqrt\aleph$, which is defined  in Thm.~\ref{thm:main}.

{\bf \em Complexity:}
The complexity of our approach is determined by the simulation of ${\bf H}$ for time $t$ from the initial state. 
We consider the query complexity $Q$ of the number of calls to $\cS$,
and we allow inverses and controlled forms of these oracles.
This query complexity can also be taken to include the unitary for preparing the initial state, but only one call to that preparation is needed.
We also consider the gate complexity $G$, which is the number of additional elementary gates, which could, for example, be single-qubit gates and CNOTs.
We describe these as ``two-qubit gates'' in Thm.~\ref{thm:main} for simplicity.

Using Ref.~\cite{LC17},
the Hamiltonian evolution may be simulated with
\begin{align}\label{eq:Qformula}
\cO(t\lam  +\log(1/\epsilon))
\end{align}
calls to a block encoding of ${\bf H}$, where the block encoding gives a factor of $1/\lam$.
In App.~\ref{app:accessmodel} we show how to block encode ${\bf H}$ with $\lam = \sqrt{2\aleph d}$
using $\mathcal{O}(1)$ calls to $\cS$, and so Eq.~\eqref{eq:Qformula} gives the total query complexity $Q$ given in Thm.~\ref{thm:main} since $\tau:=t\sqrt{\aleph d}$.
The reason for the square root dependence on the sparsity is then because the algorithm involves simulation of ${\bf H}$, which is effectively the square root of the $d$-sparse operator ${\bf A}$.
Note that $\aleph d$ is an upper bound on $\|{\bf A}\|$, implying that $\sqrt{\aleph d}$ is an upper bound on the largest frequency of the normal modes.
See App.~\ref{app:accessmodel} for a more technical explanation.

To achieve final error $\epsilon$, each block encoding should be given within error
\begin{align}
\label{eq:epsilon'}
\epsilon' = \cO(\epsilon/[\tau  +\log(1/\epsilon)]) \, ,
\end{align}
which will determine the gate complexity.
In Lemmas~\ref{lem:blockencodingB} and \ref{lem:blockencodingH} of App.~\ref{app:accessmodel} we also show that it suffices
to represent the relevant inputs, i.e., masses and spring constants,
with a number of bits that is $r_m=\cO(\log(m_{\max}/(m_{\min}\epsilon'))$ and $r_\kappa=\cO(\log(1/\epsilon'))$, respectively, to achieve error $\epsilon'$ in the block encoding.
The gate complexity for the arithmetic is essentially $\cO(r_m r_\kappa)$,
and there is also another contribution $\cO(n)$ to the gate complexity for other operations in the block encoding (e.g, for inequality tests and other state preparations).
Hence, we can bound the overall gate complexity as
$G=\cO(Q \log^2(N m_{\max}/(m_{\min}\epsilon'))$. Substituting the value of $\epsilon'$ from Eq.~\eqref{eq:epsilon'}, this gives
\begin{equation}
    G= \cO\left(Q \log^2\left(\frac{N \tau}{\epsilon} \frac{m_{\max}}{m_{\min}}\right)\right),
\end{equation}
which is the expression provided in Thm.~\ref{thm:main}.
To simplify this expression we used the fact that $\log((\tau+\log(1/\epsilon))/\epsilon) \leq \log((\tau+(1/\epsilon))/\epsilon)$.
Because we consider complexity for large $\tau$ and $1/\epsilon$, we can upper bound that by $\log(\tau/\epsilon^2)=\cO(\log(\tau/\epsilon))$.

In Thm.~\ref{thm:main} we have not quantified the state preparation complexity (i.e., the complexity of $\cW$), since Problem~\ref{problem:oscillators} assumes $\cW$ is given as an input.
However, in many situations this preparation can often be performed efficiently. One example is when we have oracles to separately prepare states with amplitudes proportional to $\vec{x}(0)$ and $\dot{\vec{x}}(0)$.
In App.~\ref{app:initialstate}, we show the query complexity to prepare $\sket{\psi(0)}$ is $Q_{\rm ini}=\cO(\sqrt{E_{\max}d/E) })$,
where
\begin{equation}
    E_{\max}=\frac {m_{\max}} 2 \|\dot{\vec{x}}(0)\|^2+\frac {\kappa_{\max}} 2 \| \vec{x}(0)\|^2
\end{equation}
is the energy of $N$ uncoupled oscillators of mass $m_{\max}$ and spring constants $\kappa_{\max}$ with similar initial conditions.
We also establish the gate complexity $G_{\rm ini}=\cO(Q_{\rm ini} \log^2 (N E_{\rm max}/(E\epsilon)))$.

\section{Example Application}
\label{sec:applications}

Our quantum algorithm can simulate
classical oscillators in time polynomial in $n$ under some assumptions. 
As we are concerned with dynamics and seek to preserve the speedup, we focus on computing certain time-dependent properties of the system. In particular, we explain how to use the quantum algorithm to solve Problem~\ref{problem:kineticestimation}, i.e., to estimate the kinetic energy of all or some oscillators $V \subseteq [N]$ at any time.

\begin{theorem}
\label{thm:kineticestimation}
Let $\cV$ be an oracle for $V$, i.e., a unitary that performs the map $\cV \ket j =-\ket j$ if $j \in V$ and $\cV \ket j=\ket j$ otherwise, and $\delta >0$.
Then Problem~\ref{problem:kineticestimation} can be solved with probability at least $1-\delta$
by a quantum algorithm that makes $\cO(\log (1/\delta)/\epsilon)$ uses of $\cV$ and the quantum circuit that prepares $\ket {\psi(t)}$  (along with its inverse and controlled version).
\end{theorem}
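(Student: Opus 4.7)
The plan is to reduce the estimation of $K_V(t)/E$ to standard amplitude estimation applied to the state-preparation circuit from Thm.~\ref{thm:main}. View the $(N+M)$-dimensional Hilbert space as carrying a flag register whose value $\ket{0}$ indicates the first ($N$-dimensional) block of Eq.~\eqref{eq:encodedstate} and $\ket{1}$ the second ($M$-dimensional) block. Let $P_0 := \ket{0}\!\bra{0}$ act on the flag register and $\Pi_V := \sum_{j\in V}\ket{j}\!\bra{j}$ act on the second register. Then Eq.~\eqref{eq:encodedstate} directly gives
\begin{equation}
\bra{\psi(t)}\bigl(P_0\otimes\Pi_V\bigr)\ket{\psi(t)}
= \sum_{j\in V}\frac{m_j\,\dot x_j(t)^2}{2E}
= \frac{K_V(t)}{E},
\end{equation}
so $K_V(t)/E$ is the probability that $\ket{\psi(t)}$ is found in the ``good'' subspace ${\rm Im}(P_0\otimes\Pi_V)$.

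Next I would build a reflection $R_V$ about this subspace using a single call to $\cV$. Since $\cV=\one-2\Pi_V$, applying $\cV$ controlled on the flag being $\ket{0}$ (and the identity otherwise) yields
\begin{equation}
R_V = \one - 2\,P_0 \otimes \Pi_V,
\end{equation}
the desired reflection. Let $\cU$ denote the circuit from Thm.~\ref{thm:main} that prepares an approximation of $\ket{\psi(t)}$ within Euclidean error $\epsilon_0$. Standard amplitude estimation with state preparation $\cU$ and reflection $R_V$ returns an additive-$\epsilon'$ estimate of $\sqrt{K_V(t)/E}$ with constant success probability using $\cO(1/\epsilon')$ uses each of $\cU$, $\cU^\dagger$, and $R_V$. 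Squaring gives an additive-$\cO(\epsilon')$ estimate of $K_V(t)/E$, so choosing $\epsilon'=\Theta(\epsilon)$ and $\epsilon_0=\Theta(\epsilon)$ achieves additive error $\epsilon$ using $\cO(1/\epsilon)$ queries to $\cV$ and to the $\ket{\psi(t)}$-preparation circuit.

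To amplify the constant success probability to $1-\delta$, I would run this procedure $\cO(\log(1/\delta))$ times independently and return the median of the outputs; a standard Chernoff argument shows the median lies within $\epsilon$ of $K_V(t)/E$ with probability at least $1-\delta$, yielding the claimed $\cO(\log(1/\delta)/\epsilon)$ overall query complexity. The only subtlety is the error budget: because $\cU$ prepares $\ket{\psi(t)}$ only up to Euclidean error $\epsilon_0$, the squared amplitude seen by amplitude estimation is perturbed by $\cO(\epsilon_0)$, so it suffices to invoke Thm.~\ref{thm:main} with target precision $\Theta(\epsilon)$. No conceptual obstacle arises beyond this standard bookkeeping of error budgets and the construction of the good-subspace reflection.
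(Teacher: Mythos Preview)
Your proposal is correct and takes essentially the same approach as the paper: both identify $K_V(t)/E=\bra{\psi(t)}P_V\ket{\psi(t)}$ with $P_V=(\one-\cV)/2$, build the good-subspace reflection from $\cV$, and invoke amplitude estimation on the $\ket{\psi(t)}$-preparation circuit. The only cosmetic difference is that the paper phrases the task as estimating the expectation $\bra{\psi(t)}\cV\ket{\psi(t)}$ and cites a high-confidence amplitude estimation routine directly, whereas you use standard amplitude estimation followed by a median-of-means boost; both yield the stated $\cO(\log(1/\delta)/\epsilon)$ complexity.
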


When $V=[N]$, the potential energy is $U(t)=E-K_V(t)$,
and a solution to Problem~\ref{problem:kineticestimation}
provides an estimate of $U(t)/E$ as well.
 As usual, we include controlled applications of the oracle $\cV$, which means that the global phase is distinguishable (this way we have that $V=[N]$ is distinct from $V=\{ \}$).
 Although here we focus on estimating kinetic energies, a similar idea can be used to estimate the potential energies stored on a subset of springs at any time, as we discuss below.

To prove Thm.~\ref{thm:kineticestimation}, we
note that ${K_V(t)}/E= \sbra{\psi(t)} P_V \sket{\psi(t)}$,
where $P_V = (\one -\cV)/2$ is the projector onto $V$.
This expectation, which is the support of $\sket{\psi(t)}$ on the subspace spanned by the basis states whose labels correspond to $V$ (i.e., a subspace of that spanned by the first $N$ basis states),
can be obtained by making measurements
on many copies of $\sket{\psi(t)}$, but that approach is not optimal (i.e., the scaling is quadratic in $1/\epsilon$).
The problem reduces to estimating $\bra{\psi(t)} \cV \ket{\psi(t)}$ with additive error at most $2\epsilon$ and error probability $\delta$.
A method based on high-confidence amplitude estimation~\cite{KOS07} then
provides the result in Thm.~\ref{thm:kineticestimation}. 

Hence, when $\sket{\psi(t)}$ can be efficiently prepared and $\cV$ can be efficiently implemented, we can obtain an estimate of $K_V(t)/E$ with error $\epsilon$ and high probability in time polynomial in $n$. If in addition $E$ is known or is efficiently computed, this translates to an efficient estimation of $K_V(t)$ with error $\epsilon E$.

A related result shows that our quantum algorithm can be applied to estimate the potential energy of a subset of oscillators:
\begin{problem}
\label{problem:potentialestimation}
   Given the same inputs as Problem~\ref{problem:oscillators} and an oracle for $V \subset [N] \times [N]$, which is now a subset of springs (edges) labeled as $(j,k)$ with $k \ge j$, output an estimate $\hat u_V(t) \in \mathbb R$ such that
   \begin{align}
       |\hat u_V(t) - U_V(t)/E| \le \epsilon \;,
   \end{align}
   where 
   \begin{align}
   \nonumber
       U_V(t)&:= \frac 1 2  \sum_{j:(j,j) \in V} \kappa_{jj}x_j(t)^2  +\\
      &+\frac 1 2 \sum_{k>j:(j,k) \in V} \kappa_{jk}(x_j(t)-x_k(t))^2
   \end{align} 
   is the potential energy of the springs in $V$ at time $t$.
\end{problem}
By noting that $U_V(t)/E$ is the support of $\ket{\psi(t)}$ on the subspace spanned by the  basis states $\ket{j,k}$ such that $(j,k) \in V$, the problem reduces to estimating the expectation of another unitary $\cV$ on $\ket{\psi(t)}$. Like in the prior example for the kinetic energy, this can be done efficiently using our quantum algorithm:
\begin{theorem}
    Let $\cV$ be an oracle for $V$, i.e., a unitary that performs the map $\cV \ket{j,k}=-\ket{j,k}$ if $(j,k) \in V$ and $\cV \ket{j,k}=\ket{j,k}$ otherwise, and $\delta >0$. Then, Problem~\ref{problem:potentialestimation} can be solved with probability at least $1-\delta$ by a quantum algorithm that makes $\cO(\log(1/\delta)/\epsilon)$ uses of $\cW$ and the quantum circuit that prepares $\ket{\psi(t)}$ (along with its inverse and controlled version).
\end{theorem}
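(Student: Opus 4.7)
The plan is to mirror the argument used for Theorem~\ref{thm:kineticestimation}, reducing the estimation of $U_V(t)/E$ to an amplitude-estimation problem on the state $\ket{\psi(t)}$ produced by Thm.~\ref{thm:main}.

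First, the encoding in Eq.~\eqref{eq:encodedstate} places the amplitudes $\sqrt{\kappa_{jk}}(x_j(t)-x_k(t))$ and $\sqrt{\kappa_{jj}}x_j(t)$ in the second block of $\ket{\psi(t)}$, indexed by basis states $\ket{j,k}$ with $k\ge j$. Letting $P_V$ denote the orthogonal projector onto $\mathrm{span}\{\ket{j,k}:(j,k)\in V\}$, extended by zero on the first $N$ (kinetic) basis states, a direct expansion of Eq.~\eqref{eq:encodedstate} gives $\bra{\psi(t)}P_V\ket{\psi(t)}=U_V(t)/E$, since the squared magnitudes of the relevant $\vec\mu$ entries reproduce the summands in the definition of $U_V(t)$ from Problem~\ref{problem:potentialestimation}.

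Next, I would extend the supplied oracle $\cV$ to act as the identity on the kinetic block, making it a reflection on all of $\mathbb{C}^{N+M}$ satisfying $P_V=(\II-\cV)/2$. Estimating $U_V(t)/E$ to additive error $\epsilon$ with failure probability at most $\delta$ then reduces to estimating $\bra{\psi(t)}\cV\ket{\psi(t)}$ to additive error $2\epsilon$ with the same failure probability. Applying the high-confidence amplitude estimation method of~\cite{KOS07} to $\cV$ together with the state-preparation circuit for $\ket{\psi(t)}$ from Thm.~\ref{thm:main} (exactly as in the proof of Thm.~\ref{thm:kineticestimation}) achieves this using $\cO(\log(1/\delta)/\epsilon)$ applications of $\cV$ and $\cO(\log(1/\delta)/\epsilon)$ uses of the preparation circuit (and its inverse and controlled form), matching the stated complexity.

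I do not expect a substantive obstacle: the argument is structurally identical to the kinetic-energy case, and the only point requiring care is the trivial extension of $\cV$ from the edge-indexed sector to the full $(N+M)$-dimensional Hilbert space so that $(\II-\cV)/2$ coincides with the intended projector onto the potential-energy support of $\ket{\psi(t)}$.
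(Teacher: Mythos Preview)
Your proposal is correct and follows essentially the same approach as the paper: identify $U_V(t)/E=\bra{\psi(t)}P_V\ket{\psi(t)}$ with $P_V=(\one-\cV)/2$, reduce to estimating $\bra{\psi(t)}\cV\ket{\psi(t)}$ within error $2\epsilon$, and invoke high-confidence amplitude estimation~\cite{KOS07}. The paper explicitly says the argument is ``like in the prior example for the kinetic energy,'' which is exactly the parallel you drew.
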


\section{Impossibility of efficient classical simulations}
\label{sec:hardness}

An important question is whether our quantum algorithm
results in an exponential quantum speedup, or whether it can be efficiently simulated by classical algorithms. We address this question in two ways: (i) by showing that 
our approach can solve an oracular problem -- the ``glued trees'' problem of Ref.~\cite{childs2003exponential} -- in $\poly(n)$ time, while Thm.~\ref{thm:oracle} is implied by Ref.~\cite{childs2003exponential}, and (ii) by showing that our approach can simulate any quantum circuit of size $L$ acting on $q$ qubits in $\poly(L,q)$ time.
More precisely we show that Problem~\ref{prob:nonoracular} is \BQP-complete, thereby proving Thm.~\ref{thm:bqpcomplete}.
Although our results use physical systems
that may seem artificial (e.g., resulting interactions between oscillators are 
not spatially local),
they are strong evidence that no polynomial-time (in $n$) classical algorithm for simulating systems of coupled classical oscillators exists.
Our results also add a new problem to the list of ``natural'' \BQP-complete problems~\cite{wocjan2006several}.

\subsection{Oracle lower bound}
\label{sec:gluedtrees}

At a high level, in the glued trees problem of Ref.~\cite{childs2003exponential}, we are given oracle access to the adjacency matrix of a sparse graph with an ENTRANCE vertex, and the goal is to find the EXIT vertex. We map this problem to a system of masses and springs by putting a unit mass at each vertex and a spring of unit spring constant at every edge. Then we show that if the system starts at rest except with the ENTRANCE vertex having unit velocity, after polynomial time, the EXIT vertex will have inverse polynomial velocity, and thus it is a special case of Problem~\ref{problem:kineticestimation}.
Formally, we study the following problem:

\begin{problem}
\label{problem:gluedtrees}
Consider a network of $N=2^{n+1}-2$ masses coupled by springs, where ${\bf M}=\one$ and ${\bf K}$ coincides with the adjacency matrix of a graph
consisting of two balanced binary trees of depth $n$ ``glued''
randomly as in Fig.~\ref{fig:gluedtrees} (i.e., the spring constants are $\kappa_{jk}=1$ if $(j,k)$ is an edge of the graph or $\kappa_{jk}=0$ otherwise).  Each mass (vertex) of the network is labeled randomly with a bit string of size $2n$. 
The network contains two special and verifiable masses, ${\rm ENTRANCE}$ and ${\rm EXIT}$, which correspond to the roots of both trees. 
Given oracle access to ${\bf K}$ and the label of the ${\rm ENTRANCE}$ mass, the problem is to find the label of the ${\rm EXIT}$ mass.
\end{problem}

\begin{figure}\centering
\includegraphics[scale=.3]{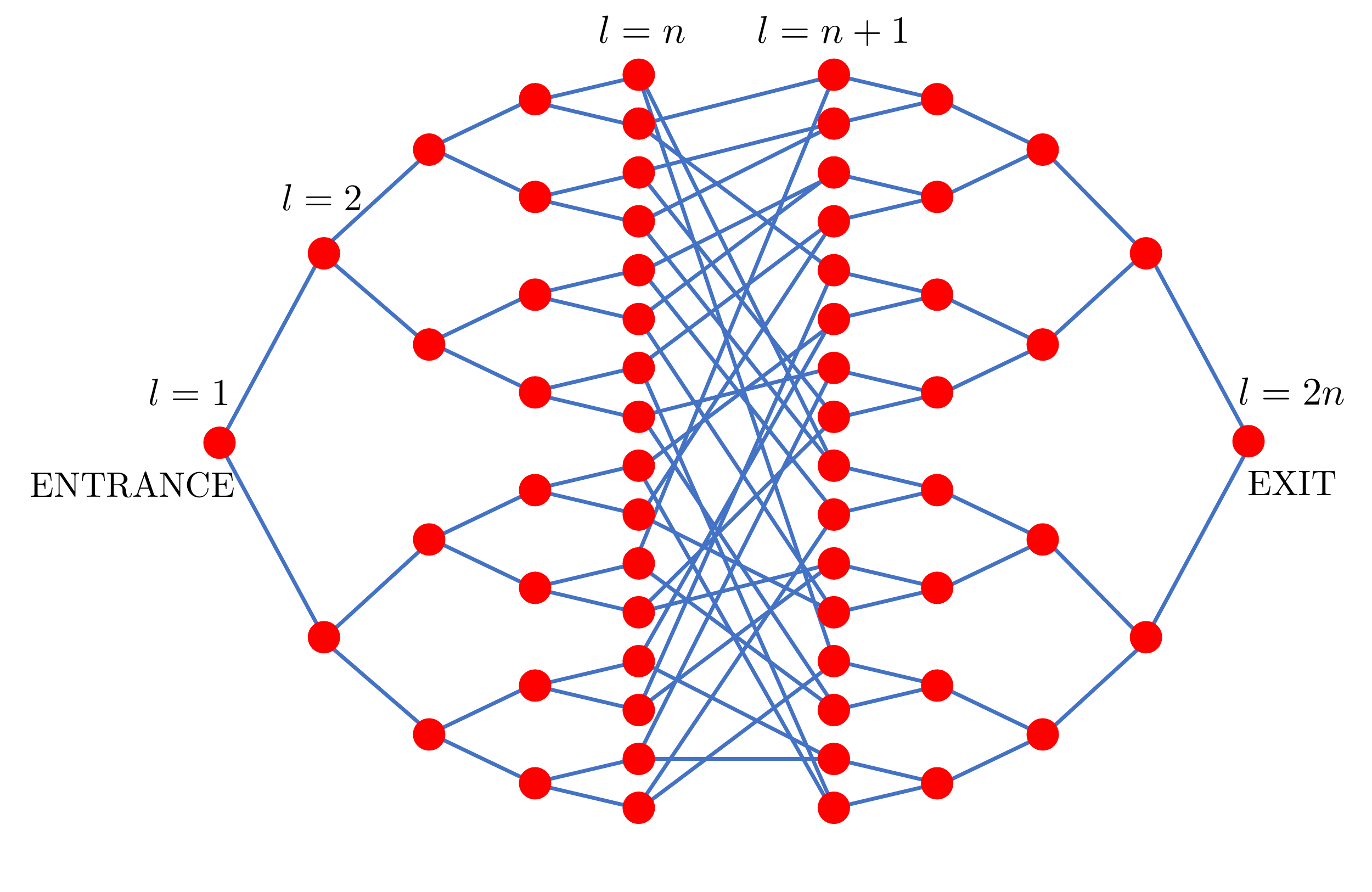}
\label{fig:gluedtrees}
\caption{A network of $N=2^{n+1}-2$ coupled oscillators obtained by randomly gluing two binary trees. Each mass is $m_j=1$ for all $j \in [N]$ and is labeled by a random bit string of size $2n$. Edges denote springs of constant $\kappa_{jk}=1$. The goal is to find the label of the ${\rm EXIT}$ mass given the label of the ${\rm ENTRANCE}$ mass and given oracle access to the network.}
\end{figure}

This problem is equivalent to that studied in Ref.~\cite{childs2003exponential} where the authors
presented an efficient quantum walk based algorithm 
while showing that no classical algorithm can solve the problem efficiently in this oracular setting. 
In that work the root vertices do not have the extra 
edge that connects them to a ``wall'', so they can be easily verified as they are the only vertices of degree two. 
We add these extra edges to simplify the analysis below;
this small change allow us to use some results on the spectral properties of the adjacency matrix already studied in Ref.~\cite{childs2003exponential}. 
With this change, the oracle to access ${\bf K}$ allows us to verify whether a certain vertex is a root or not, since they are the only vertices (or masses) with nonzero diagonal spring constants; i.e., $\kappa_{jj}=1$ only if $j$ corresponds to ${\rm ENTRANCE}$ or ${\rm EXIT}$ and $\kappa_{jj}=0$ otherwise. Since our problem is a minor modification of the glued trees problem, and any oracle query to our problem can be answered using a single oracle query to the glued trees problem, the original classical lower bound of Ref.~\cite{childs2003exponential} implies that our problem requires $2^{\Omega(n)}$ queries to solve classically.

We  then show that our quantum algorithm for simulating coupled classical oscillators can also be used to solve Problem~\ref{problem:gluedtrees} efficiently. This is a different approach from that of Ref.~\cite{childs2003exponential} as we are not simulating the quantum walk but rather the classical dynamics obtained from Newton's equation with a quantum algorithm.
This shows that our quantum algorithm cannot be simulated classically efficiently in the oracular setting in general. Our main claim is the following lemma, which we prove in App.~\ref{app:oracleproof}.

\begin{lemma}
\label{lem:kineticgluedtrees}
Let $\vec x(0)=(0,0,\ldots,0)^T$ and $\dot {\vec x}(0)=(1,0,\ldots,0)^T$, where the first and last components refer to the initial conditions of the ${\rm ENTRANCE}$ and ${\rm EXIT}$ masses in the network of Fig.~\ref{fig:gluedtrees}, respectively. Then, there exists a time $t=\cO({\rm poly}(n))$  at which the kinetic energy of the ${\rm EXIT}$ mass is $K_{\rm EXIT}(t):=\frac 1 2 (\dot x_{\rm EXIT}(t))^2=\Omega(1/{\poly}(n))$.
\end{lemma}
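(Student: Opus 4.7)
\textbf{Proof proposal for Lemma~\ref{lem:kineticgluedtrees}.}

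The plan is to reduce the dynamics to the column subspace, use the spectral analysis of Ref.~\cite{childs2003exponential}, and then time-average to pick out a concrete lower bound on the EXIT kinetic energy at some polynomial time. Since $\bM=\one$, Eq.~\eqref{eq:matrixnewton} reduces to $\ddot{\vec x}(t)=-{\bf F}\vec x(t)$, so with the initial condition $\vec x(0)=0$, $\dot{\vec x}(0)=\ket{\mathrm{ENTRANCE}}$ we have the closed form
\begin{equation}
\dot{\vec x}(t)=\cos\!\bigl(t\sqrt{\bf F}\bigr)\ket{\mathrm{ENTRANCE}}.
\end{equation}
Define the column states $\ket{\mathrm{col}_j}$ as the uniform superposition over vertices at depth $j$ in the glued-trees graph ($j=0,\ldots,2n+1$), so that $\ket{\mathrm{col}_0}=\ket{\mathrm{ENTRANCE}}$ and $\ket{\mathrm{col}_{2n+1}}=\ket{\mathrm{EXIT}}$.

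First I would verify that the $(2n{+}2)$-dimensional column subspace is invariant under $\bf F$ (including the wall-edge contributions on the two root vertices), and that the restriction $\bf T$ of $\bf F$ to this subspace is a tridiagonal matrix. This is exactly the matrix analyzed in Ref.~\cite{childs2003exponential}; the extra wall edges are what make the diagonal of $\bf T$ constant, which is precisely the reason they were added in the problem statement. From that analysis we import (i) an orthonormal eigenbasis $\{\ket{\lambda_k}\}_{k=0}^{2n+1}$ of $\bf T$ with eigenvalues $\lambda_k>0$, (ii) explicit formulas for the overlaps $\alpha_k:=\braket{\mathrm{col}_0}{\lambda_k}$ and $\beta_k:=\braket{\mathrm{col}_{2n+1}}{\lambda_k}$ (with $|\alpha_k\beta_k|$ bounded below by $1/\poly(n)$ for an $\Omega(1)$ fraction of $k$), and (iii) the fact that consecutive eigenvalues of $\bf T$, and of $\sqrt{\bf T}$, are separated by at least $1/\poly(n)$.

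With these ingredients in hand, expand
\begin{equation}
\dot x_{\mathrm{EXIT}}(t)=\sum_{k}\cos\!\bigl(t\sqrt{\lambda_k}\bigr)\,\beta_k\alpha_k,
\end{equation}
and consider the time-average
\begin{equation}
\overline{K}_T:=\frac{1}{T}\int_0^T \tfrac{1}{2}\,\dot x_{\mathrm{EXIT}}(t)^2\,dt.
\end{equation}
Using $2\cos a\cos b=\cos(a-b)+\cos(a+b)$ and averaging, the diagonal terms contribute $\tfrac14\sum_k (\alpha_k\beta_k)^2$, while each off-diagonal term contributes at most $O\!\bigl(1/[T\,\Delta_{kk'}]\bigr)$, where $\Delta_{kk'}=|\sqrt{\lambda_k}\pm\sqrt{\lambda_{k'}}|\ge 1/\poly(n)$ by the spectral-gap bound. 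Choosing $T=\poly(n)$ sufficiently large makes the total off-diagonal contribution $o(1/\poly(n))$ while leaving the diagonal part intact, so $\overline{K}_T=\Omega(1/\poly(n))$. By an averaging/pigeonhole argument there then exists some $t\in[0,T]$ at which $K_{\mathrm{EXIT}}(t)=\Omega(1/\poly(n))$, as claimed.

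The main obstacle I expect is bookkeeping around the tridiagonal reduction once the wall edges are included: one has to confirm that with the added $\kappa_{jj}=1$ terms on the two roots the column basis is still exactly invariant, that the resulting $\bf T$ matches (up to a uniform shift) the one whose eigenstructure is controlled in Ref.~\cite{childs2003exponential}, and that the lower bounds on $|\alpha_k\beta_k|$ and on the nearest-neighbor gaps of $\sqrt{\bf T}$ survive this modification. Once that is pinned down, converting the time-averaged bound into the pointwise $\Omega(1/\poly(n))$ guarantee at a specific $t=\poly(n)$ is a one-line pigeonhole argument.
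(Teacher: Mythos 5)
Your proposal follows the same overall route as the paper: pass to the column subspace where $\bf F$ restricts to a tridiagonal matrix, expand $\dot x_{\mathrm{EXIT}}(t)$ in the eigenbasis, time-average $\dot x_{\mathrm{EXIT}}(t)^2$, show the constant part of the average is $\Omega(1/n)$, and invoke pigeonhole. This is the structure of the paper's proof in the appendix as well. Two remarks.

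First, a minor point: the sum-bound you want for the diagonal constant part is obtained in the paper more cleanly than via your ingredient (ii). You claim $|\alpha_k\beta_k|\ge 1/\poly(n)$ for an $\Omega(1)$ fraction of $k$, which is a stronger per-eigenvector statement than anything Ref.~\cite{childs2003exponential} actually provides. What the paper uses instead is the reflection symmetry of the quotient chain, giving $|\alpha_k|=|\beta_k|$ for every $k$, so $\sum_k(\alpha_k\beta_k)^2 = \sum_k|\alpha_k|^4 \ge \bigl(\sum_k|\alpha_k|^2\bigr)^2/(2n) = 1/(2n)$ by Cauchy--Schwarz. You should replace (ii) with this argument; it is both simpler and actually justified by the cited source.

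Second, and more substantively, your bound on the non-constant part has a genuine gap. You write that ``each off-diagonal term contributes at most $O(1/[T\,\Delta_{kk'}])$ where $\Delta_{kk'}=|\sqrt{\lambda_k}\pm\sqrt{\lambda_{k'}}|\ge 1/\poly(n)$ by the spectral-gap bound.'' The spectral-gap bound of Ref.~\cite{childs2003exponential} only controls \emph{differences} of consecutive eigenvalues; it says nothing that prevents the \emph{smallest} eigenvalue $\lambda_1$ of $\bf T$ (hence $\sqrt{\lambda_1}$) from being exponentially small --- and indeed it is exponentially small, since $\bra{u}\bf T\ket{u}=2/N$ for the uniform vector $\ket u$. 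Consequently the diagonal oscillating term $\langle\cos(2t\sqrt{\lambda_1})\rangle$ does \emph{not} decay for any $T=\poly(n)$: it can remain $\Theta(1)$, with either sign, and a priori could cancel your entire $\Omega(1/n)$ constant contribution. The paper handles this by explicitly exhibiting an approximate eigenvector of lowest eigenvalue and showing that the overlap $|\braket{\mathrm{col}_0}{\lambda_1}|$ is $O(1/\exp(n))$, so this single uncontrolled term is multiplied by an exponentially small weight and is harmless. Without such a bound on the lowest-eigenvector overlap, your time-averaging argument does not close. This is the one missing idea in your plan; once you add it (and swap in the symmetry/Cauchy--Schwarz argument above), the proof goes through.
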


Given this lemma, it is easy to see that Problem~\ref{problem:gluedtrees} is a special case of the problem of estimating the kinetic energy, Problem~\ref{problem:kineticestimation}. The quantum algorithm can prepare $\sket{\psi(t)}$
efficiently under the conditions of Lemma~\ref{lem:kineticgluedtrees}.
Because Lemma~\ref{lem:kineticgluedtrees} proves that the kinetic energy of the ${\rm EXIT}$ mass will be inverse polynomially large after time $t=\cO(\poly(n))$, a projective measurement on $\sket{\psi(t)}$ will return 
the label of the ${\rm EXIT}$ with probability $K_{\rm EXIT}(t)/E=\Omega(1/\poly(n))$, where the energy is $E=\frac 1 2 (\dot x_{\rm ENTRANCE}(0))^2=\frac 1 2$ in this case.
Since the ${\rm EXIT}$ mass can be verified with the oracle by assumption,
the probability of finding the label of ${\rm EXIT}$
can be made a constant close to 1 after $\cO(\poly(n))$ repetitions of the previous procedure, giving an overall query and gate complexity $\cO(\poly(n)))$.
Since Problem~\ref{problem:gluedtrees} requires $2^{\Omega(n)}$ queries to solve classically, this yields the exponential lower bound in Thm.~\ref{thm:oracle}.

\subsection{{\BQP}-completeness}
\label{sec:BQP}
To prove this, we start from the standard \BQP-complete problem of simulating universal quantum circuits, i.e., the problem of determining some property of the output state $\ket{\phi}=U_L\ldots U_1 \ket 0 ^{\otimes q}$, where $L={\rm poly}(q)$. The unitary gates $U_l$ belong to a universal set such as $\{H,X,\mathrm{Toff}\}$, where $H$ and $X$ are single-qubit Hadamard and Pauli $X$ (bit flip) gates, and Toff is the three-qubit Toffoli gate. (Without loss of generality, all Hadamard gates can act on the last qubit, a property that is not necessary but we use it to simplify the presentation of the proof.) The specific problem is that of deciding if $\ket \phi$ is essentially the all zero state or has almost no overlap on it; see Problem~\ref{prob:bqpcomplete}.
In App.~\ref{app:BQPcompletenessproof} we show that this problem reduces to one in which the kinetic energy of a specific oscillator is either exponentially close to zero or at least $1/\poly(q)$ large. This is essentially Problem~\ref{prob:nonoracular} for $n=\cO(q)$.

We use the standard Feynman-Kitaev construction~\cite{Feynman1986a,Kitaev2002b} to encode the $U_l$'s into a Hamiltonian with an extra ``clock'' register. One property of this construction is that evolution under the Hamiltonian is known to apply the sequence of gates $U_L \ldots U_1$ on a given initial state (with $1/\poly(q)$ amplitude). We want this Hamiltonian to be the matrix $\bf A$ of a corresponding system of coupled oscillators. This way we could use the connection between classical and quantum systems discussed in Sec.~\ref{sec:quantumevolution}, which suggests that the evolved state of the oscillators encodes the amplitudes of $\ket \phi$. 
However, this has two problems:
\begin{enumerate}
    \item The off-diagonal entries of the Hamiltonian would not be real negative numbers, i.e., they cannot be related to spring constants;
    \item The evolution of the oscillators is induced by the $\sqrt{\bA}$ rather than ${\bf A}$, so the evolution property of the  Hamiltonian does not apply.
\end{enumerate}
The first problem is due to the fact that 
Hadamards have negative matrix entries. To fix this,
we observe that the same effect can be obtained using an operator with non-negative entries and an ancilla in the $\ket{-}:=\frac  1 {\sqrt 2} (\ket 0- \ket 1)$ state~\cite{jordan2010quantum,CGW15}.
This operator is no longer unitary, but this poses no additional problems as it is only being encoded in a Hamiltonian that, at this stage, corresponds to a system of oscillators coupled by springs.
The second problem is addressed by showing that the 
spectral properties of $\sqrt{\bA}$ 
are such that there is an appropriate
evolution time $t=\poly(q)$ after which the evolved state of the oscillators indeed encodes the output of the quantum circuit with $1/\poly(q)$ amplitude.

Formally, we consider problem instances, i.e., systems of coupled oscillators, where 
 $N=(L+1)2^{q+1}$, ${\bf M}=\one$, and ${\bf A}={\bf F}$
is
\begin{align}
\label{eq:BQPsystem}
{\bf A}=4 \one_N -\sum_{\sl=1}^{L} (\ket \sl \!\bra {\sl+1}+\ket{\sl+1}\!\bra \sl) \otimes W_{\sl} \;.
\end{align}
The matrix $\one_N$ is the $N \times N$ identity matrix.
The $2^{q+1}\times 2^{q+1}$ matrices 
 $W_{\sl}$ are real-symmetric and act on one more qubit than do the $U_{\sl}$ matrices. We define them as $W_{\sl}=U_{\sl} \otimes \one_2$ if $U_{\sl}$ is the $X$ or Toff gate acting on the space of $q$ qubits.
 When $U_{\sl}$ is a Hadamard gate, which acts on the last qubit (the $q^{\rm th}$ qubit), we replace it by the following $4 \times 4$ matrix acting on qubits $q$ and $q+1$:
 \begin{equation}
   \frac 1 {\sqrt 2}  \begin{pmatrix}   1 & 0 &  1 & 0 \cr 0 & 1 & 0 & 1 \cr   1 & 0 & 0 & 1 \cr 0 & 1 & 1 & 0  \end{pmatrix}.
 \end{equation}
This matrix acts like Hadamard on qubit $q$ when qubit $q+1$ is set to $\ket{-}$.
Thus the entries of $W_l$ are non-negative in this case as well, 
the off-diagonal entries of ${\bf A}$ are $\{0,-1/\sqrt 2,-1\}$,
and all diagonal entries are $4$. Hence, the  off-diagonal entries of the matrix ${\bf K}$ are $\{0,1/\sqrt 2,1\}$ and one can show that $\kappa_{jj}\ge 0$ for all $j\in[N]$. These spring constants can be efficiently accessed using Eq.~\eqref{eq:BQPsystem}. The gates $U_l$ are 2-sparse, implying that the $W_l$'s are also $2$-sparse, and ${\bf A}$ and ${\bf K}$
are $4$-sparse. These properties are outlined in Thm.~\ref{thm:bqpcomplete}.

For our proof, we consider the initial condition for the $N$ oscillators
where $\vec x(0)=(0,0,\ldots,0)^T$, 
$\dot{\vec x}(0)=(+1,-1,0,\ldots,0)^T$, so that the energy  is $E=1$.
Labeling each oscillator $j\in[N]$ by $(\sl,r)$, where $\sl\in [L+1]$ and $r \in[2^{q+1}]$,
the oscillator under consideration is the one with $l=L+1$ and $r=1$. In App.~\ref{app:BQPcompletenessproof} we show that
computing the kinetic energy of this oscillator
after time $t=\poly(q)=\polylog(N)$ solves the above  \BQP-complete problem. In addition, this classical system satisfies 
all of our conditions for the quantum algorithm to be efficient (i.e., the problem is in $\BQP$).

\section{Generalized coordinates}
\label{sec:generalization}

In general, when modeling a classical system using the
harmonic approximation, Newton's equation can be simply written
as $\ddot{\vec y}(t) =-{\bf A} \vec y(t)$, where $\vec y(t) \in \mathbb R^N$ encodes the generalized coordinates and ${\bf A}\succcurlyeq 0$ is of dimension $N\times N$.
To prove Thm.~\ref{thm:generalized} we use a standard approach based on quantum phase estimation~\cite{Kit95,CEMM98,Sanders2020}
to first estimate the eigenvalues of ${\bf H^{(2)}}:=-X\otimes {\bf A}$, which are $\gamma_{\eta,j}:=(-1)^{\eta} \lambda_j$, within certain precision that gives the eigenvalues of  ${\bf H}$  within precision $\cO(\epsilon/t)$. Here, $\lambda_j \ge 0$ are the eigenvalues of ${\bf A}$, and $\eta \in \{0,1\}$ determine the eigenvalues of $-X$ as $(-1)^{\eta}$, with eigenstates $\ket{0_X}=\ket -$ and $\ket{1_X}=\ket+$.
We specifically use a standard  QFT-based approach to phase estimation that eschews the need for measurement and in turn maps (within small error) each eigenvector of ${\bf H^{(2)}}$ via
\begin{align}
\nonumber
    \ket{\eta_X,\lambda_j} &\mapsto \ket{\eta_X,\lambda_j}  \Bigl( \sqrt{1-\delta_{\eta,j}}\sum_{x: |x-\gamma_{\eta,j}| \le \epsilon_{\rm PE}}b_x\ket{x} +  \\
   &   
     + \sqrt{\delta_{\eta,j}} \ket{\phi_{\eta,j}} \Bigr)\; ,
\end{align}
where $\ket{\phi_{\eta,j}}$ is some unspecified error state of the ancillas of unit norm, $\sum_x|b_x|^2=1$, $\delta_{\eta,j}\le \delta_{\rm PE}$, and $\epsilon_{\rm PE}$ and $\delta_{\rm PE}$ need to be set according to the error requirements. The states $\ket x$ encode the eigenvalue estimates of ${\bf H^{(2)}}$. From each estimate $x$, we obtain the eigenvalue estimate of ${\bf H}$ by taking its sign and computing a square root, i.e., ${\rm sign}(x)\sqrt{|x|}$. Then we apply a phase factor yielding
\begin{align}
    \ket{\eta_X,\lambda_j}& \Bigl(\sqrt{1-\delta_{\eta,j}}\sum_{x: |x-\gamma_{\eta,j}| \le \epsilon_{\rm PE}}e^{-it\,{\rm sign}(x)\sqrt{|x|}} b_x\ket{x} \nn
    & + \sqrt{\delta_{\eta,j}} \ket{\phi_{\eta,j}'}\Bigr) \; .
\end{align}
Last, we invert quantum phase estimation and other operations to (approximately) uncompute the estimated eigenvalues. The result is, for a value of the error tolerances where $\delta_{\rm PE}=\cO(\epsilon^2)$ and
\begin{align}
    \epsilon_{\rm PE}=\cO\left(\max \left( \frac{\epsilon }{t\sqrt{\|{\bf A}^{-1}\|}},\frac{\epsilon^2}{t^2}\right)\right),
\end{align}
a unitary approximating  ${e^{-it {\bf H}}}$ within error $\epsilon$. The result is implied by the complexity of quantum phase estimation that requires $\cO(\|{\bf A}\|_{\max}d \log(1/\delta_{\rm PE})/\epsilon_{\rm PE})$ queries to the oracle. See App.~\ref{app:QPE} for details.

\section{Conclusion}

We introduced an approach to simulating systems of coupled classical oscillators using a quantum computer. We showed how to map Newton's equations for these dynamics (coupled second order ordinary differential equations) to the Schr\"{o}dinger equation (a first order partial differential equation), with polylogarithmic overhead in the number of oscillators under certain assumptions. Using this mapping, we demonstrated that any classical algorithm that simulates these classical dynamics requires at least $2^{\Omega(n)}$ queries to the oracle, and that they can provide solutions to \BQP-complete problems, and would thus require exponential time to solve classically under reasonable complexity theoretic assumptions. We also generalized our approach to the simulation of other classical harmonic systems.

While providing a large quantum advantage in certain contexts, these techniques also have significant limitations. For example, the approach is only efficient for computing particularly large or global properties and when masses and spring constants can be computed in time polylogarithmic in system size. Another feature of our algorithm is that its complexity
is (almost) linear in the evolution time $t$, a scaling that might not be avoided in general~\cite{BAC07,AA17,HHK+18},  being efficient only if $t$ is also polylogarithmic in system size.
This would discourage applications where, for example, $N=\poly(t)$ (and when fast forwarding is not possible). This feature is expected to arise when simulating physical systems with geometrically-local interactions, and for initial states that are locally supported. 
In these examples, the relevant system size is determined by the ``light cone'', whose size in $D$ spatial dimensions would scale as $N \sim t^D$. Nevertheless, even for these examples our quantum algorithm would still result in significant quantum speedups (e.g., super-quadratic), suggesting a new application area for quantum computers \cite{BabbushFocus}.

As many systems from molecular vibrations, to structural mechanics, to electrical grids, to neuronal activation can be modelled within the harmonic approximation, and since interesting dynamical features can appear at relatively short times (e.g., $t$ independent of $N$),
we expect
that there exist specific applications that meet all the requirements for our quantum algorithm to be efficient. Such applications will then benefit from this speedup with appreciable real-world impact; 
identifying them is one important next step in this line of research.

Last, we note that our classical-to-quantum reduction provides yet
another way to think about quantum algorithms.
For example, once mapped to a one dimensional system, the glued-trees problem of masses and springs becomes a simple wave propagation problem.
Since waves propagate ballistically, it is 
now clear why our quantum algorithm, whose complexity is mainly dominated by evolution time but not by the number of masses, works in this case. Another example results from Grover's classical system of coupled pendulums~\cite{grover2002classical} that, after using our reduction, provides another way to solve the unstructured search problem with $\cO(t)$ queries, where $t=\cO(\sqrt N)$.

\subsection*{Acknowledgments}

The authors thank Amira Abbas, Sergio Boixo, Pedro Costa, Eddie Farhi, Bill Huggins, Marika Kieferova, Jarrod McClean, Hartmut Neven, Philipp Schleich, and Alexander Schmidhuber for helpful discussions. DWB and NW were funded to work on this project by grants from Google Quantum AI.
DWB is also supported by Australian Research Council Discovery Projects DP190102633, DP210101367, and DP220101602.  NW's work was also supported by the U.S. Department of Energy, Office of Science, National Quantum Information
Science Research Centers, Co-design Center for Quantum
Advantage under contract DE-SC0012704.


\onecolumngrid
\bibliography{ryan_references,coupledoscillators}
\section*{Appendix}
\appendix

\section{Access models and block encoding of {\bf B} and {\bf H}}
\label{app:accessmodel}

We provide more details on the access model
used for our quantum algorithm, which is similar to that used in prior quantum algorithms~\cite{BAC07,BC12}. 
The black box or oracle $\cS$
 allows us to perform the map
\begin{align}
\label{eq:blackboxneighbor}
\ket{j,\ell}\rightarrow \ket{j,a(j,\ell)} \;,
\end{align}
for any $j \in [N]:=\{1,\ldots,N\}$ and $\ell \in [d]$, where $d$ is the maximum number of nonzero entries in
any row of ${\bf K}$ (i.e., the sparsity), and $a(j,\ell)$ is the column index of the $\ell^{\rm th}$ nonzero entry in the $j^{\rm th}$ row of ${\bf K}$.
The same black box allows us to perform the maps
$\ket{j,k,z} \rightarrow \ket{j,k,z\oplus \bar \kappa_{jk}}$
and $\ket{j,z} \rightarrow \ket{j,z \oplus \bar m_j}$, for any $j,k \in [N]$, where $z$, $\bar \kappa_{jk}$, and $\bar m_j$
are assumed to be given as bit strings. (One could use three different oracles for these maps, but we combine them into one that we call $\cS$.)
Specifically, if $m_{\max} \ge m_j$ for all $j\in[N]$, then 
$\ket{\bar m_j}$ denotes a basis state determined from the bits in the binary fraction
$m_j = m_{\max} [.b_{j,1} b_{j,2} \ldots]$, where $b_{j,i} \in \{0,1\}$. Similarly, if $\kappa_{\max} \ge \kappa_{jk}$ for all $j,k \in [N]$, then $\ket{\bar \kappa_{jk}}$ denotes a basis state determined from the bits in the binary fraction 
$\kappa_{jk}=\kappa_{\max} [.c_{jk,1} c_{jk,2}\ldots]$, where
$c_{jk,i} \in \{0,1\}$. This access model simplifies some calculations and does not require knowing the important quantities in any specific units. 

Similar to Refs.~\cite{BAC07,BC12}, we assume that any number of bits can be given as the output of $\cS$, with it still being accounted for as a single oracle call. However, for computations with this output, we will only use a limited number of bits, the choice of which is governed by the parameters of the problem, particularly the approximation errors. 
In particular, we denote by $r_m$ and $r_\kappa$ the number of bits used to represent $m_j$ and $\kappa_{jk}$.
Then $m_j=m_{\max} \bar m_j/2^{r_m}$ and $\kappa_{jk}=\kappa_{\max} \bar \kappa_{jk}/2^{r_\kappa}$.
These approximations as well as the choices for $r_m$ and $r_\kappa$ are discussed below, where we show how to use $\cS$ to gain access to ${\bf B}$ or ${\bf H}$.

\subsection{Access to {\bf B} and {\bf H}}

We explain how to use $\cS$ and other gates to access 
${\bf H}$, the Hamiltonian in Eq.~\eqref{eq:hamiltonian}. This access is required by Hamiltonian simulation methods in Refs.~\cite{BCC+15,BCK15,LC17,LC19}. We do this by showing a block encoding of the relevant matrices using known methods for state preparation via inequality testing~\cite{SandersPRL18}, i.e., by constructing a unitary such that one of its blocks approximates ${\bf H}/\Lambda$, where $\Lambda$ is needed for normalization reasons and given below. We explain how to 
implement that unitary operation using elementary gates.

In our construction, we start by
providing a block encoding of ${\bf B}$ (and ${\bf B}^\dagger$). This is the $N \times M$ matrix discussed in Sec.~\ref{sec:quantumevolution}.
Because $M=N(N+1)/2$ is not a power of two in general ($N=2^n$), 
it is convenient to pad ${\bf B}$ with zeroes so that its dimension is $N \times N^2$, instead. That is, here we consider and describe the simulation of ${\bf H}$ in a larger-dimensional space, where many amplitudes of the evolved state will be zero and the others will coincide with those of $\sket{\psi(t)}$ in Eq.~\eqref{eq:encodedstate}. With this padding, the matrices ${\bf B}$ and ${\bf H}$ will be then of dimension $N \times N^2$ and $2N^2 \times 2N^2$, respectively, in the following analyses.
First we prove:

\begin{lemma}
\label{lem:blockencodingB}
Let $\epsilon' >0$, $m_{\max} \ge m_j \ge m_{\min}$ and $\kappa_{\max} \ge \kappa_{jk}$ for all $j,k \in[N]$, and $\aleph := \kappa_{\max}/m_{\min}$. Consider the $d$-sparse $N \times N^2$ matrix ${\bf B}$ 
obtained from Sec.~\ref{sec:quantumevolution} (with the padding). Then, there exists a unitary $\cU_{\bf B}$ acting on $2n+r+2$ qubits with $r=\cO(\log(1/\epsilon'))$ that provides a block encoding of ${\bf B}$ to within additive error $\epsilon'$ as follows:
\begin{align}
\left \| (\one_N \otimes \bra 0^{\otimes n} \otimes \bra {0}^{\otimes r+2} ) \cU_{\bf B} (\one_N \otimes \one_N \otimes\ket {0}^{\otimes r+2}) - \frac{1}{\sqrt{2\aleph d}} {\bf B} \right \| \le \epsilon' \;.
\end{align}
The quantum circuit that implements $\cU_{\bf B}$ makes $\cO(1)$ uses of $\cS$ and its inverse, in addition to $\cO(n + \log^2(m_{\max}/( m_{\min}\epsilon')) )$ two-qubit gates. 
\end{lemma}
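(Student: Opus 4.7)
The plan is to construct $\cU_{\bf B}$ using the inequality-testing sparse-matrix block encoding framework of \cite{SandersPRL18}, exploiting the fact that $\bf B$ has row sparsity $d$ and column sparsity $2$: each column indexed by $\ket{j,k}$ has nonzero entries only at rows $j$ and (if $k>j$) $k$, with values $\pm\sqrt{\kappa_{jk}/m_\cdot}$ bounded by $\sqrt{\aleph}$. Combined with the standard $\sqrt{d_R d_C}\,\|{\bf B}\|_{\max}$ sparse-matrix normalization, this is exactly $\sqrt{2\aleph d}$, matching the lemma. The rough proof strategy is (a) build a circuit that, on input $\ket{j,k}$, prepares a suitably weighted and signed superposition over the at most two rows where column $(j,k)$ is nonzero, and (b) convert that into a block encoding via inequality testing against a uniform register.

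Explicitly, on input $\ket{j,k}\ket{0}^{\otimes n}\ket{0}^{\otimes r+2}$, I would: (i) use a coherent comparison $j\stackrel{?}{=}k$ (with $\cO(n)$ Toffolis) to either keep a row-selector ancilla in $\ket 0$ (when $j=k$) or place it in $\tfrac{1}{\sqrt 2}(\ket 0+\ket 1)$ (when $j<k$); (ii) controlled on the selector, copy $j$ or $k$ into the $n$-qubit row register with $\cO(n)$ CNOTs; (iii) invoke $\cS$ to load $\bar\kappa_{jk}$ and $\bar m_\sigma$, where $\sigma\in\{j,k\}$ is governed by the selector, into temporary $r_\kappa$- and $r_m$-bit registers; (iv) apply a reversible arithmetic circuit that computes the $r$-bit approximation of $v:=\sqrt{\kappa_{jk}/(m_\sigma \aleph)}\in[0,1]$ via integer division followed by an integer square root; (v) prepare a uniform superposition on a fresh $r$-bit register, compare to $v$, and write the inequality-test outcome to a flag qubit; (vi) apply a $(-1)$ phase conditional on the selector being $\ket 1$ to realize the Laplacian sign; and (vii) uncompute all temporary registers with inverse $\cS$ queries. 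The $r+2$ persistent ancillas comprise the selector, the flag, and the $r$-bit comparand, and projecting them onto $\ket 0^{\otimes r+2}$ together with $\bra 0^{\otimes n}$ on the duplicated column register extracts ${\bf B}/\sqrt{2\aleph d}$ on the row register.

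For the error analysis I would propagate rounding through each arithmetic step. The truncations give $|\bar m_j/m_j-1|=\cO(2^{-r_m} m_{\max}/m_{\min})$ and $|\bar\kappa_{jk}-\kappa_{jk}|=\cO(2^{-r_\kappa}\kappa_{\max})$, so the absolute error in $\kappa_{jk}/m_j$ is $\cO(\aleph(2^{-r_m}m_{\max}/m_{\min}+2^{-r_\kappa}))$; the square root is $1/2$-Hölder continuous and bounded above by $\sqrt\aleph$, so picking $r_m=\cO(\log(m_{\max}/(m_{\min}\epsilon')))$ and $r_\kappa=\cO(\log(1/\epsilon'))$ produces per-entry error at most $\epsilon'\sqrt\aleph$ after the rescaling by $1/\sqrt{2\aleph d}$. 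Since each column of $\bf B$ has at most two nonzero entries, the column $1$-norm of the entrywise error is at most $2\epsilon'/\sqrt{d}$, and the standard argument bounds the operator-norm error by $\epsilon'$ up to constants. The main subtlety I expect is the integer square root behaving poorly when $\kappa_{jk}$ lies below the oracle resolution; I would circumvent this by noting the lemma's error is additive, so entries below $\epsilon'/\sqrt{\aleph d}$ can be replaced by zero without affecting the bound.

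Finally, for the gate count: the comparator, row-copy, selector prep, and sign phase are each $\cO(n)$; reversible division and integer square root on operands of $r_m$ and $r_\kappa$ bits together cost $\cO(r_m r_\kappa)=\cO(\log^2(m_{\max}/(m_{\min}\epsilon')))$; and the uniform-prep plus inequality test costs $\cO(r)=\cO(\log(1/\epsilon'))$. Uncomputation doubles these counts. Summing gives $\cO(n+\log^2(m_{\max}/(m_{\min}\epsilon')))$ elementary gates together with $\cO(1)$ queries to $\cS$ and $\cS^{-1}$, as claimed.
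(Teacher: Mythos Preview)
Your construction has a structural gap that prevents it from yielding the block encoding stated in the lemma. You propose to start from the column index $\ket{j,k}$ and prepare the row index $\sigma\in\{j,k\}$ in a \emph{fresh} $n$-qubit register. But the lemma's block encoding requires the output row to live in the \emph{first} of the two existing $n$-qubit registers, with the second projected to $\bra 0^{\otimes n}$. Your circuit never resets either of the original column registers to $\ket 0$, and it cannot: once you have only $\sigma$ in hand, the other half of the column index is not uniquely recoverable, so no uncomputation is possible. As written, the construction also uses $3n+r+2$ qubits rather than the stated $2n+r+2$.

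More tellingly, your circuit never invokes the $d$-sparsity oracle of Eq.~\eqref{eq:blackboxneighbor}; it only uses the fact that each column has at most two nonzero rows. Consequently the only amplitude suppressions are the $1/\sqrt 2$ from the selector and the $\sqrt{\kappa_{jk}/(m_\sigma\aleph)}$ from the inequality test, which together give normalization $\sqrt{2\aleph}$, not $\sqrt{2\aleph d}$. That would be impossible in general, since $\|{\bf B}\|=\sqrt{\|{\bf A}\|}$ can be as large as $\sqrt{\aleph d}$ and any block encoding must have normalization at least $\|{\bf B}\|$. You correctly cite the $\sqrt{d_R d_C}\,\|{\bf B}\|_{\max}$ scaling, but the standard construction achieving it requires matched preparations on \emph{both} the row and column sides; you have described only the column side.

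The paper's fix is to block-encode ${\bf B}^\dagger$ instead and then take the adjoint. Since ${\bf B}^\dagger:\mathbb C^N\to\mathbb C^{N^2}$ maps a single index $\ket j$ into a state on $2n$ qubits, a one-sided ``state preparation'' block encoding works cleanly: use the sparsity oracle to map $\ket j\ket 0^{\otimes n}\mapsto \tfrac 1{\sqrt d}\sum_{\ell}\ket j\ket{a(j,\ell)}$ over the $d$ neighbours (this is where the $1/\sqrt d$ enters), apply the amplitude $\sqrt{\kappa_{jk}/(m_j\aleph)}$ by inequality testing against $x^2$ (the inequality is rearranged to $\kappa_{jk}\le x^2\aleph\, m_j$, so no explicit square root or division is needed), then conditionally swap the two registers when $k<j$ and apply $HZ$ to the swap-flag qubit to produce the $\pm 1/\sqrt 2$ factor and the Laplacian sign. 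Projecting the $r+2$ ancillas to zero gives exactly ${\bf B}^\dagger\ket j/\sqrt{2\aleph d}$ on the $2n$-qubit register, and conjugating yields the lemma's statement for ${\bf B}$. Your error and gate-count analysis is essentially sound and transfers to this corrected construction with the same bit choices $r_m=\cO(\log(m_{\max}/(m_{\min}\epsilon')))$ and $r_\kappa=\cO(\log(1/\epsilon'))$.
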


Next, we use this result to provide a block encoding for the $2N^2 \times 2N^2$ matrix ${\bf H}$:

\begin{lemma}
\label{lem:blockencodingH}
Let $\epsilon' >0$, $m_j \ge m_{\min}$ and $\kappa_{\max} \ge \kappa_{jk}$ for all $j,k \in[N]$, and $\aleph := \kappa_{\max}/m_{\min}$.
Consider the $d$-sparse $2N^2 \times 2N^2$ matrix ${\bf H}$ obtained from Sec.~\ref{sec:quantumevolution} (with the padding). Then, there exists a unitary $\cU_{\bf H}$ acting on $2n+r+4$ qubits with $r=\cO(\log(1/\epsilon'))$ that provides a block encoding of ${\bf H}$ to within additive error $\epsilon'$ as follows:
\begin{align}
 \left \| \bra 0^{\otimes r+3}  \cU_{\bf H} \ket 0^{\otimes r+3} - \frac 1 {\sqrt{2 \aleph d}}{\bf H} \right \| \le \epsilon' \;.
\end{align}
The quantum circuit that implements $\cU_{\bf H}$ uses a controlled version of $\cU_{\bf B}$ and its inverse once, and additional $\cO(n)$ two-qubit gates (e.g., CNOTs and other single qubit gates).
\end{lemma}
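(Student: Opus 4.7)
The plan is to build $\cU_{\bf H}$ as a thin wrapper around the block encoding $\cU_{\bf B}$ guaranteed by Lemma~\ref{lem:blockencodingB}, exploiting the off-diagonal block structure of ${\bf H}$ in Eq.~\eqref{eq:hamiltonian}. Viewing ${\bf B}$ as padded with zero rows into an $N^2\times N^2$ matrix $\widetilde{\bf B}$ (so that ${\bf H}$ becomes the $2N^2\times 2N^2$ operator $-|0\rangle\!\langle 1|_c\otimes\widetilde{\bf B}-|1\rangle\!\langle 0|_c\otimes\widetilde{\bf B}^\dagger$, where $c$ is one of the $2n+1$ system qubits of ${\bf H}$), I would define
$$\cU_{\bf H}\;=\;-\,(X_c\otimes\one)\cdot\Bigl(|1\rangle\!\langle 1|_c\otimes\cU_{\bf B}+|0\rangle\!\langle 0|_c\otimes\cU_{\bf B}^\dagger\Bigr).$$
The controlled branch with $c=1$ applies $\cU_{\bf B}$ and so realizes the top-right $\widetilde{\bf B}$ block, the branch with $c=0$ applies $\cU_{\bf B}^\dagger$ and realizes the bottom-left $\widetilde{\bf B}^\dagger$ block, while $X_c$ performs the off-diagonal flip $c\leftrightarrow 1-c$; the overall $-1$ reproduces the sign of ${\bf H}$ and can be implemented via a phase on one extra flag ancilla. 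The ancilla register then consists of the $r+2$ ancillas of $\cU_{\bf B}$ plus this flag qubit, for a total of $r+3$, matching the statement.

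To verify the block encoding, I would apply $\cU_{\bf H}$ to basis inputs $|c\rangle|v\rangle_{\rm sys}|0\rangle_{\rm anc}$ and use the defining identity of Lemma~\ref{lem:blockencodingB} (together with its adjoint) to extract the block-encoded vectors $\widetilde{\bf B}|v\rangle/\sqrt{2\aleph d}$ and $\widetilde{\bf B}^\dagger|v\rangle/\sqrt{2\aleph d}$ from the components orthogonal to $|0\rangle_{\rm anc}$. The $X_c$ flip then puts them in the correct off-diagonal slots, so the all-zero-ancilla block of $\cU_{\bf H}$ reproduces ${\bf H}/\sqrt{2\aleph d}$. Because the two controlled branches act on orthogonal $c$-subspaces, the error $\epsilon'$ in $\cU_{\bf B}$ propagates linearly (without amplification) to the same error $\epsilon'$ in $\cU_{\bf H}$. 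The gate count is then transparent: one call to a controlled $\cU_{\bf B}$ and one to a controlled $\cU_{\bf B}^\dagger$, plus $X_c$, the sign-fix phase gate, and an $\cO(n)$-size Clifford circuit.

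The principal subtlety, and where the advertised $\cO(n)$ additional two-qubit gates are spent, is the rectangular nature of ${\bf B}$: its ``output'' register holds an $n$-qubit row index with the remaining $n$ qubits flagged to $|0\rangle^{\otimes n}$, whereas its ``input'' register uses the full $2n$ qubits for a column index. These roles interchange between the $c=1$ branch (which applies $\cU_{\bf B}$) and the $c=0$ branch (which applies $\cU_{\bf B}^\dagger$), so a linear-in-$n$ permutation of system qubits, together with the flag qubit, is needed to make the all-zero-ancilla projections of both branches land in the correct block of the padded $2N^2\times 2N^2$ matrix ${\bf H}$. This bookkeeping is the one technical point of the proof; everything else reduces to direct computation with Lemma~\ref{lem:blockencodingB}.
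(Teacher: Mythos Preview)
Your high-level construction---apply $\cU_{\bf B}$ or $\cU_{\bf B}^\dagger$ controlled on the block qubit $c$, then flip $c$ with $X_c$---is exactly the paper's approach: the paper writes $\cU_{\bf H}=-\cU_{\rm cond}\bigl(\ket 0\!\bra 1_c\otimes\one_2\otimes\cU_{\bf B}\bigr)+{\rm H.c.}$, which unpacks to the same off-diagonal structure you wrote down. So the strategy is right.

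The gap is in your handling of the ``subtlety''. Lemma~\ref{lem:blockencodingB} only pins down the block $(\one_N\otimes\bra 0^{\otimes n})\,M$ of the $N^2\times N^2$ matrix $M:=\bra 0^{\otimes r+2}\cU_{\bf B}\ket 0^{\otimes r+2}$; the remaining $N^2-N$ rows of $M$ (those with the second $n$-qubit register $\neq\ket 0^{\otimes n}$) are completely uncontrolled. Your $\cU_{\bf H}$ therefore block-encodes $-\ket 0\!\bra 1_c\otimes M-{\rm H.c.}$, not $-\ket 0\!\bra 1_c\otimes\widetilde{\bf B}/\sqrt{2\aleph d}-{\rm H.c.}$, and these differ precisely in those uncontrolled rows. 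A permutation of system qubits is unitary and cannot zero out matrix entries, so your proposed ``linear-in-$n$ permutation'' cannot repair this. Likewise the global $-1$ phase is trivial and does not require an ancilla, so the extra qubit is not spent where you say it is.

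What is actually needed---and what the paper does---is to multiply $M$ on the left by the projector $\one_N\otimes\ketbra 0^{\otimes n}$, which kills the unspecified rows and yields $\widetilde{\bf B}/\sqrt{2\aleph d}$. Since a projector is not unitary, it must itself be block-encoded; the paper uses one extra ancilla and the conditional reflection $\cU_{\rm cond}=\ketbra +\otimes(2\ketbra 0^{\otimes n}-\one_N)+\ketbra -\otimes\one_N$, which satisfies $\bra 0\cU_{\rm cond}\ket 0=\ketbra 0^{\otimes n}$ and costs $\cO(n)$ gates. That is where the $(r+3)$rd ancilla and the $\cO(n)$ two-qubit gates go.
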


Combining Lem.~\ref{lem:blockencodingB} and Lem.~\ref{lem:blockencodingH}, a block encoding for ${\bf H}/\Lambda$, where $\Lambda:= \sqrt{2 \aleph d}$, can be implemented within additive error $\cO(\epsilon')$ in spectral norm
using the oracles $\cS$, together with its inverse and controlled versions, $\cO(1)$
times, in addition to $\cO(n + \log^2(m_{\max}/( m_{\min}\epsilon')))$ two-qubit gates. 

\subsection{State preparation using inequality testing}

Before presenting the proofs of the lemmas, we revisit how inequality testing can be used for state preparation \cite{SandersPRL18}, since our constructions use this approach, which is more efficient than controlled rotation of an ancilla qubit as in Ref.~\cite{HHL}.
Let $\{\beta_1, \ldots, \beta_N\}$ be such that $\beta_j \ge 0$ can be expressed using $r$ bits (see below) and $\beta \ge \beta_j$ for all $j \in [N]$. Assume we have access to an oracle that, on input $\ket{j,z}$ outputs $\ket{j,z \oplus \bar \beta_j}$, where $z$ and $\bar \beta_j \in [2^r]$ are given as bit strings of size $r$ (we will fix $z=0\ldots 0$). That is, in this notation, $\sket{\bar \beta_j}$ is a  basis state, and the bits are obtained from the binary fraction $\beta_j=\beta [.b_1 \ldots b_r]$. The method 
in Ref.~\cite{SandersPRL18} can be used to perform:
\begin{align}
\label{eq:inequalitytesting}
    \ket j \ket 0 \rightarrow \ket j \left(\frac{\beta_j}{\beta} \ket{0}^{\otimes r} \ket{0} + \ket{\omega_j} \right) \; ,
\end{align}
where $\ket{\omega_j}$ is a state orthogonal to $\ket{0}^{\otimes r} \ket{0}$.
The basic steps of the method are as follows:
\begin{enumerate}
    \item Apply the oracle to compute $\bar \beta_j$, i.e.,  perform the map $\ket j \mapsto \ket {j,\bar \beta_j}$. 
    \item Prepare the equal superposition state of $r$ ancilla qubits, i.e., $\frac 1 {2^{r/2}} \sum_{x=1}^{2^r} \ket x$.
    \item Apply inequality testing to prepare 
    $\ket {j,\bar \beta_j} \frac 1 {2^{r/2}} \left(\sum_{x=1}^{\bar \beta_j} \ket x \ket 0+ \sum_{\bar \beta_j+1}^{2^r} \ket x \ket 1 \right)$.
    \item Apply Hadamards on the $r$ ancilla qubits.
    \item Apply the inverse of the oracle to reverse the computation of $\bar \beta_j$, i.e., perform the map $\ket {j,\bar \beta_j}\mapsto \ket{j,0}$.
\end{enumerate}
After these operations there will be amplitude $\bar\beta_j/2^r = \beta_j/\beta$ on $\ket{0}^{\otimes r} \ket{0}$.
This is because the amplitude can be found by taking the inner product of the state in step 3 with $\frac 1 {2^{r/2}} \sum_{x=1}^{2^r} \ket x \ket 0$.
The method requires one use of the oracle and its inverse, 
in addition to $\cO(r)$ Hadamard gates for steps 1 and 4, and $\cO(r)$ gates for the inequality test in step 3.
It is also possible to compute more complicated functions of the oracle by rearranging the inequality in step 3, which is the method we will use here.

\subsection{Proof of Lemma~\ref{lem:blockencodingB}}

    We consider a block encoding of ${\bf B}^\dagger$ for simplicity,
and later use it to provide a block encoding of ${\bf B}$ by taking the conjugate transpose. Using the padding described above, the dimension of ${\bf B}^\dagger$
is $N^2 \times N$ and its entries are of the form $\sqrt{\kappa_{jk}/m_j}$ or zero; specifically, the definition of ${\bf B}$ in Sec.~\ref{sec:quantumevolution} (obtained from Eq.~\eqref{eq:Bdaggeractionmain}) implies
\begin{align}
\label{eq:Bdaggeraction}
      {\bf B}^\dagger \ket j =   \left(\sum_{k \ge j}  \sqrt{\frac{\kappa_{jk}}{{m_j }}}\ket{j} \ket k  \right)-\left(\sum_{k<j}\sqrt{\frac{\kappa_{jk}}{{m_j }}} \ket{k}\ket {j}  \right)\;.
\end{align}
We will construct a unitary $\cU_{\bf B}^\dagger$ that is a block encoding of ${\bf B}^\dagger$ following simple steps and then explain how these can be simulated with quantum circuits.
The steps require using a work register of qubits for some computations
that we discard at the end.

\begin{enumerate}
    \item Apply a unitary that performs the map
\begin{align}
\label{eq:equalsuperposition}
    \ket 0^{\otimes n} \mapsto \frac 1 {\sqrt d} \sum_{\ell=1}^d \ket {\ell} \;.
\end{align} 
\item Apply the oracle $\cS$ for the positions of non-zero entries of ${\bf K}$ to map $\ket{\ell}$ to $\ket{a(j,\ell)}$ according to Eq.~\eqref{eq:blackboxneighbor}.
\item Apply the oracle $\cS$ two more times to compute $\bar m_j$ and $\bar \kappa_{jk}$, i.e.,  perform the map $\ket{j,0} \mapsto \ket{j,\bar m_j}$ and $\ket{j,k,0}\mapsto \ket{j,k,\bar \kappa_{jk}}$.
    \item Prepare the equal superposition state of $r$ ancilla qubits, i.e., $\frac 1 {2^{r/2}} \sum_{x=1}^{2^r} \ket x$, where $r$ is given below.
\item Using coherent arithmetic, compute the square of $x$ and multiplications needed for the inequality test
\begin{equation}\label{eq:Bblockineq}
       \frac{  \kappa_{\max} \bar\kappa_{jk}}{2^{r_\kappa}} \le \frac{x^2}{2^{2r}} \aleph  \frac{ m_{\max} \bar m_j} {2^{r_m}} \, ,
\end{equation}
where $r_\kappa$ and $r_m$ are the numbers of bits of $\bar\kappa_{jk}$ and $\bar m_j$, respectively, also determined below.
This gives the factor in the amplitude approximately $\sqrt{\kappa_{jk}/(m_j \aleph)}$. 
\item Reverse the computation of the arithmetic for Eq.~\eqref{eq:Bblockineq} and the oracles for $\bar m_j$ and $\bar \kappa_{jk}$.
This transforms the working registers back to an all-zero state.
\item Apply inequality testing (outputting the result in an ancilla qubit) and a controlled-swap operation to perform the map $\ket j \ket k \ket 0 \mapsto \ket k \ket j \ket 1$ if $k<j$ or leave the state $\ket j \ket k \ket 0$ invariant otherwise.
\item Apply $HZ$ on the ancilla qubit of the previous step to implement $\ket 0 \mapsto \frac 1 {\sqrt 2} (\ket 0 + \ket 1)$ and $\ket 1 \mapsto \frac 1 {\sqrt 2} (-\ket 0 + \ket 1)$.
\end{enumerate}

The previous sequence of unitaries define $\cU_{\bf B}^\dagger$.
To show that the method is correct, consider steps 1-6.
If we discard the working registers (used to store $\bar m_j,\bar \kappa_{jk}$ and perform the arithmetic), these steps combined implement approximately
\begin{align}\label{eq:step5}
    \ket j \ket 0^{\otimes n+r} \ket 0& \mapsto 
     \ket j \frac 1 { \sqrt{\aleph d}}  \sum_{k} \ket k  \sqrt{\frac{\kappa_{jk}}{m_j}} \ket 0^{\otimes r}\ket 0 +\ket{\omega_j} \;,
\end{align}
for some state $\ket{\omega_j}$ that is orthogonal to $\ket 0^{\otimes r}\ket 0$ on the ancilla qubits.
We want the the error in this approximation to be $\cO(\epsilon')$, and we show how to achieve this below.
Step 7 is then needed to rearrange the sum depending on whether $k<j$ or $k\ge j$, according to Eq.~\eqref{eq:Bdaggeraction}.
This step transforms Eq.~\eqref{eq:step5} to
\begin{align}
    \label{eq:step7}
      \frac 1 { \sqrt{\aleph d}}  \sum_{k \ge j}\ket j \ket k  \ket 0 \sqrt{\frac{\kappa_{jk}}{m_j}}\ket 0^{\otimes r}\ket 0  +  \frac 1 { \sqrt{\aleph d}}  \sum_{k < j} \ket k \ket j \ket 1 \sqrt{\frac{\kappa_{jk}}{m_j}}\ket 0^{\otimes r}\ket 0 + \ket{\omega'_j}   \; ,
\end{align}
where $\ket{\omega'_j}$ is still orthogonal to $\ket 0^{\otimes r}\ket 0$ on the ancilla qubits.
Applying step 8 to Eq.~\eqref{eq:step7}, and considering the part of the state where the ancillas are in $\ket 0\ket 0^{\otimes r}\ket 0$ only, we obtain
\begin{align}
    \label{eq:step8}
      \frac 1 { \sqrt{ 2 \aleph d}} \left( \sum_{k \ge j}\sqrt{\frac{\kappa_{jk}}{m_j}} \ket j \ket k  -  \sum_{k < j} \sqrt{\frac{\kappa_{jk}}{m_j}} \ket k \ket j   \right) \ket 0\ket 0^{\otimes r}\ket 0 \;.
\end{align}
This coincides with Eq.~\eqref{eq:Bdaggeraction} if we drop the normalization factor and project onto the subspace specified by $\ket 0\ket 0^{\otimes r}\ket 0$ of the ancillas. Hence, the sequence of steps defines the desired unitary $\cU_{\bf B}^\dagger$ that satisfies, for all $j \in [N]$,
\begin{align}
    \bra 0^{\otimes r+2}\cU_{\bf B}^\dagger \ket j \ket{0}^{\otimes n}\ket{0}^{\otimes r+2} \approx_{\epsilon'} \frac 1 { \sqrt{2 \aleph d}} {\bf B}^\dagger \ket j \;.
\end{align}
Equivalently, $(\one_N \otimes \bra{0}^{\otimes n}\otimes \bra 0^{\otimes r+2})\cU_{\bf B}  (\one_N \otimes \one_N \otimes \ket{0}^{\otimes r+2}) \approx_{\epsilon'}\frac 1 { \sqrt{2\aleph d}} {\bf B}$.

Note that the factor of $1/\sqrt{d}$ comes from a single sparse state preparation (with amplitudes $\sqrt{\kappa_{jk}/m_j}$).
In contrast, the block encoding based on \cite{BC12} involves a matched preparation and inverse preparation, each of which gives a factor of $1/\sqrt{d}$ for an overall factor of $1/d$ for Hamiltonian simulation.
In particular, the state preparation as in Lemma 4 of \cite{BC12} gives a factor of $1/\sqrt{d}$, then the step of the walk as in Eq.~(23) of \cite{BC12} involves a matched preparation and inverse preparation to implement the reflection.

The above steps can be  
implemented with a quantum circuit as follows. The unitary in step 1
 (Eq.~\eqref{eq:equalsuperposition}) is simply obtained from the action of $\log(d)$ Hadamard gates if $d$ is a power of 2, or otherwise can be performed with high precision by amplitude amplification (see App.~E.2 of \cite{Sanders2020}).
The gate complexity is $\mathcal{O}(\log d)=\mathcal{O}(n)$, with an amplitude for success that is very close to 1.
When allowing arbitrary qubit rotations in the gate set, as we do here, then the amplitude for success can be made exactly one, so we need no correction here.
The gate complexity in step 4 depends on the number of bits needed to represent $m_j$, $\kappa_{jk}$, and $x$ with sufficient accuracy.
The leading-order gate complexity is $\cO(r^2+rr_m)$ for computing $x^2$ and $x^2\times \bar m_j$.
We also need to multiply by $\aleph m_{\max}/\kappa_{\max}=m_{\max}/m_{\min}$.
The number of bits needed is no more than that in $x^2\times \bar m_j$, which is $\cO(r+r_m)$, giving gate complexity $\cO((r+r_m)^2)$.
The divisions by powers of 2 just involve bit shifts with no gate cost.
The gate complexity for the inequality test is linear in the number of bits, so is smaller than the multiplication cost.

It suffices to set $r=\cO(\log(1/\epsilon'))$ for overall precision $\cO(\epsilon')$, since this choice would imply that the coefficients in Eq.~\eqref{eq:step5} are given with that precision.
To choose $r_m,r_\kappa$, we note that our computations of $m_j$ and $\kappa_{jk}$ are effectively done within additive error $\delta_m=\cO(m_{\max} /2^{r_m})$ and
$\delta_\kappa=\cO(\kappa_{\max} /2^{r_\kappa})$, respectively, since $\bar m_j$ and $\bar \kappa_{jk}$ give $m_j$ and $\kappa_{jk}$ relative to $m_{\max}$ and $\kappa_{\max}$. That is, we are effectively giving the factor in the amplitude
\begin{align}
    \sqrt{\frac{\kappa_{jk}+\delta_\kappa}{(m_j+\delta_m)\aleph}}\;.
\end{align}
For additive error $\cO(\epsilon')$ in the above, we can let $\delta_m=\cO(m_j \epsilon')$, i.e., $m_j$ is computed within multiplicative error $\cO(\epsilon')$.
This choice would imply $r_m =\cO(\log(m_{\max}/(m_{\min} \epsilon'))$.
Similarly using the error propagation formula for $\kappa_{jk}$ indicates that the error is largest for small $\kappa_{jk}$, so the choice of $r_\kappa$ would depend on $\kappa_{\min}$.
However, the worst the error can be is rounding down a value by $\delta_\kappa$ to zero, which would imply an error $\cO(\sqrt{\delta_\kappa/\kappa_{\max}})$.
That implies we can choose $\delta_\kappa =\cO(\kappa_{\max} (\epsilon')^2)$ and therefore $r_\kappa=\cO(\log(1/ \epsilon'))$.

These choices of numbers of bits imply the complexity of the squaring and multiplications at most $\cO(\log^2(m_{\max}/(m_{\min} \epsilon'))$.
Last, all unitaries in steps 7 and 8 can be implemented with gate complexity $\cO(n)$ using standard techniques.
These imply the gate complexities stated in Lem.~\ref{lem:blockencodingB}.
    
\qed

\subsection{Proof of Lemma~\ref{lem:blockencodingH}}

The result of Lem.~\ref{lem:blockencodingH}
is a direct consequence of Lem.~\ref{lem:blockencodingB}. 
Recall that
\begin{align}
    {\bf H}&=-\begin{pmatrix} {\bf 0}& {\bf B}\cr {\bf B}^\dagger & {\bf 0}\end{pmatrix} \;,
\end{align}
where we use ${\bf 0}$ to denote all-zero matrices whose dimensions are clear from context.
Assuming that ${\bf B}$ is of dimension $N \times N^2$ following Lem.~\ref{lem:blockencodingB}, then ${\bf H}$ would be of dimension $(N+N^2)\times (N+N^2)$. However, because it is easier to work 
with square matrices, we will further pad ${\bf B}$ and ${\bf B}^\dagger$ with more zeroes, to make them of dimension $N^2 \times N^2$, implying that the dimension of ${\bf H}$ is now $(2N^2)\times(2N^2)$. That is, ${\bf H}$
acts 
on a space of $2n+1$ qubits. (The evolved state will have nonzero amplitude on a subspace of dimension $N+M$ only, where $M=N(N+1)/2$.) The previous padding is equivalent
to replacing ${\bf B} \ket{j,k} \rightarrow ({\bf B} \ket{j,k}) \otimes \ket 0^{\otimes n}$ and 
${\bra{j,k} }{\bf {B}}^\dagger \rightarrow ({\bra{j,k} }{\bf {B}}^\dagger) \otimes \bra 0^{\otimes n}$.

With this small modification,
Lem.~\ref{lem:blockencodingB} implies
\begin{align}
&\frac {\bf H} {\sqrt{2\aleph d}}= \nn
&-\begin{pmatrix} {\bf 0}& (\one_N \otimes \ketbra 0^{\otimes n} \otimes \bra 0^{\otimes r+2} ) \cU_{\bf B} (\one_N \otimes \one_N \otimes \ket 0^{\otimes r+2})\cr  (\one_N \otimes \one_N \otimes \bra 0^{\otimes r+2})\cU_{\bf B}^\dagger (\one_N \otimes \ketbra 0^{\otimes n} \otimes \ket 0^{\otimes r+2} )  & {\bf 0}\end{pmatrix},
\end{align}
or, equivalently, $\frac {\bf H} {\sqrt{2\aleph d}}=\bra 0^{\otimes r+2} \tilde \cU_{\bf H} \ket 0^{\otimes r+2}$,
where
\begin{align}
    \tilde \cU_{\bf H}: =- \ket 0 \! \bra 1 \otimes ((\one_N \otimes \ketbra 0^{\otimes n}\otimes  \one_2^{\otimes r+2})  \cU_{\bf B}) + H.c. .
\end{align}
Here, $H.c.$ denotes the conjugate transpose of the first term.
This operator is not yet unitary because $\ketbra 0^{\otimes n}$ is not unitary (it is a projector). However, it is simple to construct a block encoding for a projector as follows.  We bring one additional ancilla and implement a conditional unitary operation on the state of the ancilla
that is 
\begin{align}
    \ketbra + \otimes (2 \ketbra 0^{\otimes n} - \one_N) + \ketbra - \otimes \one_N \;.
\end{align}
Because $\langle 0 \ketbra + 0\rangle=\langle 0 \ketbra -  0 \rangle=1/2$, 
this gives
\begin{align}
    \bra 0 \left(  \ketbra + \otimes (2 \ketbra 0^{\otimes n} - \one_N) + \ketbra - \otimes \one_N \right) \ket 0 = \ketbra 0^{\otimes n} \;.
\end{align}
Applying $\bra 0 \cdots \ket 0$ to this operator gives the block encoding
\begin{align}
    \bra 0 \left( \ketbra + \otimes (2 \ketbra 0^{\otimes n} - \one_N) + \ketbra - \otimes \one_N \right) \ket 0 = \ketbra 0^{\otimes n} \;,
\end{align}
which implements the desired projector.
We write $\cU_{\rm cond}$ for this unitary when acting on the system of $2n+1$ qubits (the space associated with ${\bf H}$) plus $r+3$ ancilla qubits. Hence, our block encoding for ${\bf H}$ is
\begin{align}
  \frac 1 {\sqrt{2\aleph d}}  {\bf H} = \bra 0^{\otimes r+3}  \cU_{\bf H} \ket 0^{\otimes r+3} \;,
\end{align}
where
\begin{align}
    \cU_{\bf H} : = -\cU_{\rm cond} (\ket 0 \! \bra 1 \otimes \one_2 \otimes \cU_{\bf B}) + H.c..
\end{align}
Here $\one_2$ acts on the extra qubit used for block encoding the projector.

Simulating $\cU_{\bf H}$ with a quantum circuit requires applying a controlled version of $\cU_{\bf B}$ and $\cU_{\bf B}^\dagger$ once, in addition to a simple $X$ gate on the controlled qubit.
Simulating $\cU_{\rm cond}$ can be done with gate complexity $\cO(n)$, since it requires applying a conditional phase on $\ket{0}^{\otimes n}$.

\qed

\section{Oracle lower bound and proof of Lemma~\ref{lem:kineticgluedtrees}}
\label{app:oracleproof}

Consider the glued-trees oscillator network in Fig.~\ref{fig:gluedtrees}. 
Let any node represent a unit mass (i.e., $m_j=m_{\max}=m_{\min}=1$ for all $j\in[N]$) and each edge represent a spring of constant 1 (i.e., $\kappa_{jk}=\kappa_{\max}=1$ if $(j,k)$ is an edge and $\kappa_{jk}=0$ otherwise). The ${\rm ENTRANCE}$ and ${\rm EXIT}$ masses are additionally connected to a wall each with a spring of constant also $1$. We write $\vec x(t) \in \mathbb R^N$ and $\dot {\vec x}(t) \in \mathbb R^N$ for the positions and velocities the masses, where we assume their motion is constrained to one spatial dimension, such as the ``horizontal'' direction. 
Although we do not know the actual names of the masses a priori, our labels are such that $j=1$ refers to the ${\rm ENTRANCE}$, $j=2,3$ refer to the masses in the second column, and so on, until $j=N=2^{n+1}-2$ represents the ${\rm EXIT}$ mass. The energy of the system is $E=T(t)+U(t)$, where 
\begin{align}
T(t)&=\frac 1 2 \sum_{j=1}^N (\dot x_j(t))^2 \; , \\
U(t)& = \frac 1 2 \sum_{j,k>j} \kappa_{jk}(x_j(t)-x_k(t))^2 + \frac 1 2 \kappa_{11} (x_1(t))^2 + \frac 1 2 \kappa_{NN} (x_N(t))^2\;,
\end{align}
are the kinetic and potential energies, respectively.
Newton's equation gives then a set of $N$ coupled second-order differential equations:
\begin{align}
\label{eq:newtongluedtrees}
\ddot{\vec x}(t)= -{\bf A} {\vec x}(t) \;,
\end{align}
where ${\bf A}$ is the $N \times N$-dimensional symmetric matrix
\begin{align}
{\bf A} =\begin{pmatrix}
    3 & -1 & -1 & 0 & \cdots & \cdots & \cdots & \cdots & 0 \cr
-1 & 3 & 0 & -1 & \cdots & \cdots & \cdots & \cdots & 0 \cr
-1 & 0 & 3 & 0 & \cdots & \cdots & \cdots & \cdots & 0\cr
0 & -1 & 0 & 3 & \cdots & \cdots & \cdots & \cdots & 0\cr
\vdots & \vdots & \vdots & \vdots & \ddots & \vdots & \vdots & \vdots & \vdots \cr
 \vdots & \vdots & \vdots & \vdots &  \vdots &  3 & 0 &- 1 & 0 \cr
 \vdots & \vdots & \vdots & \vdots & \vdots & 0 & 3 & 0 & -1 \cr
 \vdots & \vdots & \vdots & \vdots & \vdots & -1 & 0 & 3 & -1 \cr
0 & 0 & 0 & 0 & \cdots & 0 & -1 & -1 & 3
\end{pmatrix} \;.
\end{align}
Note that ${\bf A}= 3 I_N - A$, where $A$ is the adjacency matrix of the graph constructed from the two binary trees randomly glued, if we disregard the edges that connect
the roots to their respective walls, and $I_N$
is the $N \times N$ identity matrix. Also, ${\bf A}$
is positive semi-definite because it is symmetric and diagonally dominant. This property also follows from the potential being
$U(t)=\frac 1 2 \vec x(t)^T {\bf A} \vec x(t) \ge 0$
for all $\vec x(t) \in \mathbb R^N$, which implies ${\bf A}\succcurlyeq 0$. In fact, ${\bf A}$ is positive definite: The matrix ${\bf A}'={\bf A}-\ketbra 1 - \ketbra N$ represents the above network of oscillators where the masses at the roots are not connected to any wall. Hence, ${\bf A}'$ contains a non-degenerate eigenvector of eigenvalue zero corresponding to the translations of the system, i.e., the eigenvector $\ket{ u_1}:=\frac 1 {\sqrt N} \sum_j \ket j$. The only way for ${\bf A}$ to contain an eigenvalue zero is if this eigenvector $\ket{ u_1}$ was also an eigenvector of $\ketbra 1 + \ketbra N$, which is not the case. Nevertheless,
while ${\bf A}\succ 0$, its smallest eigenvalue is exponentially small in $n$ since the expectation $\bra{u_1} {{\bf {A}}} \ket{ u_1} =2/N$ is an upper bound on the smallest eigenvalue.

We would like to show the following property. If $ {\vec x}(0)=(0,0,\ldots,0)^T$ and $\dot {\vec x}(0)=(1,0,\ldots,0)^T$,
then $\dot{\vec x}(t)$ is such that the magnitude of its $N^{\rm th}$ entry, corresponding to ${\rm EXIT}$, is at least polynomially small in $n$ for a time $t$ that is at most polynomial in $n$ (i.e., the kinetic energy of the $N^{\rm th}$ oscillator is $\Omega(1/\poly(n))$). 
For these initial conditions, the solution to Eq.~\eqref{eq:newtongluedtrees} implies
\begin{align}
    \dot {\vec x}(t)=\cos \left(t \sqrt{\bf A} \right) \dot {\vec x}(0).
\end{align}
The matrix ${\bf A}$, and hence $\sqrt{\bf A}$,
possesses a symmetry that allows one to simulate the dynamics
of the $N$ oscillators by considering that of $2n$ oscillators in one spatial dimension, instead. (A similar idea was used in Ref.~\cite{childs2003exponential} to prove that the quantum algorithm solves the problem efficiently.) To show this,
we define $2n$ real components:
\begin{align}
    z_l(t) = \frac 1 {\sqrt{N_l}}\sum_{j \in l^{\rm th} \ {\rm column}} x_j(t) \;,
\end{align}
where $N_l$ is the number of masses in the $l^{\rm th}$ column and $l \in [2n]$; that is $N_l=2^{l-1}$ if $l \le n$ and $N_l=2^{2n-l}$ if $l \ge n+1$.
Then, for the masses in any column that is not the ${\rm ENTRANCE}$ ($l=1)$, ${\rm EXIT}$ ($l=2n)$, or the randomly glued ones where $l=n$ and $l=n+1$, simple manipulations of Eq.~\eqref{eq:newtongluedtrees} give
\begin{align}
    \ddot  z_{l}(t) = \sqrt{2} z_{l-1}(t)-3 z_{l}(t) +\sqrt{2} z_{l+1}(t) \;.
\end{align}
For the masses in the remaining columns, Eq.~\eqref{eq:newtongluedtrees} gives
\begin{align}
  \ddot  z_1(t)  &= -3   z_1(t) + \sqrt{2} z_2(t)  \\
   \ddot z_{n}(t) &= \sqrt{2} z_{n-1}(t)-3 z_n(t)  +2z_{n+1}(t) \\
   \ddot z_{n+1}(t)&=2z_{n}(t) -3 z_{n+1}(t) + \sqrt{2} z_{n+2}(t)  \\
   \ddot  z_{2n}(t) &= \sqrt{2} z_{2n-1}(t) -3   z_{2n}(t)\;.
\end{align}
Then, if $\vec z (t):=(z_1(t),\ldots,z_{2n}(t))^T \in \mathbb R^{2n}$, Eq.~\eqref{eq:newtongluedtrees} implies
\begin{align}
\label{eq:1Dmotion}
    \ddot{\vec z}(t) = - \tilde {\bf A} \vec z (t) \;,
\end{align}
where  $\tilde {\bf A}$ is a $(2n) \times (2n)$ real-symmetric and tridiagonal matrix that, in bra-ket notation, can be written as
\begin{align}
\nonumber
\tilde {\bf A}   &=  3 \sum_{l=1}^{2n} \ketbra l -\sqrt 2 \left(\sum_{l=1}^{n-1} \left(\ket {l}\! \bra{l+1}+\ket{l+1}\!\bra l  \right) + \sum_{l=n+1}^{2n-1} \left(\ket {l} \!\bra{l+1}+\ket{l+1}\!\bra l  \right)\right) \\ &  \quad  -2 \left(\ket{n}\!\bra{n+1}+\ket{n+1}\!\bra{n} \right).
\end{align}

\begin{figure}[b]
    \centering
    \begin{tikzpicture}[scale=1.7]
\foreach \i/\label in {1/$z_1$,2/$z_2$,3/$z_3$,4/$z_{n-1}$,5/$z_n$,6/$z_{n+1}$,7/$z_{n+2}$,8/$z_{2n-2}$,9/$z_{2n-1}$,10/$z_{2n}$}
\node[circle,draw,fill=black,inner sep=2pt,label=below:\label] (v\i) at (\i-1,0) {};
\foreach \i in {1,2,3,4,5,6,7,8,9,10}
\draw [-] (v1)--node[above]{$\sqrt{2}$}(v2);
\draw [-] (v2)--node[above]{$\sqrt{2}$}(v3);
\draw [-, dotted] (v3)--node[above]{}(v4);
\draw [-] (v5)--node[above]{$\sqrt{2}$}(v4);
\draw [-] (v5)--node[above]{$2$}(v6);
\draw [-, dotted] (v7)--node[above]{}(v6);
\draw [-] (v7)--node[above]{$\sqrt{2}$}(v8);
\draw [-] (v8)--node[above]{$\sqrt{2}$}(v9);
\draw [-] (v9)--node[above]{$\sqrt{2}$}(v10);
\end{tikzpicture}
    \caption{The $2n \times 2n$ matrix $\tilde{A} = 3I_{2n}- \tilde {\bf A}$ viewed as the adjacency matrix of a weighted graph.}
    \label{fig:line}
\end{figure}
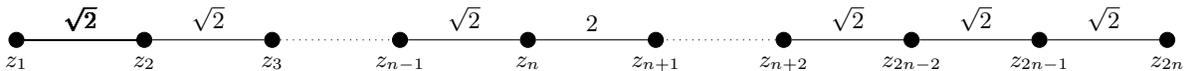

Hence, we have effectively reduced the dimension  of the system from $N$ to $2n$ when symmetries are considered.
The matrix $\tilde {\bf A}$ is $3I_{2n}-\tilde A$, where $\tilde A$ is the $2n \times 2n$ adjacency matrix of the graph constructed from the glued binary trees in the new coordinate system defined above (see Fig.~\ref{fig:line}).
The matrix $\tilde {\bf A}$ is also positive, with its smallest eigenvalue being exponentially small in $n$, and any two eigenvalues separated by a gap that is at least $\Delta'=\Omega(1/n^3)$.
In particular, if $\vec z(0)=(0,0,\ldots,0)^T$, the solution to Eq.~\eqref{eq:1Dmotion} implies
\begin{align}
   \dot{ \vec z}(t) =\cos(t \sqrt{\tilde {\bf A}}) \dot{ \vec z}(0) \;.
\end{align}

We are interested in the magnitude of $\dot z_{2n}(t) \equiv \dot x_N(t)$ or, more specifically, the kinetic energy of the $N^{\rm th}$ oscillator.
Similar to Ref.~\cite{childs2003exponential}, rather than considering a fixed $t$, 
we will consider an average over $t$ as follows: 
\begin{align}
\label{eq:pexitgluedtrees}
P_{\rm EXIT}(T):= \frac 1 {T} \int_0^T dt |\dot x_N(t)|^2 = \frac 1 {T} \int_0^T dt |\dot z_{2n}(t)|^2 \;.
\end{align}
Here, $T > 0$ is set below and $P_{\rm EXIT}(T)$ is the average kinetic energy of the $N^{\rm th}$ oscillator
from time $t=0$ to $t=T$ renormalized by the energy $E=1/2$ for the above initial conditions; for our algorithm, $P_{\rm EXIT}(T)$
coincides with the average probability of projecting $\sket{\psi(t)}$ into a basis state 
that correspond to the ${\rm EXIT}$ mass.
Thus, the goal reduces to showing that there exists $T=\cO({\poly}(n))$ such that $P_{\rm EXIT}(T)=\Omega(1/{\poly}(n))$. If we prove this, then Eq.~\eqref{eq:pexitgluedtrees} automatically implies the existence of $t=\cO(\poly(n))$ such that $K_N(t)=\Omega(1/\poly(n))$, which is our main goal. 
Finding such $t$ can be done classically by simulating
Eq.~\eqref{eq:1Dmotion} in time polynomial in $n$.

We can write
\begin{align}
    \dot z_{2n}(t) 
    & = \bra{2n}\frac 1 2 \left( e^{i t \sqrt{\tilde \bA}} + e^{-i t \sqrt{\tilde \bA}}\right) \ket{1} \;. 
\end{align}
Let $\{\ket {\lambda _1},\ldots,\ket {\lambda _{2n}}\}$ be the $2n$ eigenvectors of the adjacency matrix $\tilde A$ of eigenvalues $\lambda_1 < \lambda_2<\ldots <\lambda_{2n}$.
These are also eigenvectors of $\sqrt{\tilde {\bf A}}$
of eigenvalues $\gamma_l:=\sqrt{3-\lambda_l}>0$. 
This implies
\begin{align}
P_{\rm EXIT}(\infty)&:= 
\lim_{T\rightarrow \infty} P_{\rm EXIT}(T) \nn
&= \lim_{T\rightarrow \infty} \frac 1 {T} \left(\int_0^T dt  \bra{2n}    \frac 1 2\left( e^{i t \sqrt{\tilde \bA}} + e^{-i t \sqrt{\tilde \bA}}\right)   \ketbra 1  \frac 1 {2}  \left( e^{i t \sqrt{\tilde \bA}} + e^{-i t \sqrt{\tilde \bA}}\right)      \ket{2n}\right) \nn
& =   \sum_{l,l'=1}^{2n}
 \langle 2n  \ketbra{\lambda_l} 1 \rangle \!\bra 1 \lambda_{l'} \rangle \!\bra{\lambda_{l'}} 2n \rangle \lim_{T\rightarrow \infty} \frac 1{T}\int_0^T dt \cos(t \gamma_l)  \cos(t \gamma_{l'}) \nn
 & = \frac 1 2 \sum_{l=1}^{2n}
 \langle 2n  \ketbra{\lambda_l} 1 \rangle \! \bra 1 \lambda_{l} \rangle \!\bra{\lambda_{l}} 2n \rangle \nn
 & = \frac 1 2 \sum_{l=1}^{2n} |\!\bra {\lambda_l}1\rangle|^2 \;
 |\!\bra {\lambda_l}2n\rangle|^2 \nn
 & = \frac 1 2 \sum_{l=1}^{2n} |\!\bra {\lambda_l}1\rangle|^4 \nn
 & \ge \frac 1 2 \frac 1 {2n} \sum_{l=1}^{2n}|\!\bra {\lambda_l}1\rangle|^2 \nn
 \label{eq:pexitbound}
 & =  \frac 1 {4n} \;.
\end{align}
To obtain this we used $\lim_{T \rightarrow \infty}\frac 1 T \int_0^T dt \; e^{i t(\gamma_l-\gamma_{l'})}=\delta_{l,l'}$, $\lim_{T \rightarrow \infty}\frac 1 T \int_0^T dt \; e^{i t(\gamma_l+\gamma_{l'})}=0$ since the eigenvalues are positive,  $|\!\bra {\lambda_l}1\rangle|=|\!\bra {\lambda_l}2n\rangle|$ for all $l \in[2n]$ due to a reflection symmetry of the network (i.e., the adjacency matrix of the graph in the line is invariant under the transformation $l \leftrightarrow 2n-l+1$), and also  
the Cauchy Schwarz inequality for the last line as there are $2n$ different eigenvectors. This would already prove
the desired result, but we are interested in finite times $T<\infty$ and, more precisely, showing a similar bound for $T=\cO(\poly(n))$.

Using again the reflection symmetry of the network, for  $T<\infty$  the average success probability can be written as
\begin{align}
\nonumber
&  P_{\rm EXIT}(T)=\\
  &=\sum_{l,l'}
 \langle 2n  \ketbra{\lambda_l} 1 \rangle \!\bra 1 \lambda_{l'} \rangle \!\bra{\lambda_{l'}} 2n \rangle  \frac 1{T}\int_0^T dt \cos(t \gamma_l)  \cos(t \gamma_{l'}) \nn
 &=  \sum_{l,l'} |\!\bra 1 \lambda_{l} \rangle|^2
 |\!\bra 1 \lambda_{l'} \rangle|^2   \frac 1{T}\int_0^T dt \cos(t \gamma_l)  \cos(t \gamma_{l'})\nn
& = \frac 1 2 \sum_{l} |\!\bra 1 \lambda_{l} \rangle|^4 \frac 1{T}\int_0^T dt (1 + \cos(2t \gamma_l))+\sum_{l \ne l'} |\!\bra 1 \lambda_{l} \rangle|^2
 |\!\bra 1 \lambda_{l'} \rangle|^2   \frac 1{T}\int_0^T dt \cos(t \gamma_l)  \cos(t \gamma_{l'}) \nn
 & = \frac 1 2 \sum_{l} |\!\bra 1 \lambda_{l} \rangle|^4 + \frac 1 2 \sum_{l} |\!\bra 1 \lambda_{l} \rangle|^4
 \frac 1{T} \int_0^T dt \cos(2t \gamma_l) + \sum_{l \ne l'} |\!\bra 1 \lambda_{l} \rangle|^2
 |\!\bra 1 \lambda_{l'} \rangle|^2   \frac 1{T}\int_0^T dt \cos(t \gamma_l)  \cos(t \gamma_{l'}) \nn
 & = P_{\rm EXIT}(\infty) + \frac 1 2 \sum_{l} |\!\bra 1 \lambda_{l} \rangle|^4
 \frac 1{T} \int_0^T dt \cos(2t \gamma_l) + \sum_{l \ne l'} |\!\bra 1 \lambda_{l} \rangle|^2
 |\!\bra 1 \lambda_{l'} \rangle|^2   \frac 1{T}\int_0^T dt \cos(t \gamma_l)  \cos(t \gamma_{l'})\;,
\end{align}
where we used the identity $\cos^2(\alpha)=\frac 1 2 (1+\cos(2\alpha))$.
Then, the correction to $P_{\rm EXIT}(\infty)$ for $T<\infty$ is such that
\begin{align}
\nonumber
    |P_{\rm EXIT}(T) - P_{\rm EXIT}(\infty)| =& \left| \sum_{l \ne l'} |\langle 1  \ket{\lambda_l}|^2 | \! \bra 1 \lambda_{l'}\rangle|^2  \frac 1{T}\int_0^T dt \cos(t \gamma_l)  \cos(t \gamma_{l'}) + \right . \\
    &\left. \frac 1 2\sum_l |\!\bra 1 \lambda_{l} \rangle|^4 \frac 1 {T} \int_0^T dt \;  \cos(2t \gamma_l)\right| \;. 
\end{align}
Note that, for $l \ne l'$,
\begin{align}
 \left| \frac 1{T}\int_0^T dt \cos(t \gamma_l)  \cos(t \gamma_{l'})   \right| &= 
 \frac 1 {T}\left| \frac 1{2} \left(  \frac{\sin(T(\gamma_l + \gamma_{l'}))}{\gamma_l + \gamma_{l'}} + \frac{\sin(T(\gamma_l - \gamma_{l'}))}{\gamma_l - \gamma_{l'}}  \right)\right| \\
& \le \frac 1 {T \Delta}\;,
\end{align}
where $\Delta$ is the minimum spectral gap between any pair of eigenvalues $\gamma_l$, that is, $\Delta: =\min_{l}|\gamma_{l+1} - \gamma_{l}|$. (Also $|\gamma_l+\gamma_{l'}|> \Delta$ if $l \ne l'$.)
Also, if we order the eigenvalues so that $\gamma_{l+1} >\gamma_l$ for all $l=[2n-1]$, for $l \ge 2$ we have $\gamma_l \ge \Delta$ and
\begin{align}
   \left|\frac 1 {T} \int_0^T dt  \cos(2t \gamma_l) \right| = \frac 1 {T} \left| \frac {\sin(2 t \gamma_l)} {2\gamma_l} \right| &\le \frac 1{2 T \Delta}.
\end{align}
For $l=1$, the eigenvalue $\gamma_l$ is exponentially small in $n$ and the corresponding average can be $\cO(1)$. Combining these equations and using the reflection symmetry  of the network we obtain
\begin{align}
    |P_{\rm EXIT}(T) - P_{\rm EXIT}(\infty)| & \le \frac 1 {T \Delta} \sum_{l \ne l'} |\langle  \lambda_l\ket 1|^2 |\langle  \lambda_{l'}\ket 1|^2 + \frac 1 {4T \Delta} \sum_{l \ge 2} |\langle  \lambda_{l}\ket 1|^4 + \frac 1 2|\langle  \lambda_{1}\ket 1|^4 \nn
    & \le \frac 1 {T \Delta} \sum_{l,l'} |\langle  \lambda_l\ket 1|^2 |\langle  \lambda_{l'}\ket 1|^2 +
    \frac 1 2 |\!\bra 1 \lambda_{1} \rangle|^4 \nn
    & \le \frac 1 {T \Delta} + \frac 1 2 |\!\bra 1 \lambda_{1} \rangle|^4 \;,
\end{align}
where we also used $\sum_l |\!\bra 1 \lambda_{l} \rangle|^2=1$. We need to show that both these terms are small.

In Ref.~\cite{childs2003exponential} it has been shown that the eigenvalues $\lambda_l$'s are separated by spectral gaps bounded by $\Delta'=\Omega(1/n^3)$. The same follows for the $\gamma_l$'s. More precisely, suppose that the two closest eigenvalues of $\tilde{\bf A}$ are $3-\lambda \le 6$ and $3-(\lambda + \Delta ' )$.
Then, the two closest eigenvalues of $\sqrt{\tilde{\bf A}}$ are separated as
\begin{align}
\Delta &= \sqrt{3-\lambda} - \sqrt{3-(\lambda + \Delta ')}\nn
&= \sqrt{3-\lambda} \left(1-\sqrt{1 - \frac{\Delta'}{3-\lambda}}\right) \nn
&\ge \sqrt{3-\lambda} \left(1-  \left( 1- \frac{\Delta'}{3-\lambda}\right)\right) \nn
& \ge \frac{\Delta'}{\sqrt{3-\lambda}} \nn
& \ge \frac{\Delta'}{\sqrt{6}} \;.
\end{align}
Then, since $\Delta'= \Omega(1/n^3)$, we have $\Delta = \Omega(1/n^3)$. This implies 
that we can choose $T=\cO(n^4)$  so that, for example,
\begin{align}
    \frac 1 { T \Delta} \le \frac 1 {8n} \;.
\end{align}
It is also possible to show that $|\!\bra 1 \lambda_{1} \rangle|^4$ is exponentially small in $n$. 
Consider the vector
\begin{align}
 \ket{v_1}:=   \sum_{l=1}^n \frac 1 {\sqrt {2^{2+n-l}}}\ket l +  \sum_{l=n+1}^{2n} \frac 1 {\sqrt {2^{1+l-n}}}\ket l \;.
\end{align}
The length of this vector is $\sqrt{\sum_{l=1}^n (1/2^{1+n-l})}=\sqrt{1-1/2^n}$ being exponentially close to 1. In addition,
\begin{align}
\nonumber
  &\tilde{\bf A}  \ket{ v_1} = \\
  \nonumber
  &3 \ket{ v_1} -\sqrt 2 \left(\sum_{l=1}^{n-1}\frac 1 {\sqrt {2^{2+n-l}}} \ket{l+1}+ \sum_{l=2}^{n} \frac 1 {\sqrt {2^{2+n-l}}} \ket{l-1} +\sum_{l=n+1}^{2n-1} \frac 1 {\sqrt {2^{1+l-n}}}\ket{l+1} +\sum_{l=n+2}^{2n}\frac 1 {\sqrt {2^{1+l-n}}}\ket{l-1} \right) \\& \quad \quad -2 \frac 1 2 (\ket{n+1}+\ket{n}) \nn
  \nonumber
  & = 3 \ket{ v_1}-\sqrt 2 \left(\sum_{l=2}^{n}\frac 1 {\sqrt {2^{3+n-l}}} \ket{l}+ \sum_{l=1}^{n-1} \frac 1 {\sqrt {2^{1+n-l}}} \ket{l} +\sum_{l=n+2}^{2n} \frac 1 {\sqrt {2^{l-n}}}\ket{l} +\sum_{l=n+1}^{2n-1}\frac 1 {\sqrt {2^{2+l-n}}}\ket{l} \right)  \\& \quad \quad -2 \frac 1 2 (\ket{n+1}+\ket{n}) \nn
  \nonumber
  & = 3 \ket{ v_1} -3 \sum_{l=2}^{n-1} \frac 1 {\sqrt {2^{2+n-l}}} \ket{l} - \frac 1 2 \ket n-\frac 1 {\sqrt{2^{n-1}}} \ket 1 - 3 \sum_{l=n+2}^{2n-1} \frac 1 {\sqrt{2^{1+l-n}}}\ket l -\frac 1{\sqrt{2^{n-1}}} \ket{2n} -\frac 1 2 \ket{n+1} \\& \quad \quad -2 \frac 1 2 (\ket{n+1}+\ket{n}) \nn
  & = 3  \ket{ v_1} - 3  \ket{ v_1} + \frac 1{2\sqrt{2^{n-1}}}\ket 1 +  \frac 1{2\sqrt{2^{n-1}}}\ket {2n} \nn
  & = \frac 1{2\sqrt{2^{n-1}}}\ket 1 +  \frac 1{2\sqrt{2^{n-1}}}\ket {2n}.
\end{align}
This implies, $\| \tilde{\bf A} (\ket{ v_1}/\|\ket{v_1}\|)\|=\cO(1/\sqrt{2^n})$
and that $\ket{ v_1}/\|\ket{v_1}\|$ is exponentially close to the normalized eigenvector of lowest eigenvalue as the spectral gaps are at least $\Delta =\Omega(1/n^3)$; that is $\|(\ket{ v_1}/\|\ket{v_1}\|)-\ket{ \lambda_1}\|=\cO(1/\exp(n))$. Since the first entry of  $\ket{ v_1}/\|\ket{v_1}\|$ is $\cO(1/\exp(n))$,
these results imply $|\!\bra 1 \ket{\lambda_1}\!|=\cO(1/\exp(n))$.
Combining these bounds, and for the choice of $T$ above, we obtain
\begin{align}
|P_{\rm EXIT}(T) -P_{\rm EXIT}(\infty)| \le \frac 1 {8n} + \cO(1/\exp(n)) \;.
\end{align}
Together with Eq.~\eqref{eq:pexitbound}, this gives
\begin{align}
P_{\rm EXIT}(T) \ge \frac 1 {8n} - \cO(1/\exp(n)) \;,
\end{align}
which is the desired result: we can choose $T=\cO(n^4)$
so that the average probability $P_{\rm EXIT}(T)$ is $\Omega(1/n)$, implying the existence of $t \in [0,T]$ with the desired property $|\dot x_N(t)|=\Omega(1/\poly(n))$.
\qed

We note that numerical solutions show a stronger result, where $t \propto n$ suffices. See Fig.~\ref{fig:numsim} for an example.

\begin{figure}\centering
\includegraphics[scale=.6]{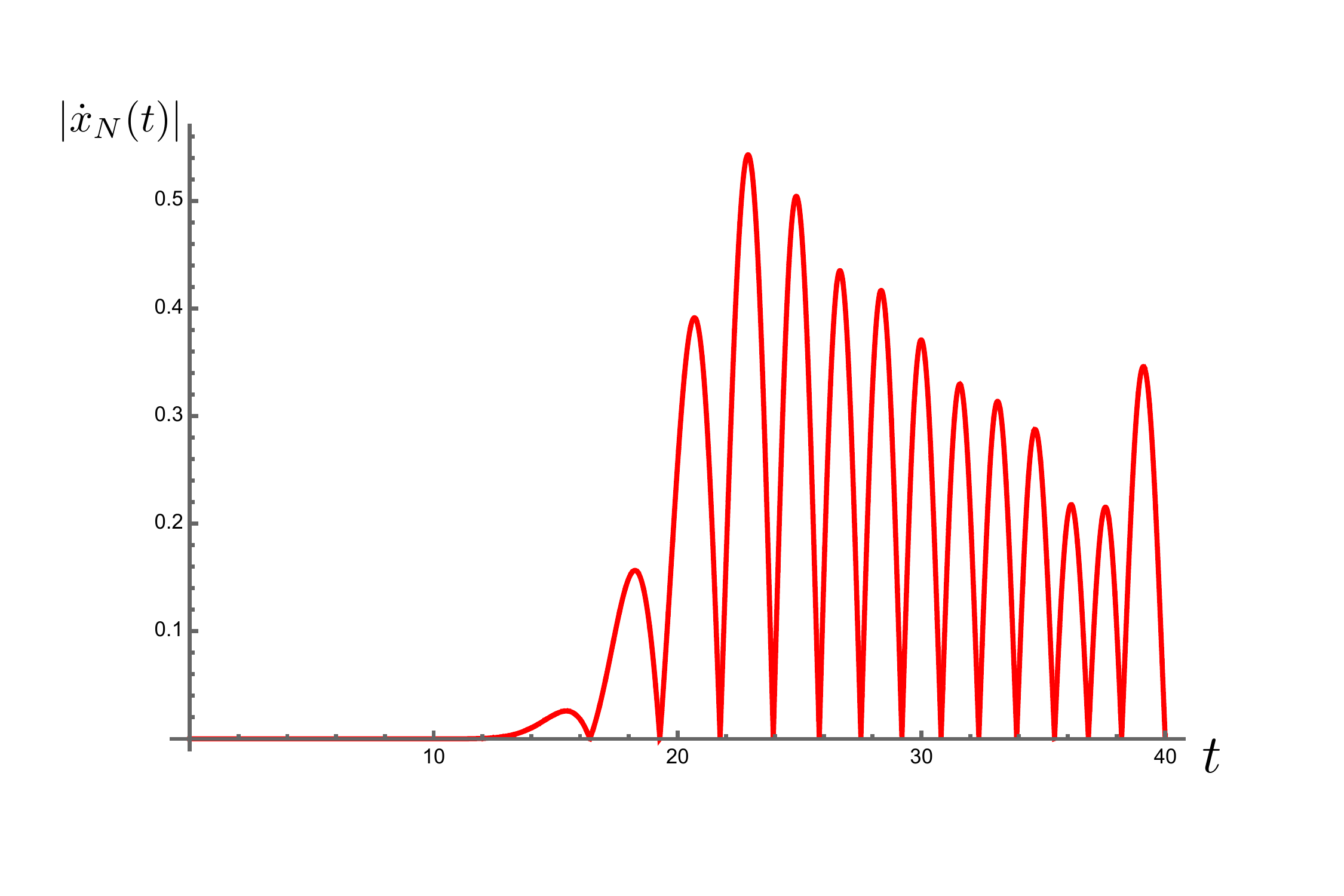}\vspace{-3em}
\caption{Numerical simulation of the network in Fig.~\ref{fig:gluedtrees} for $n=20$ ($N=2^{21}-2$). Initially $\vec x(0)=(0,0,\ldots,0)^T$ and $\dot{\vec x}(0)=(1,0,\ldots,0)^T$. At time $t \approx 2n$, $|\dot x_N(t)|$ becomes significant. Similar behavior is observed for larger values of $n$.\label{fig:numsim}}
\end{figure}

\section{BQP-completeness and proof of Thm.~\ref{thm:bqpcomplete}}
\label{app:BQPcompletenessproof}

In this section our goal is to show Thm.~\ref{thm:bqpcomplete}, which establishes that Problem~\ref{prob:nonoracular} is \BQP-complete. A problem is \BQP-complete if it is in \BQP\ and if it is \BQP-hard, which means that every problem in \BQP\ can be reduced to it by classical polynomial-time reductions. Since we have already established that Problem~\ref{prob:nonoracular} is in \BQP\ (a consequence of Thm.~\ref{thm:main}), we only need to show it is \BQP-hard.

To show our problem is \BQP-hard, we start the reduction from the following \BQP-complete problem: 
\begin{problem}\label{prob:bqpcomplete}
Given a quantum circuit on $q$ qubits with $L=\poly(q)$ gates over the gate set of Hadamard, Pauli X, and Toffoli acting on the initial state $\ket{0}^{\otimes q}$, decide if the output state has overlap at least $1-1/\exp(\sqrt q)$ with $\ket{0}^{\otimes q}$ or has overlap at most $1/\exp(\sqrt q)$ with $\ket{0}^{\otimes q}$, promised that one of these is the case.   
\end{problem}

This problem is easily seen to be in \BQP. It is \BQP-hard by reduction from the standard \BQP-complete problem, which has a similar input, but the goal is to decide a different property of the output state: Upon measuring the first qubit of the output state, we have to decide if it is $1$ with probability at least $2/3$ or $1$ with probability at most $1/3$, promised that one of these is the case. By known results, we can assume our gate set is real and contains only Hadamard and Toffoli, since this is a universal gate set for real quantum computation, which is computationally as powerful as complex quantum computation~\cite{shi2002both,aharonov2003simple}. We also add the single-qubit Pauli $X$ gate so that we can create the state $\ket 1$ from $\ket 0$, as we want the computation to start with all qubits in state $\ket 0$. Also, the $X$ and Toffoli gates generate swap gates, and we can then assume that all Hadamards act on the same qubit. This property will be useful to simplify the presentation of the proof, but it is not necessary, and one could in principle allow Hadamards to act on any qubit.

Given such a circuit, we can amplify the constants $2/3$ and $1/3$ to $1-1/\exp(q)$ and $1/\exp(q)$ by running $\cO(q)$ copies of the circuit in parallel and taking the majority vote of the first output qubits of all circuits. Now we have a new circuit on $\cO(q^2)$ qubits whose first qubit is almost certainly (with probability $1-1/\exp(q)$) $\ket{1}$ when the answer is yes and almost certainly $\ket{0}$ when the answer is no. We then use an additional qubit and copy this answer to that qubit. Since the qubit we are copying is exponentially close to being $\ket 0$ or $\ket 1$, let us assume it is one of these, and this will only introduce an error of $1/\exp(q)$. We then run the circuit's inverse on the remaining qubits to restore them to the all zero state. The resulting output state is now exponentially close to all zeros if the additional qubit was also zero, which happens when the input was a no instance. If it was a yes instance, the additional qubit we added will be exponentially close to $\ket{1}$, and hence the overall state will have almost no amplitude on the all zeros state.

Thus deciding if the output state of a $\cO(q^2)$ qubit circuit of size $\poly(q)$ is $1-1/\exp(q)$ close to the all zeros state or has at most $1/\exp(q)$ overlap on the all zeros state is \BQP-complete. This yields the stated result by renaming $q$ to $q^2$. (Our proof below does not require the closeness to be exponentially small, and it would suffice to have inverse polynomial closeness, as long as this polynomial was smaller than all the other polynomials appearing in the problem.)

\subsection{From a circuit to a network of oscillators}

We now show that our quantum algorithm for simulating coupled classical oscillators allows us to solve the \BQP-complete problem above. We begin with the given quantum circuit on $q$ qubits with $L=\poly(q)$ gates. If the gates are $U_1$ to $U_L$, the output of this circuit is $U_L \ldots U_1 \ket 0^{\otimes q}$. Our goal is to decide if this state is essentially $\ket 0^{\otimes q}$ or has essentially no overlap with $\ket 0^{\otimes q}$.
The unitaries $U_1$ to $U_L$ are either the single-qubit gates Hadamard $H$ or Pauli $X$ (tensored with identity on all other qubits) or the three-qubit Toffoli gate ${\rm Toff}$ (tensored with identity on all other qubits). These gates are
\begin{align} 
H =
\begin{pmatrix}
\frac 1 {\sqrt 2} & \frac 1 {\sqrt 2}  \cr \frac 1 {\sqrt 2} & -\frac 1 {\sqrt 2}
\end{pmatrix},
\qquad 
X =
\begin{pmatrix}
0 & 1  \cr 1 & 0
\end{pmatrix},
\qquad \mathrm{and} \qquad
{\rm Toff} =
\begin{pmatrix}
1 & 0 & 0 & 0 & 0 & 0 & 0 & 0 \cr 0 & 1 & 0 & 0 & 0 & 0 & 0 & 0 \cr  0 & 0 & 1 &0 & 0 & 0 & 0 & 0  \cr 0 & 0 & 0  & 1 & 0 & 0 & 0 & 0 \cr 0 & 0 & 0 & 0 & 1  & 0 & 0 & 0 \cr 0 & 0 & 0 & 0 & 0& 1   & 0 & 0 \cr 0 & 0 & 0 & 0 & 0& 0 & 0& 1   \cr 0 & 0 & 0 & 0 & 0& 0 & 1 & 0  
\end{pmatrix}
\end{align}
in the corresponding one-qubit and three-qubit subspaces.
It will be very useful in our context that these are real as the resulting Hamiltonian will have real entries, which we require because the $\kappa_{jk}$ are real. 

From this circuit we will create a network of $N$ oscillators, where $N=(L+1) 2^{q+1}$. 
It is convenient to label each oscillator $j \in [N]$ using two indexes $j \rightarrow (\sl,r)$, where $\sl \in [L+1]$, $r \in [2^{q+1}]$, and $j=(\sl-1)2^{q+1}+r$. 
We will call the oscillator with $\sl=L+1$ and $r=1$ the output oscillator. In our construction, the $N \times N$ matrix of spring constants ${\bf K}$ that describes the couplings between oscillators, which depends on the $U_{\sl}$'s,
 is 4-sparse and each entry is bounded as $\kappa_{jk}\le 4$. We will have all masses be $m_j=1$. Note that ${\bf A}=\sqrt{\bM}^{-1} {\bf F}\sqrt{\bM}^{-1}={\bf F}$ and $\vec y(t)=\sqrt{\bM} \vec x(t)=\vec x(t)$ in this case. This simplifies our analysis.

In the following, we show that for our oscillator network, there exists a time $t=\polylog (N)$ such that, for the initial conditions where $\vec x(0)=(0,\ldots,0)^T$, $\dot x_1(0)=1$, $\dot x_2(0)=-1$, and $\dot x_j(0)=0$ for all $j>2$,
the kinetic energy of the output oscillator is either exponentially close to $0$ (when the original circuit had almost no overlap with $\ket 0^{\otimes q}$), or $1/\polylog (N)$ (when the original circuit's output is very close to $\ket 0^{\otimes q}$). Since our algorithm allows us to estimate the kinetic energy of a given oscillator to additive $1/\polylog (N)$ precision, we can distinguish these two cases with complexity $\cO(\polylog (N))$. 
The initial quantum state $\sket{\psi(0)}$ is simply $\ket -$ tensored with the all zeros state, and hence easy to prepare.
This proves that Problem~\ref{problem:kineticestimation}  is \BQP-complete even with these constraints on the initial conditions, $t$, $\epsilon$, $\bM$, and ${\bf K}$, which is the desired result.

We now provide the details of this construction.
We use the bra-ket notation for specifying vectors and matrices, which is convenient for relating properties of the classical system with properties of quantum states. 
We will specify our oscillator network using the $\bf A$ matrix, instead of the $\bf K$ matrix. This real matrix  is PSD and has non-negative entries on the diagonal and non-positive entries on the off-diagonal. The standard Feynman-Kitaev \cite{Feynman1986a,Kitaev2002b} circuit-to-Hamiltonian construction gives us a Hamiltonian of the form
\begin{align}
\label{eq:F-KHamiltonian}
\sum_{\sl=1}^{L} (\ket {\sl}\! \bra{\sl+1}  +\ket{\sl+1}\!\bra {\sl} ) \otimes W_{\sl}\, ,
\end{align}
where $W_{\sl}$ is the $\sl^{\rm th}$ gate in the circuit, i.e., $U_{\sl}$. We cannot let $\bf A$ equal this Hamiltonian, since the matrix is not PSD and there are off-diagonal entries of both signs. The first issue is easily fixed by adding a multiple of the identity, but it will require some work to fix the second issue. We will adjoin an additional qubit to work in a slightly larger space. Formally, we consider the following Hilbert space:
\begin{equation}
    \mathcal{H} = \mathcal{H}_{\rm clock} \otimes \mathcal{H}_{\rm comp} \otimes \mathcal{H}_2\,  .
\end{equation}
Here, $\mathcal{H}_{\rm clock}$ is the $(L+1)$-dimensional Hilbert space that is used to describe the state of the clock (i.e., corresponding to the first register in Eq.~\eqref{eq:F-KHamiltonian}), which is used to track the progress of a simulated quantum computation on a state in $\mathcal{H}_{\rm comp} \otimes \mathcal{H}_2$ where $\mathcal{H}_{\rm comp}$ is $2^{q}$-dimensional and $\mathcal{H}_2$ is two-dimensional. The purpose of $\mathcal{H}_{\rm comp}$ is to store the state of the given quantum circuit at a time given by the clock.  The purpose of $\mathcal{H}_2$ is to address the issue on the off-diagonal entries raised above. The off-diagonal entries of ${\bf A}$ need to have the same sign, but our gate set includes a gate, $H$, with entries of both signs.  We address this by providing a resource state $\ket{-} =\frac 1 {\sqrt{2}}(\ket{0} - \ket{1})$ in the last register which we use to effectively create negative signs in a subspace although the operators will only have positive entries. Similar constructions that create Hamiltonians with non-negative entries have been used within the context of Hamiltonian complexity; see Refs.~\cite{jordan2010quantum,CGW15} for example.

We will map our circuit to a system of oscillators that satisfy the assumptions made in Problem~\ref{problem:oscillators} by choosing the couplings  to implement an ``encoded'' sequence of gates $\{W_{\sl}\}_{\sl}$ as
\begin{align}
\label{eq:barAdefBQP}
{\bf A}= 4 \one_{N} -\sum_{\sl=1}^{L} (\ket {\sl} \bra{\sl+1}  +\ket{\sl+1}\bra {\sl} ) \otimes W_{\sl},
\end{align}
where each $W_{\sl}$ corresponds to an encoded version of the $\sl^{\rm th}$ gate in the original circuit, which is either a Hadamard, $X$, or a Toffoli, and $\one_{N}$ is the identity matrix on the system of dimension $N$. Notice that $W_{\sl}$ lives in $\mathcal{H}_{\rm comp} \otimes \mathcal{H}_2$, whereas the original unitaries $U_{\sl}$ lived in $\mathcal{H}_{\rm comp}$. So we encode each gate into a larger space with 1 additional qubit.

We want all the entries in the encoded versions of Hadamard, $X$, and Toffoli to be non-negative, which will make the off-diagonal entries non-positive in ${\bf A}$ due to Eq.~\eqref{eq:barAdefBQP}.
Let's start with the Hadamard matrix. Without loss of generality, because we can swap qubits, we assume the Hadamard always acts on qubit $q$, the last qubit of $\mathcal{H}_{\rm comp}$. 
We describe our encoded Hadamard gate via the following positive-valued real-symmetric block matrix acting on the last qubit of $\mathcal{H}_{\rm comp}$ and $\mathcal{H}_{2}$:
\begin{equation}
H_\mathrm{enc} = \frac{1}{\sqrt{2}}\begin{pmatrix} \one_2 & \one_2 \\ \one_2 & X \end{pmatrix}.
\end{equation}
This can be seen to act as a logical Hadamard when acting on a state of the form $\ket{\psi} \ket{-}$ because of the following block-encoding result:
\begin{align}
(\one_{2} \otimes \bra{-})H_\mathrm{enc}(\one_{2} \otimes \ket{-}) &=
     \frac{1}{\sqrt{2}}\left((\one_{2} \otimes \bra{-})\begin{pmatrix} \one_2 & \one_2 \\ \one_2 & X \end{pmatrix}(\one_{2} \otimes \ket{-}) \right) \nn
     &= \frac{1}{\sqrt{2}} \begin{pmatrix} 1 & 1 \\ 1 & -1 \end{pmatrix} \nn
     \label{eq:blockEncode}
     &=H.
\end{align}
Note here that unlike a conventional block encoding, the encoded Hadamard gate $H_\mathrm{enc}$ is not a unitary operation. However it block encodes a unitary operation within the logical subspace where the Hilbert space $\mathcal{H}_2$ contains the state $\ket{-}$, which is sufficient for our purposes as it represents the Hadamard gate as a real-symmetric matrix with positive entries. 

The Toffoli and $X$ gates have non-negative entries and so we define the encoded Toffoli and $X$ gates on $\mathcal{H}_{\rm comp} \otimes \mathcal{H}_2$ as simply $U_l \otimes \one_2$. The action of this matrix on the first register ($\mathcal{H}_{\rm comp}$) is a Toffoli gate or $X$ gate, independent of the second register ($\mathcal{H}_2$).
Finally, as in the standard circuit-to-Hamiltonian construction, the initial state that we use is $\ket{\sl=1}\otimes\ket{0}^{\otimes q}$ on the first two spaces, and $\ket -$ on the last space, as discussed.
As mentioned above, the $W_{\sl}$'s, which are not necessarily unitary on the entire Hilbert space (but act as unitaries on the subspace where the last register is $\ket -$), have non-negative entries $\in \{0,1/\sqrt 2,1\}$ and are of sparsity at most 2.

The matrix ${\bf F}={\bf A}$ contains all the desired properties
to describe $N$ coupled oscillators. Its off-diagonal entries are $a_{jk}\in \{0,-1,-1/\sqrt 2 \}$, for $j,k \in[N]$, $j \ne k$. This agrees with the assumptions that the off-diagonal elements in $\mathbf{F}$ are negative given in Sec.~\ref{sec:quantumevolution}.  All diagonal entries are $a_{jj}=4$, which similarly agrees with the requirement in Sec.~\ref{sec:quantumevolution} that the diagonal elements of $\mathbf{F}$ are positive. The spring constants of the system are such that $\kappa_{jk}=-a_{jk}$ for $j \ne k$ and then $\kappa_{jk}\in \{0,1/\sqrt 2,1\}$.
Also, $\kappa_{jj}=a_{jj}-\sum_{k \ne j}\kappa_{jk}=4-\sum_{k \ne j}a_{jk}$. The matrix ${\bf K}$ has a non-zero diagonal entry and it can have at most $3$ nonzero off-diagonal entries.
For example, if $W_{\sl} = H_\mathrm{enc}$ and $W_{\sl+1} = {\rm Toff}$ or $W_{\sl+1} = X$, then the corresponding row will have at most $3$ non-zero matrix elements off the diagonal with coefficients $1/\sqrt{2}$ for two and $1$ for the remainder.  Maximizing over all such possibilities  $\sum_{k \ne j}a_{jk} \le 1+2/\sqrt 2$ in our construction, we obtain $\kappa_{jj}>0$. 
(Since the Hadamard gates act on the same qubit, we do not have two consecutive Hadamards in our circuit.)
The matrix of spring constants ${\bf K}$
is then $4$-sparse and each entry satisfies $\kappa_{jk}\le 4$.
Given $j\in[N]$, it is possible to compute all neighbors $a(j,\ell)$ of $j$, where $\ell \in [4]$, and the corresponding non-zero spring constants, in polynomial time from the polynomial-sized representation in Eq.~\eqref{eq:barAdefBQP}. This gives an efficient circuit to query the matrix ${\bf K}$ as in Sec.~\ref{sec:algorithm}.
Now that we have established that a query to the elements of the oscillator can be efficiently simulated we will now turn to showing that the classical dynamics under an exponentially large $\mathbf{A}$ can implement an arbitrary quantum computation.  

\subsection{Solving the original problem using the dynamics of oscillators}

As discussed above, the initial state we want to use is  $\ket{\sl=1} \otimes \ket 0^{\otimes q} \otimes \ket -$. As a vector, this is a vector with the first entry equal to $+1/\sqrt 2$, the second entry equal to $-1/\sqrt 2$, and the remaining entries zero. Using our problem's encoding in \eqref{eq:encodedstate}, this corresponds to the initial state where $\dot y_1(0)=1$,  $\dot y_2(0)=-1$, $\dot y_j(0)=0$ for all $j >2$, $\vec{x}(0)=0$, and $E=1$. This corresponds to all the oscillators starting at their rest positions, with the first oscillator having velocity $+1$, the second one having velocity $-1$, and the remaining oscillators being stationary.

When the initial conditions are such that $\vec x(0)=(0,\ldots,0)^T$, the solution to Newton's equations is even easier to describe, and Eq.~\eqref{eq:solution} implies
\begin{align}
\dot{\vec x}(t)&=\dot{\vec y}(t) 
= \Re{e^{i \sqrt{\bA} t} \dot{\vec y}(0)}
= \cos(\sqrt{\bA} t) \dot{\vec y}(0) \;.
\end{align}
In the rest of this section we want to understand the behavior of $\dot{\vec y}(t)$ for our initial conditions.

We start by defining an $N \times N$ matrix ${\bf A}'$, which is obtained from
${\bf A}$ in Eq.~\eqref{eq:barAdefBQP} by replacing $W_{\sl} \rightarrow U_{\sl} \otimes \one_2$ when $U_{\sl}$ is a Hadamard gate (i.e., ${\bf A}'$ is allowed to have off-diagonal entries of both signs).  We do this for simplicity to remove the dependence on the $\ket{-}$ state in the ancilla.  Specifically let $U_{\sl} \in \{H,X,{\rm Toff}\}$ and
\begin{align}
\label{eq:A'}
{\bf A}' = 4 \one_N -\sum_{\sl=1}^{L} (\ket {\sl} \bra{\sl+1}  +\ket{\sl+1}\bra {\sl} ) \otimes U_{\sl} \otimes \one_2 \;.
\end{align}
The block-encoding property of Eq.~\eqref{eq:blockEncode} implies
 \begin{align}
\dot{\vec y}(t)&=\cos(\sqrt{\bA}t) \dot{\vec y}(0) \nn
& =\cos(\sqrt{\bA}t)  \ket{\sl=1} \otimes \ket 0^{\otimes q} \otimes \ket -
\nn
\label{eq:poft}
&=  \cos(\sqrt{\bA'}t) \ket{\sl=1} \otimes \ket 0^{\otimes q}   \otimes \ket - \;.
\end{align}
That is, $W_{\sl}$
and $U_{\sl} \otimes \one_2$ act identically on the subspace
specified by $\ket-$. 
As a corollary, the velocities of the oscillators
are such that the last register remains in $\ket -$.
The matrix ${\bf A}'$ is also $4$-sparse
and satisfies ${\bf A}' \ge 0$, implying that $\sqrt{\bA'}$ is a well defined  $N \times N$ Hermitian matrix that is row-computable.

We define the $N \times N$ select unitary ($U_0:=\one_{2^q}$)
\begin{align}
S:= \sum_{\sl=0}^{L} \ketbra {\sl+1} \otimes U_{\sl} \ldots U_0 \otimes \one_2 \;,
\end{align}
which implements the unitary in the circuit up to $U_{\sl}$ when the first register is $\ket {\sl+1}$, i.e., a vector in $\mathbb R^{L+1}$ with entry 1 in position $l+1$ and zeroes elsewhere. Note that $S$ and $S^\dagger$ leave any vector represented as $\ket{\sl=1} \otimes \ket \psi$, $\ket \psi \in \mathbb C^{2^{q+1}}$, invariant. We also define the $N \times N$ Hermitian matrix
\begin{align}
{\bf X}:= 4 \one_N - \sum_{\sl=1}^{L} (\ket {\sl}\! \bra{\sl+1} + \ket{\sl+1}\!\bra \sl) \otimes \one_{2^{q+1}} \;.
\end{align}
Then, by inspection of Eq.~\eqref{eq:A'}, we obtain from the fact that each $U_\sl$ in our gate set is Hermitian and unitary
\begin{align}
{\bf A}'=S {\bf X} S^\dagger \;.
\end{align}
Since $U_0=\one_{2^q}$, for our initial conditions $\dot{\vec y}(0)=\ket{\sl=1} \otimes \ket 0^{\otimes q}  \otimes \ket -$, 
\begin{align}
\dot{\vec y}(t) &=   \cos(\sqrt{\bA'}t)\ket{\sl=1} \otimes \ket 0^{\otimes q}  \otimes \ket - \nn
& = S \cos(\sqrt{{\bf X}}t) S^\dagger \ket{\sl=1} \otimes \ket 0^{\otimes q}  \otimes \ket -  \nn
& = S \cos(\sqrt{{\bf X}}t) \ket{\sl=1} \otimes \ket 0^{\otimes q}  \otimes \ket - \;.
\end{align}
Since ${\bf X}$ acts trivially on $\mathcal{H}_{\rm comp} \otimes \mathcal{H}_2$,
it is useful to define 
\begin{align}
{\bf X}':= 4 \one_{L+1} -\sum_{\sl=1}^L (\ket \sl \! \bra{\sl+1} +\ket{\sl+1}\!\bra \sl) \;,
\end{align}
implying
\begin{align}
\dot{\vec y}(t) &= S \cos(\sqrt{{\bf X}}t) \ket{\sl=1} \otimes \ket 0^{\otimes q}  \otimes \ket - \nn
& = S \left(  \cos(\sqrt{{\bf X'}}t) \ket{\sl=1} \right) \otimes \ket 0^{\otimes q}  \otimes \ket - \;.
\end{align}
This compact expression for the vector of velocities
is useful to show the desired result.

In general, we can write
\begin{align}
\cos(\sqrt{{\bf X'}}t) \ket{\sl=1} = \sum_{\sl=1}^{L+1} \alpha_\sl(t) \ket \sl \;,
\end{align}
with $\alpha_\sl (t)\in \mathbb R$, because the Taylor expansion of the cosine function has only even powers of $\sqrt{\bf X'}$. Then
\begin{align}
\dot{\vec y}(t) &=  S \left(  \cos(\sqrt{{\bf X'}}t) \ket{\sl=1} \right) \otimes \ket 0^{\otimes q}  \otimes \ket - \nn
& = \sum_{\sl=1}^{L+1} \alpha_\sl(t) \ket \sl \otimes (U_{\sl-1} \ldots U_0 \ket 0^{\otimes q} ) \otimes \ket - \;.
\end{align}
Hence, when the first register in the tensor product is $\sl=L+1$, the second register is $U_L \ldots U_1 \ket 0^{\otimes q}$, and recall that our goal was to decide if this state was close to or far from $\ket{0}^{\otimes q}$.

So if we measure the output state, in the case where the circuit's output was close to $\ket{0}^{\otimes q}$, we will see $\ket{L+1}\ket{0}^{\otimes q}$ with probability $(1-\frac{1}{\exp(\sqrt q)})|\alpha_{L+1}(t)|^2 \geq \frac{1}{2}|\alpha_{L+1}(t)|^2$, and if the circuit's output had $1/\exp(\sqrt q)$ overlap with $\ket{0}^{\otimes q}$, then we'll see $\ket{L+1}\ket{0}^{\otimes q}$ with at most $1/\exp(\sqrt q)$ probability. 

So all we have to show is that there is exists a $t=\polylog (N)$ such that $|\alpha_{L+1}(t)|^2=\Omega(1/\polylog (N))=\Omega(1/\poly(q))$, which will allow us to distinguish the two cases by using our algorithm $\cO(\polylog (N))=\poly(q)$ times.

\subsection{Establishing inverse polynomial overlap}

It is not too hard to establish inverse polynomial overlap $|\alpha_{L+1}(t)|^2 = 
\Omega(1/\polylog (N))$ for $t=\polylog (N)$. (In fact, it is possible to get perfect overlap as we discuss in the next section.)
To prove this result, we use some  known results on the spectral properties of ${\bf X}'$. 
Its eigenvectors are
\begin{align}
\ket{\phi_\sl}=\sqrt{\frac 2 {L+2}} \sum_{\sl'=1}^{L+1} \sin \left(\frac{\pi \sl \sl'} {L+2} \right) \ket{\sl'} \, ,
\end{align}
and the eigenvalues are
\begin{align}
\gamma_\sl = 4- 2 \cos \left( \frac{\pi \sl}{L+2} \right) \, ,
\end{align}
where $\sl \in[L+1]$. Note that $6 > \gamma_\sl > 2$.
The $\ket{\phi_\sl}$'s are also the eigenvectors of $\sqrt{{\bf X'}}$, whose eigenvalues are $\gamma'_\sl:=\sqrt{\gamma_\sl}$ and $\sqrt 6 > \gamma'_l>\sqrt 2$. Let $\Delta_\sl=\gamma_{\sl+1}-\gamma_{\sl}$ be the spectral gaps of ${\bf X}'$ and note that $\Delta_\sl >0$ and $\Delta_\sl/\gamma_\sl>0$. The spectral gaps of $\sqrt{{\bf X'}}$ satisfy (for $l \in [L]$)
\begin{align}
    \gamma'_{\sl+1} -\gamma'_\sl &\ge \frac{\pi}{L+2} \min_{x\in(\sl\pi/(L+2),(\sl+1)\pi/(L+2))} \frac{d}{dx} \sqrt{4- 2 \cos (x)} \nn
    &= \frac{\pi}{L+2} \min_{x\in(\sl\pi/(L+2),(\sl+1)\pi/(L+2))} \frac{\sin(x)}{\sqrt{4- 2 \cos (x) }} \nn
    &\ge \frac{\pi}{L+2} \min_{x\in(\sl\pi/(L+2),(\sl+1)\pi/(L+2))} \frac{\sin(x)}{\sqrt{2}}.
\end{align}
This derivative is zero at $x=0$ and $\pi$, which is outside the region of values of $x$.
The derivative will therefore take its smallest (nonzero) values at the nearest allowed values of $x$:
\begin{equation}
    x=\frac{\pi}{L+2},\  \textrm{and} \ x=\frac{\pi(L+1)}{L+2}.
\end{equation}
We have ($L\ge 1$)
\begin{equation}
    \sin\left(\frac{\pi}{L+2}\right)=\sin\left(\frac{\pi(L+1)}{L+2}\right) \ge \frac 1 {\sqrt 2}\frac{\pi}{L+2}.
\end{equation}
Thus we have 
\begin{equation}
    \gamma'_{\sl+1} -\gamma'_\sl \ge \frac 1{ 2}\left(\frac{\pi}{L+2}\right)^2 .\label{eq:gapX'}
\end{equation}
Hence, the smallest spectral gap of $\sqrt{{\bf X'}}$, $\Delta'=\min_\sl(\gamma'_{\sl+1} -\gamma'_\sl)$, satisfies
$\Delta' = \Omega(1/L^2)$.

The amplitude $\alpha_{L+1}(t)$ that we are interested in can be written as
\begin{align}
\alpha_{L+1}(t)& =\bra {L+1}\cos( \sqrt{{\bf X'}}t) \ket 1 \nn
& =\sqrt{\frac 2 {L+2}}\bra {L+1}\cos( \sqrt{{\bf X'}}t) \sum_{\sl=1}^{L+1} \sin \left( \frac{\pi \sl}{L+2}\right) \ket{\phi_\sl} \nn
& = \frac 2 {L+2} \sum_{\sl=1}^{L+1} \cos(\gamma_\sl' t)\sin \left( \frac{\pi \sl(L+1)}{L+2}\right)\sin \left( \frac{\pi \sl}{L+2}\right) \nn
& = \frac 2 {L+2} \sum_{\sl=1}^{L+1} (-1)^{\sl-1} \sin^2 \left( \frac{\pi \sl}{L+2}\right)\cos(\gamma_\sl' t) \;.
\end{align}
Recall that the probability of measuring $\ket{L+1}$
after evolving with $\sqrt{{\bf X'}}$ for time $t$, when the initial state is $\ket 1$,
is 
\begin{align}\label{eq:alpha}
|\alpha_{L+1}(t)|^2 &= \left(\frac 2 {L+2}\right)^2\sum_{\sl,\sl'=1}^{L+1} (-1)^{\sl+\sl'} \sin^2 \left( \frac{\pi \sl}{L+2}\right)  \sin^2 \left( \frac{\pi \sl'}{L+2}\right) \nn
&\quad \times  \frac 1 4 \left( e^{i t (\gamma'_\sl+\gamma'_{\sl'})} + e^{i t (\gamma'_\sl-\gamma'_{\sl'})}+ e^{i t (-\gamma'_\sl+\gamma'_{\sl'})} + e^{i t (-\gamma'_\sl-\gamma'_{\sl'})}\right) \, ,
\end{align}
where we used that $\cos(\theta) = \frac{1}{2}(e^{i\theta} + e^{-i\theta})$.

If $\sl \ne \sl'$, Eq.~\eqref{eq:gapX'} implies $|\gamma'_\sl - \gamma'_{\sl'}| \ge \pi^2/(2(L+2)^2) \ge \Delta'$, and $|\gamma'_\sl + \gamma'_{\sl'}|\ge 2\sqrt 2$ for all $\sl,\sl'$.
Then, for any $\epsilon>0$, there exists a probability distribution $f(t)$, where $t \in\{0,1,\ldots,T\}$ and $T=\cO((L+2)^2 \log(1/\epsilon))$, such that $f(t)\ge 0$, $\sum_{t=0}^T f(t)=1$, and
\begin{align}
\left|\sum_{t=0}^T f(t) e^{i t (\gamma'_\sl - \gamma'_{\sl'})}\right|& \le \epsilon\, , \qquad \forall \; \sl \ne \sl'\, , \\
\left|\sum_{t=0}^T f(t) e^{i t (\gamma'_\sl + \gamma'_{\sl'})}\right| &\le \epsilon\, , \qquad \forall \; \sl , \sl'\, .
\end{align}
In other words, the absolute value of the Fourier transform of $f(t)$ for frequencies $\omega$ such that  $|\omega|\ge \Delta'$ is upper bounded by $\epsilon$.
One choice for $f(t)$ is the probability distribution obtained by taking $m=\cO(\log(1/\epsilon))$ samples $\{t_1,\ldots,t_m\}$ from a uniform distribution where $t_i \in \{0,1,\ldots,T'\}$, $T'=\cO(L^2)$, and outputting $t=\sum_i t_i$. Other choices can be found in Ref.~\cite{BKS09}.
This also implies
\begin{align}
\left|\sum_{t=0}^T f(t)  \sum_{\sl \ne \sl'} (-1)^{\sl+\sl'}  \sin^2 \left( \frac{\pi \sl}{L+2}\right)  \sin^2 \left( \frac{\pi \sl'}{L+2}\right)  e^{i t (\gamma'_\sl -\gamma'_{\sl'})} \right|  
& \le
 \epsilon   \sum_{\sl \ne \sl'} \sin^2 \left( \frac{\pi \sl}{L+2}\right)  \sin^2 \left( \frac{\pi \sl'}{L+2}\right)\nn
  & = \epsilon \left(\frac{(L+2)^2} 4 - \frac{3(L+2)} 8 \right) \nn
  \label{eq:randomizationerror}
 & \le \epsilon \frac{(L+2)^2} 4\;,
\end{align}
and
\begin{align}
\left|\sum_{t=0}^T f(t)  \sum_{\sl , \sl'} (-1)^{\sl+\sl'}  \sin^2 \left( \frac{\pi \sl}{L+2}\right)  \sin^2 \left( \frac{\pi \sl'}{L+2}\right)  e^{i t (\gamma'_\sl +\gamma'_{\sl'})} \right|  
& \le
 \epsilon   \sum_{\sl, \sl'} \sin^2 \left( \frac{\pi \sl}{L+2}\right)  \sin^2 \left( \frac{\pi \sl'}{L+2}\right)\nn
 \label{eq:randomizationerror2}
  & = \epsilon \frac{(L+2)^2} 4   \;.
\end{align}
Hence, when we analyze $\sum_{t=0}^T f(t) |\alpha_{L+1}(t)|^2$,
the terms that dominate the sum are those that correspond to $l=l'$ in Eq.~\eqref{eq:alpha} and where the phases are $e^{\pm it(\gamma'_l-\gamma'_{l'})}$.
That is, the above equations give
\begin{align}
\left| \sum_{t=0}^T f(t) |\alpha_{L+1}(t)|^2 -\frac 1 2 \sum_{t=0}^T f(t) \left(\frac 2 {L+2}\right)^2 \sum_{\sl=1}^{L+1}  \sin^4 \left( \frac{\pi \sl}{L+2}\right) \right| &\le  2 \left(\frac 2 {L+2}\right)^2  \epsilon \frac{(L+2)^2} 4 \nn
& \le 2\epsilon \;.
\end{align}
The second term on the left-hand side can be computed and is $\frac 3 {4(L+2)}$, which gives
\begin{align}
\label{eq:averageprobability}
\left| \sum_{t=0}^T f(t) |\alpha_{L+1}(t)|^2 -   \frac 3 {4(L+2)}  \right| \le 2 \epsilon .
\end{align}
Let us choose $\epsilon = 1/(4(L+2))$.
This implies that the average $\sum_{t=0}^T f(t)|\alpha_{L+1}(t)|^2$ is $\Omega(1/L)$ and recall that $T=\widetilde \cO(L^2)$. Then, there must exist a $t=\widetilde \cO(L^2)$ such that 
$|\alpha_{L+1}(t)|^2 = \Omega(1/L)$.

Now we are done, since we can run the algorithm for all $t=1,\ldots,T$, since there are only $\poly(L)=\polylog(N)$ values. Alternatively, we can find the desired $t$ by simulating $\sqrt{{\bf X'}}$ for all times $t=1,2,\ldots,T$ on a classical computer at cost that is also polynomial in $L$ or $\polylog(N)$. Note that although our proof does not pin down a specific $t$ for which this works, numerical simulations show that a fixed time $t=\cO(L)$ suffices for any $L$.

\subsection{Bonus: Establishing perfect overlap}

While the argument above completes the proof of \BQP-completeness, we observe that it is also possible to obtain $|\alpha_{L+1}(t)|=1$ in the proof above, which is equivalent to perfect transmission of a disturbance down a spin chain, which was solved in Refs.\ \cite{Vaia,Newton,Racah}.
The principle is to adjust the weights of the operator ${\bf X}'$ as
\begin{equation}
    {\bf X}' = \sum_{\sl=0}^{L} b_\sl \ket{\sl+1}\bra{\sl+1}  - \sum_{\sl=1}^{L} \sqrt{u_\sl} \left(\ket{\sl}\bra{\sl+1}+\ket{\sl+1}\bra{\sl}\right)\, ,
\end{equation}
where we have adjusted the first sum to be consistent with the notation in Ref.~\cite{Racah}.
The weights $b_\sl$ and $u_\sl$ are adjusted so that the eigenvalues are proportional to the squares of a sequence of integers.
Provided the matrix is persymmetric (so $b_\sl=b_{L-\sl}$ and $u_\sl=u_{L+1-\sl}$), then $e^{i\sqrt{{\bf X}'}t}\ket{1}$ gives exactly $\ket{L+1}$ for some $t$.
In particular, in Ref.\ \cite{Vaia}, ${\bf X}'$ has the eigenvalues $2k^2$ for $k=0$ to $L$, which gives the perfect transfer for $t=\pi/\sqrt{2}$.

Here we cannot use that exact result, because 
that would imply certain values for the diagonal entries of ${\bf X}'$ (and hence ${\bf A}$) for which the Hamiltonian would not correspond to a system of coupled oscillators. For our case,
it will suffice if the diagonal entries are
at least 
$2\sqrt{2}$ times the off-diagonal entries.
To obtain the larger on-diagonal entries we can use the analysis in terms of para-Racah polynominals in \cite{Racah}.
For odd $L=2j+1$, one takes (from Eq.~(2.9) of \cite{Racah}, and using $\alpha=1/2$)
\begin{align}
    b_\sl &= \begin{cases}
    \tfrac 12[a(a+j)+c(c+j)+\sl(L-\sl)] & {\rm if~} \sl\ne j,j+1 \, , \\
    a^2+\tfrac 12 j (1+a-c)(1+a+c+j)-\tfrac 12(a-c)(1+j)(a+c+j) & {\rm if~} \sl=j,j+1 \, ,
    \end{cases} \\
    u_\sl &= \begin{cases}
    \frac{\sl(L+1-\sl)(L-\sl+a+c)(\sl-1+a+c)((\sl-j-1)^2-(a-c)^2)}{4(L-2\sl)(L-2\sl+2)} & {\rm if~} \sl\ne j+1 \, , \\
    \tfrac 14 (a-c)^2 (1+j)^2 (a+c+j)^2 & {\rm if~} \sl=j+1 \, .
    \end{cases}
\end{align}
We have flipped the sign of the matrix in Ref.\ \cite{Racah} to be consistent with our usage here.
The eigenvalues as per Eq.~(3.11) of Ref.\ \cite{Racah} are (again flipping the sign from that work)
\begin{align}
    \lambda_{2s} &= (s+a)^2, \quad s=0,\ldots,j \, , \nn
    \lambda_{2s+1} &= (s+c)^2, \quad s=0,\ldots,j \, .
\end{align}
We then get the appropriate set of eigenvalues if $a-c=1/2$.
The matrix is persymmetric if $\alpha=1/2$.
We will also use $L_+=L+1$ in the notation to simplify the form of the expressions.
Taking $a=L_+/2+1/4$ and $c=L_+/2+3/4$, we obtain
\begin{align}
    4 b_\sl &= 5L_+^2/2-1/4-2(\sl-L/2)^2\, , \\
    16 u_\sl &= \sl(2L_+-\sl)(L_+^2-\sl^2) \, .
\end{align}
Exactly the same result is obtained for odd $L=2j$ using the expressions in Eq.~(4.4) of Ref.\ \cite{Racah}.
In either case it is found that the eigenvalues are $\tfrac 14(L+k+1/2)^2$ for $k=0$ to $L$.
Then one can evolve directly to $\sl=L+1$ for $t=2\pi$, so that $|\alpha_{L+1}(2\pi)|=1$.
Alternatively, one can divide the ${\bf X}'$ given here by $L^2$ so the coefficients are $\mathcal{O}(1)$, and evolve for time $2\pi L$.

For the relative values of $b_\sl$ and $u_\sl$, we need to show
\begin{equation}\label{eq:buineq}
    b_\sl \ge \sqrt 2 (\sqrt{u_\sl}+\sqrt{u_{\sl+1}}) \, ,
\end{equation}
for $\sl=0$ to $L$.
Here we use the convention that $u_0=0$, which is given by the formula above.
For $\sl=0$ we want $b_0\ge \sqrt{2u_1}$, so Eq.~\eqref{eq:buineq} can still be used with the convention $u_0=0$.
Note that both $b_\sl$ and $u_\sl$ take their maximum values in the centre, but $b_\sl$ is nonzero at the boundaries whereas $u_\sl$ is close to zero.
This means that $b_\sl/\sqrt{u_\sl}$ is smallest for $\sl=L/2$, and there the ratio is approximately $10/3$, which is significantly larger than $2\sqrt{2}$.

To show this inequality more rigorously, we can use the concavity of $u_\sl$ as a function of $\sl$ and the concavity of the square root function to give
\begin{equation}
    \sqrt{8 u_{\sl+1/2}} \ge \sqrt 2 (\sqrt{u_\sl}+\sqrt{u_{\sl+1}}) \, ,
\end{equation}
where $u_{\sl+1/2}$ is using the same function of $\sl$ (even though it is only meaningful to give coefficients for integer $\sl$).
We find that
\begin{equation}
   b_{\sl}^2-8u_{\sl+1/2} = 1 + 16( L_+^4 - L_+^2 )[1 + (1 - \beta) \beta] + 
16 L_+^4 \left(3 - \beta \{9 - \beta [5 + 4 (2 - \beta) \beta]\}\right) \, ,
\end{equation}
with $\beta = (\sl+1/2)/{L_+}$.
Because we consider values of $\sl\in[0,L]$, we need only consider $\beta\in[0,1]$.
The expression $\left(3 - \beta \{9 - \beta [5 + 4 (2 - \beta) \beta]\}\right)$ is positive; it has its minimum of $1/2$ at $\beta=1/2$.
The expression $L_+^4 - L_+^2$ is non-negative, and $1 + (1 - \beta) \beta$ is positive for $\beta\in[0,1]$.
As a result
\begin{align}
    b_{\sl}^2&\ge 8u_{\sl+1/2} \nn
  \implies \quad  b_{\sl} & \ge \sqrt{8u_{\sl+1/2}} \ge \sqrt 2 (\sqrt{u_\sl}+\sqrt{u_{\sl+1}}) \, ,
\end{align}
as required.

\section{Phase estimation approach and proof of Thm.~\ref{thm:generalized}}
\label{app:QPE}

We provide proof of Thm.~\ref{thm:generalized}, which provides a method for simulating the dynamics a broader family of classical systems under the harmonic approximation than the previous method described in Thm.~\ref{problem:oscillators}, at the price of reduced scaling with the desired error tolerance $\epsilon$. 

\begin{proof}[Proof of Thm.~\ref{thm:generalized}]
We describe the simulation of ${\bf H}=-X \otimes \sqrt{\bA}$, where $X$ is the single-qubit Pauli bit flip operator and $\sqrt{\bA}$ is the principal square root of ${\bf A}$, using a standard quantum phase estimation (QPE) approach. 
In this approach, we run QPE
with a unitary that is a walk operator built from a block encoding of $\mathbf{H^{(2)}}:=-X \otimes \mathbf{A}$, i.e., a unitary that contains $\mathbf{H^{(2)}}/\lam$ in a block, where $\lam>0$ is due to normalization reasons. This QPE provides estimates of the eigenvalues of the walk operator, which can be converted into estimates of the eigenvalues of $\mathbf{H^{(2)}}$
and ultimately of ${\bf H}$. Once these estimates are obtained, 
simulation of ${\bf H}$ is simply applying a phase that is a product of the eigenvalue estimate and $t$. Last, we reverse QPE and other calculations to uncompute the estimated eigenvalues.

In the case where we have oracle access to ${\bf K}$, then ${\bf A}={\bf B}{\bf B}^\dagger$ and we can obtain a block encoding of ${\bf A}$ -- and hence  of $\mathbf{H^{(2)}}$ -- from the block encodings of ${\bf B}$ and ${\bf B}^\dagger$. These were given in App.~\ref{app:accessmodel}.
This approach would imply $\lam=2\aleph d$.
In the case of generalized coordinates described in Sec.~\ref{sec:generalization}, where we assume oracle access to $d$-sparse ${\bf A}$ but not ${\bf K}$, then $\mathbf{H^{(2)}}$ can be block encoded using standard methods with $\Lambda=\cO(\|{\bf H^{(2)}}\|_{\rm max} d)=\mathcal{O}(\|{\bf A}\|_{\rm max} d$)~\cite{LC19}.  
Let $\ket{\lambda_j}$ denote the eigenvectors of ${\bf A}$
of eigenvalue $\lambda_j \ge 0$. Then, we can write 
$\ket {\eta_X} \ket{\lambda_j}=\ket {\eta_X,  \lambda_j}$ for the eigenvectors of 
$\mathbf{H^{(2)}}$, where $\eta \in \{0,1\}$, and $\ket {0_X}=\ket -$ and $\ket {1_X}=\ket +$ are the eigenvectors of $X$. The corresponding eigenvalues of $\mathbf{H^{(2)}}$ are $\gamma_{\eta,j}:=(-1)^{\eta} \lambda_j$.
Our first step is to estimate the eigenvalue $\gamma_{\eta,j}$ within fixed error, and then propagate that error into the maximum error that can be observed in $(-1)^{\eta} \sqrt{\lambda_j}$, which is the corresponding eigenvalue of ${\bf H}$,
using an arithmetic circuit on the outputs of the QPE routine.
Phase estimation can be used to provide a confidence interval $S_{\eta,j}$ for the eigenvalue estimates of $\mathbf{H^{(2)}}$, which we call $x$.
That is, for all $x\in S_{\eta,j}$ that are estimates of $\gamma_{\eta,j}$, and for a given $\epsilon_{\rm PE}$, we define $\epsilon_{\eta,j}(x) := x-\gamma_{\eta,j}$ and $S_{\eta,j}:=\{x : \epsilon_{\eta,j}(x) \in [-\epsilon_{\rm PE},\epsilon_{\rm PE}]\}$.
To describe the contribution to estimates outside the confidence interval we use a state $\ket{\phi_{\eta,j}}$ of unit norm and an amplitude $\sqrt{\delta_{\eta,j}}$, where $\delta_{\eta,j}>0$ depends on $\eta$ and $j$, and $1-{\delta_{\eta,j}}$ is the confidence level when the input state is $\ket {\eta_X,\lambda_j}$. 
Eventually, we will set $\delta_{\eta,j} \le \delta_{\rm PE}$ for all $\eta$ and $j$, where $\delta_{\rm PE}>0$ is determined below.
Then, when implementing QPE
on input eigenstate $\ket {\eta_X,\lambda_j}$
and using a unitary that provides eigenvalue estimates of $\mathbf{H^{(2)}}$ (e.g., a walk operator),
the state is approximately transformed as
\begin{equation}
    \ket {\eta_X,\lambda_j} \mapsto \ket {\eta_X,\lambda_j}  \left(\sqrt{1-\delta_{\eta,j}}  \sum_{x\in S_{\eta,j}} b_{x} \ket{x} + \sqrt{\delta_{\eta,j}} \ket{\phi_{\eta,j}}\right) \, ,
\end{equation}
for some unit vector $\vec b$ that gives the probability distribution $|b_x|^2$ for phase estimates within the confidence interval (the states $\ket x$ can be basis states that encode the estimates of $\gamma_{\eta,j}$). 
When performing QPE using Kaiser windows \cite{Sanders2020},
the number of invocations of the walk operator (i.e., the query complexity) built from the block encoding of $\mathbf{H^{(2)}}$ is then~\cite{Kaiser}
\begin{align}
\label{eq:QPEcomplexity}
    \cO( \lam \log(1/\delta_{\rm PE})/\epsilon_{\rm PE}) = \cO(\|{\mathbf{A}}\|_{\max}d \log(1/\delta_{\rm PE})/\epsilon_{\rm PE}) \;.
\end{align}

The walk operator we use in QPE combines
the block encoding of $\mathbf{H^{(2)}}$ with other two-qubit gates, including a reflection on some ancilla qubits \cite{BerryNPJ18}.  That is, each use of the walk operator
requires $\cO(1)$ uses of the oracle to access ${\bf A}$.
The eigenvalues of the walk operator are then $\mp e^{\pm i\, \arcsin(\lambda_j/\lam)}$.
The actual eigenvalues of $\mathbf{H^{(2)}}$, which are $\gamma_{\eta,j}$, can be estimated by QPE using the walk operator, then taking the sine of the result to give an estimate of $\gamma_{\eta,j}/\lam$. Note that $|\gamma_{\eta,j}|/\lam \le 1$ for all $\eta \in \{0,1\}$ and $j \in[N]$. Aiming for $\epsilon_{\rm PE}/\Lambda$ uncertainty in the phase estimation of $\arcsin(\lambda_j/\Lambda)$ yields the following estimate in $\gamma_{\eta,j}/\lam$:
\begin{equation}
    \left|\gamma_{\eta,j}/\lam - \sin(\arcsin(\gamma_{\eta,j}/\lam) + \epsilon_{\rm PE}/\lam)\right| \le \epsilon_{\rm PE}/\lam \, .
\end{equation}
Then, in QPE we are first obtaining an estimate $x=\gamma_{\eta,j}+\epsilon_{\eta,j}(x)$, 
where the error $\epsilon_{\eta,j}(x)$ satisfies $|\epsilon_{\eta,j}(x)|\le \epsilon_{\rm PE}$ within a $(1-\delta_{\rm PE})$-confidence level. We want to convert this estimate into an estimate of an eigenvalue of ${\bf H}$, i.e., an estimate of $(-1)^{\eta} \sqrt{\lambda_j}={\rm sign}( \gamma_{\eta,j})\sqrt{|\gamma_{\eta,j}|}$, which is obtained by computing ${\rm sign}(x) \sqrt{|x|}$.
This gives the error
\begin{align}
\nonumber
    \Delta_{\eta,j}(x) &:= {\rm sign}(x) \sqrt{|x|}-
    {\rm sign}(\gamma_{\eta,j})\sqrt{|\gamma_{\eta,j}|} \\
    &={\rm sign}(\gamma_{\eta,j}+\epsilon_{\eta,j}(x))\sqrt{|\gamma_{\eta,j}+\epsilon_{\eta,j}(x)|} - {\rm sign}(\gamma_{\eta,j})\sqrt{|\gamma_{\eta,j}|} \, ,
\end{align}
which we seek to bound.
In the case where $\gamma_{\eta,j}\ge 0$ (i.e., $\eta=0$ and $\gamma_{\eta,j}=\lambda_j$), we have
\begin{equation}
| \Delta_{\eta,j}(x)|=    \left| {\rm sign}(\lambda_j+\epsilon_{\eta,j}(x))\sqrt{| \lambda_j+\epsilon_{\eta,j}(x)|} - \sqrt{\lambda_j} \right| \, ,
\end{equation}
and in the case where $\gamma_{\eta,j}\le 0$ (i.e., $\eta=1$ and $\gamma_{\eta,j}=-\lambda_j$), we have
\begin{equation}
  | \Delta_{\eta,j}(x)|=  \left| {\rm sign}(-\lambda_j+\epsilon_{\eta,j}(x))\sqrt{|-\lambda_j+\epsilon_{\eta,j}(x)|} +\sqrt{\lambda_j} \right| \, .
\end{equation}
If we replace $\epsilon_{\eta,j}(x)$ with $-\epsilon_{\eta,j}(x)$ this becomes
\begin{equation}
    \left| {\rm sign}(\lambda_j+\epsilon_{\eta,j}(x))\sqrt{|\lambda_j+\epsilon_{\eta,j}(x)|} -\sqrt{\lambda_j} \right| \, ,
\end{equation}
and so bounding the expression for the case $\eta=0$ will be sufficient to account for all cases.
To find the bound for this case we will show that for any $a\in \mathbb{R}$
\begin{equation}
    \left| {\rm sign}(1+a)\sqrt{|1+a|} - 1 \right| \le \sqrt{2} \min(|a|,\sqrt{|a|})\, .
\end{equation}
There are four cases to consider to prove this expression.
\begin{enumerate}
    \item For $a\ge 1$ we can prove the inequality by using
    \begin{align}
    1+a \le 1+a+2\sqrt{a} \; &\implies \; \sqrt{1+a} \le 1+\sqrt{a} \nn
    & \implies  \sqrt{1+a} -1 \le \sqrt{a} \nn
    & \implies {\rm sign}(1+a)\sqrt{|1+a|} - 1 \le \sqrt{|a|}
    \, .
\end{align}
Since $\sqrt{|a|}\le |a|$ for $a\ge 1$, this proves the inequality.
\item For $a\in[0,1]$, we can use
\begin{align}
\sqrt{ 1 +a} \le 1+a \; &\implies \; 
    \sqrt{ |1 +a|} - 1 \le |a| \nn
    & \implies {\rm sign}(1+a)\sqrt{|1+a|} - 1 \le |a|
    \, .
\end{align}
Since $|a| \le \sqrt{|a|}$ for $a\in[0,1]$, this proves the inequality.
\item For $a\in [-1,0]$,
\begin{align}
    \sqrt{ |1 +a|} \ge |1+a| \; &\implies \; 
    \sqrt{ |1 +a|} - 1 \ge a \nn
    & \implies \; 1-\sqrt{ |1 +a|} \le -a \nn
    & \implies \left|  {\rm sign}(1+a)\sqrt{|1+a|} - 1 \right| \le |a| \, .
\end{align}
Again for $a\in [-1,0]$ we can use $|a| \le \sqrt{|a|}$.
This time we need to take the absolute value because $\sqrt{ |1 +a|}\le 1$.
\item For $a\le -1$ we can use
\begin{align}
0 \le (|a|-2)^2 \; & \implies \; 4(|a|-1) \le a^2 \nn
& \implies \; 2\sqrt{|a|-1} \le |a| \nn
& \implies \; (|a|-1) + 2\sqrt{|a|-1} + 1 \le 2|a| \nn
& \implies \; \sqrt{|a|-1} + 1 \le \sqrt{2|a|} 
\nn
& \implies \; \left| {\rm sign}(1+a)\sqrt{|1+a|} - 1 \right| \le \sqrt{2|a|} 
\, .
\end{align}
That also implies that it is upper bounded by $\sqrt{2}|a|$.
\end{enumerate}
The result gives us the bound on the error in the signed square root as follows:
\begin{equation}
    \Delta_{\eta,j}:=\max_{x:|x-\gamma_{\eta,j}|\le \epsilon_{\rm PE}} |\Delta_{\eta,j}(x)|\le \min\left( \epsilon_{\rm PE}\sqrt{2/\lambda_j} , \sqrt{2\epsilon_{\rm PE}} \right) \, .
\end{equation}
Hence, from the estimate $x$ of $\gamma_{\eta,j}$ within error at most $\epsilon_{\rm PE}$, we can compute an estimate of $(-1)^{\eta} \sqrt{\lambda_j}$, the eigenvalue of ${\bf H}$, within error at most $\Delta_{\eta,j}$ as above.

To implement $e^{i t X \otimes \sqrt{\bA}}$, we apply a phase factor proportional to the estimated eigenvalue from QPE. 
Then the state after applying the phase factor is transformed as
\begin{align}
\nonumber
    \ket{\eta_X,\lambda_j} &\mapsto \ket{\eta_X,\lambda_j} \left(\sqrt{1-\delta_{\eta,j}}  \sum_{x\in S_{\eta,j}} b_{x} e^{-i t {\rm sign}(x)\sqrt{|x|} }\ket{x} + \sqrt{\delta_{\eta,j}} \sket{\phi_{\eta,j}'}\right)\\
    & =e^{-i t (-1)^{\eta} \sqrt{\lambda_j}}\ket{\eta_X,\lambda_j} \left(\sqrt{1-\delta_{\eta,j}}  \sum_{x\in S_{\eta,j}} b_{x} e^{-i t \Delta_{\eta,j}(x) }\ket{x} + \sqrt{\delta_{\eta,j}} \sket{\phi_{\eta,j}'}\right) \nonumber\\
    &= e^{-i t (-1)^{\eta} \sqrt{\lambda_j}}\ket{\eta_X,\lambda_j}  \Biggr[\left(\sqrt{1-\delta_{\eta,j}}  \sum_{x\in S_{\eta,j}} b_{x} \ket{x} + \sqrt{\delta_{\eta,j}} \ket{\phi_{\eta,j}}\right) \nonumber\\
    &\qquad\qquad + \left(\sqrt{1-\delta_{\eta,j}}  \sum_{x\in S_{\eta,j}} b_{x} (e^{-it \Delta_{\eta,j}(x) }-1)\ket{x} + \sqrt{\delta_{\eta,j}} (\sket{\phi_{\eta,j}'}-\ket{\phi_{\eta,j}})\right) \Biggr] \, .
\end{align}
The states $\sket{\phi_{\eta,j}}$ and $\sket{\phi_{\eta,j}'}$ are states of the ancillas of unit norm, supported in the space orthogonal to $\{\ket x : x \in S_{\eta,j}\}$.
After the phase was applied, we need to invert QPE.
The inverse phase estimation and projection onto the zero state of the ancillas that were used to estimate the eigenvalues corresponds to applying
\begin{equation}
    \sbra{\tilde \lambda_{\eta,j}} = \sum_{x\in S_{\eta,j}} b_x^* \bra{x} \, ,
\end{equation}
to give
\begin{align}
&  e^{-i t {(-1)^{\eta}} \sqrt{\lambda_j}}\ket{\eta_X,\lambda_j}  \Biggr[\sqrt{1-\delta_{\eta,j}}   + \sqrt{\delta_{\eta,j}} \langle\tilde \lambda_{\eta,j} \ket{\phi_{\eta,j}} \nonumber\\
   &\qquad\qquad + 
    \sqrt{1-\delta_{\eta,j}}  \sum_{x\in S_{\eta,j}} |b_{x}|^2 (e^{-it\Delta_{\eta,j}(x) }-1) + \sqrt{\delta_{\eta,j}} \left(\langle \tilde \lambda_{\eta,j}\ket{\phi_{\eta,j}'}-\langle {\tilde \lambda}_{\eta,j} \ket{\phi_{\eta,j}}\right) \Biggr] \, .
\end{align}
The error can be bounded from the Euclidean distance between the above state and the correct state $e^{-i t {(-1)^{\eta}} \sqrt{\lambda_j}}\ket{\eta_X,\lambda_j}$. Using $\delta_{\eta,j} \le \delta_{\rm PE}$,
this distance is upper bounded by
\begin{equation}
    1-\sqrt{1-\delta_{\rm PE}} + 3\sqrt{\delta_{\rm PE}} + \sqrt{1-\delta_{\rm PE}}\langle e^{-it \Delta_{\eta,j}(x)}-1 \rangle \, ,
\end{equation}
where the expectation value  is on the probability distribution given by $|b_x|^2$.
This can be further upper bounded by
\begin{equation}
    \max_{\eta,j}  \Delta_{\eta,j}t +4\sqrt{\delta_{\rm PE}}\;.
\end{equation}
To appropriately bound the error by $\epsilon$ as required by Problem~\ref{problem:harmonicapprox}, we can take
\begin{equation}
    4\sqrt{\delta_{\rm PE}} = \epsilon/2 \, ,
\end{equation}
so $\delta_{\rm PE} = \epsilon^2/64$.
Because measurement of the phase with this confidence level has a factor $\cO(\log(1/\delta_{\rm PE}))$ in the complexity in Eq.~\eqref{eq:QPEcomplexity}, it corresponds to a factor of $\mathcal{O}(\log(1/\epsilon))$.
Then for 
\begin{align}
\nonumber
    \max_{\eta,j} \Delta_{\eta,j} & \le \max_{\eta,j} \min\left( \epsilon_{\rm PE}\sqrt{2/ \lambda_j} , \sqrt{2\epsilon_{\rm PE}} \right) \\
    & = \min\left( \epsilon_{\rm PE}\sqrt{2/\min_j \lambda_j} , \sqrt{2\epsilon_{\rm PE}} \right),
\end{align}
to satisfy $\max_{\eta,j} \Delta_{\eta,j} t \le \epsilon/2$, we can take
\begin{equation}
    \epsilon_{\rm PE} = \max \left( \frac{\epsilon\sqrt{\min_j\lambda_j}}{2t\sqrt{2}} , \frac{\epsilon^2}{8t^2} \right) .
\end{equation}
Using these expressions in Eq.~\eqref{eq:QPEcomplexity} gives us an overall query complexity
\begin{equation}
    \mathcal{O} \left(  \|{\mathbf{A}}\|_{\max} d \log(1/\epsilon) \min \left( \frac{t\sqrt{\|\mathbf{A}^{-1}\|}}{\epsilon} , \frac{t^2}{\epsilon^2} \right)\right) 
\end{equation}
for the phase estimation approach. This is the stated result.

There are further elementary two-qubit gates arising from three main areas.
\begin{enumerate}
    \item Computing ${\rm sign}(x)\sqrt{|x|}$, the estimate of the eigenvalue of ${\bf H}$, multiplying by $t$, and then applying the phase factor.
    The dominant complexity is that from computing the square root, which scales using Newton iteration and textbook multiplication  as $\widetilde{\cO}(\log^2(1/\epsilon))$.  The phase rotation then requires a linear number of controlled rotation gates based on the target, which in turn requires at most $\cO(\log(1/\epsilon))$ two-qubit gates to implement.  Thus the former cost dominates.
    \item The gates for the implementation of the block encoding will be logarithmic in $N$ and the allowable error of the block encoding.
    Because the allowable error in the block encoding needs to be $\epsilon$ divided by the number of block encodings in the phase estimation, there will be logarithmic factors in many of the parameters here.
    \item The gate complexity of preparing the control states states for phase estimation with optimal confidence intervals is at most linear in the dimension of this control register \cite{linearprep}, which is the same as the number of oracle calls.
    This is the least significant contribution to the complexity and gives no logarithmic factors.
\end{enumerate}
In quoting the complexity we give the logarithmic factor in $N$ coming from implementing the block encoding, but for simplicity use $\widetilde{\mathcal{O}}$ and do not explicitly give the logarithmic factors in other parameters.
\end{proof}

\section{Initial state preparation}
\label{app:initialstate}

Our algorithm for solving Problem~\ref{problem:oscillators} accepts as input a circuit $\cW$ that prepares an initial state of the form in Eq.~\eqref{eq:encodedstate} for $t=0$.  
In this appendix we explain how we might create such a circuit given only the ability to separately create superpositions over the initial positions and initial velocities. Our construction is related to that in Lemma~\ref{lem:blockencodingB} of App.~\ref{app:accessmodel}.

As in Sec.~\ref{sec:algorithm}, we consider a setting where the $N\times N$ matrices ${\bf M}$ and ${\bf K}$ can be queried through the use of a unitary $\cS$ that gives the positions and non-zero entries of these matrices.

\begin{lemma}
\label{lemma:stateprep}
Let ${\bf K}$ be the $N \times N$ symmetric and $d$-sparse matrix of spring constants, $\kappa_{\max} \ge \kappa_{jk} \ge 0$,  ${\bf M}\succ 0$ be the $N \times N$ diagonal matrix of masses, and $m_{\max} \ge m_j>0$, where $\kappa_{\max}$ and $m_{\max}$ are known. 
Assume we are given access to  ${\bf K}$ and ${\bf M}$ through an oracle $\cS$, and access to a unitary $\cU$ that performs the map
\begin{align*}
     \alpha\ket 0 \mapsto  \ket 0  \sket{\dot{\vec x} (0)}, \qquad
     \beta\ket 1   \mapsto \ket 1 \sket{{\vec x} (0)} \, ,
\end{align*}
where 
\begin{align*}
 \sket{\dot{\vec x}(0)}& = \sum_{j=1}^N \dot x_j(0) \ket j, 
\qquad\sket{\vec x(0)} = \sum_{j=1}^N x_j(0) \ket j \, ,
\end{align*}
are normalized states that encode the initial states of the oscillators in their amplitudes, and $\alpha$ and $\beta$ are known norms of the vectors $\dot{\vec{x}}(0)$ and $\vec x(0)$, respectively.
Then, there exists a quantum algorithm that prepares a state that is $\epsilon$-close in Euclidian norm to 
\begin{align}
\label{eq:appinitialstate}
\sket{\psi(0)}=\frac 1 {\sqrt{2E}} \begin{pmatrix}
    \sqrt{\bM} \dot{\vec x}(0) \cr i \vec \mu(0)
\end{pmatrix} \;,
\end{align}
where $E>0$ is a known constant (the energy of the classical oscillators) and $\vec \mu(0) \in \mathbb R^M$ for $M=N(N+1)/2$ is a vector whose entries are $\sqrt{\kappa_{jj}}x_j(0)$ or $\sqrt{\kappa_{jk}}(x_j(0)-x_k(0))$, $k>j$. 
The quantum circuit makes $Q_{\rm ini}=\cO(\sqrt{E_{\max}d/E})$ uses of $\cU$, $\cS$, and its inverses, in addition to
\begin{equation}
    G_{\rm ini}=\cO(\sqrt{E_{\max}d/E} \; \polylog (N E_{\rm max}/(E\epsilon)))
\end{equation}
 two-qubit gates.
 Here $E_{\max}=\frac {m_{\max}} 2 \sum_j (\dot x_j(0))^2 + \frac {\kappa_{\max}} 2 \sum_j (x_j(0))^2$ is the energy of a system of $N$ uncoupled oscillators where all masses are $m_{\max}$ and all individual spring constants are $\kappa_{\max}$, and for the same initial conditions.
\end{lemma}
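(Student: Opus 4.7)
The proof begins by noting that with the incidence-matrix choice of ${\bf B}$ from Sec.~\ref{sec:quantumevolution}, $\vec\mu(0)={\bf B}^\dagger\sqrt{\bM}\,\vec x(0)$, so the target state in Eq.~\eqref{eq:appinitialstate} decomposes into a top block $\sqrt{\bM}\,\dot{\vec x}(0)/\sqrt{2E}$ obtained from $\dot{\vec x}(0)$ by a diagonal action, and a bottom block $i\,{\bf B}^\dagger\sqrt{\bM}\,\vec x(0)/\sqrt{2E}$ obtained from $\vec x(0)$ by a $d$-sparse linear map. Both blocks can be produced from the two normalized outputs of $\cU$ by operations that are block-encodable from $\cS$.

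My plan is to (i) initialize a branch ancilla in the normalized state $(c_0\ket 0+i c_1\ket 1)/\sqrt{c_0^2+c_1^2}$ with $c_0=\alpha\sqrt{m_{\max}}$ and $c_1=\beta\sqrt{2\kappa_{\max}d}$, then apply $\cU$ controlled on this ancilla; (ii) on branch $\ket 0$, use the inequality-testing primitive of App.~\ref{app:accessmodel} against the oracle output $\bar m_j$ to inject the amplitude factor $\sqrt{m_j/m_{\max}}$ on a ``success'' qubit, producing amplitudes $\sqrt{m_j}\,\dot x_j(0)/(\alpha\sqrt{m_{\max}})$; (iii) on branch $\ket 1$, apply a block encoding of $\tilde{\bf B}^\dagger:={\bf B}^\dagger\sqrt{\bM}$ with normalization $\sqrt{2\kappa_{\max}d}$, yielding amplitudes $\mu_{jk}(0)/(\beta\sqrt{2\kappa_{\max}d})$ on the accepted subspace; and (iv) apply exact amplitude amplification to the accepted (``all-success'') subspace. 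The block encoding of $\tilde{\bf B}^\dagger$ is a minor modification of Lemma~\ref{lem:blockencodingB}: one uses the simpler inequality $\bar\kappa_{jk}/2^{r_\kappa}\le(x/2^r)^2$ (no $\bar m_j$ factor) in place of Eq.~\eqref{eq:Bblockineq}, which produces the amplitude $\sqrt{\kappa_{jk}/\kappa_{\max}}/\sqrt{d}$ on $\ket k$, the extra $\sqrt 2$ in the final $\sqrt{2\kappa_{\max}d}$ coming from the sign-handling $HZ$ of step 8 of that proof.

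With these choices, the accepted-subspace amplitudes from the two branches combine into exactly the ratio $\sqrt{K(0)}:\sqrt{U(0)}$ required by $\sket{\psi(0)}$, because the weights $c_0,c_1$ cancel the branch-dependent normalizations and the phase $i$ in $c_1$ supplies the imaginary factor in the bottom block. The total success probability is therefore $2E/(c_0^2+c_1^2)$. Using the two bounds $m_{\max}\alpha^2\le 2E_{\max}$ and $\kappa_{\max}\beta^2\le 2E_{\max}$, which follow immediately from the definition $E_{\max}=\tfrac{m_{\max}}{2}\alpha^2+\tfrac{\kappa_{\max}}{2}\beta^2$, I obtain
\[
c_0^2+c_1^2=m_{\max}\alpha^2+2\kappa_{\max}d\,\beta^2\le 6E_{\max}d,
\]
so the success probability is at least $E/(3E_{\max}d)$. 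Since $E$ is known, exact amplitude amplification prepares an $\epsilon$-close state with $Q_{\rm ini}=\cO(\sqrt{E_{\max}d/E})$ applications of the preparation subroutine and its inverse. Each block encoding need only be built to additive precision $\cO(\epsilon/Q_{\rm ini})$; via the bit-count analysis of App.~\ref{app:accessmodel} this contributes $\polylog(NE_{\max}/(E\epsilon))$ gates per call, giving the stated $G_{\rm ini}$.

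The main obstacle is the coupling between block-encoding normalization and amplification cost: using $\cU_{\bf B}$ of Lemma~\ref{lem:blockencodingB} directly would introduce a factor $\sqrt{2\aleph d}=\sqrt{2\kappa_{\max}d/m_{\min}}$ in place of $\sqrt{2\kappa_{\max}d}$, producing a spurious $\sqrt{m_{\max}/m_{\min}}$ overhead in the amplification. Absorbing $\sqrt{\bM}$ into the matrix being block-encoded (that is, encoding $\tilde{\bf B}^\dagger$ rather than ${\bf B}^\dagger$) is what allows $c_0^2+c_1^2$ to be bounded purely in terms of $E_{\max}$ and $d$.
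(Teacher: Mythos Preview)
Your proposal is correct and follows essentially the same approach as the paper's proof: the paper also prepares the weighted branch superposition with $c_0=\alpha\sqrt{m_{\max}}$ and $c_1=\beta\sqrt{2\kappa_{\max}d}$, applies $\sqrt{m_j/m_{\max}}$ by inequality testing on the velocity branch, applies a block encoding of ${\bf B}^\dagger\sqrt{\bM}$ (with the $\bar m_j$-free inequality test you describe) on the position branch, and then amplitude-amplifies with $\cO(\sqrt{E_{\max}d/E})$ rounds. The paper obtains the slightly tighter constant $c_0^2+c_1^2\le 4E_{\max}d$ but this is immaterial for the asymptotics; your explicit identification of why one must encode $\tilde{\bf B}^\dagger$ rather than ${\bf B}^\dagger$ (to avoid the $\sqrt{m_{\max}/m_{\min}}$ overhead) is exactly the point the paper's construction relies on.
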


\begin{proof}
For ease of implementation, we can encode the state on $2n+1$ qubits, where $n=\log(N)$.
In this space, the initial state is represented as
\begin{align}
\label{eq:initialstateapp}
    \sket{\psi(0)}=\frac 1 {\sqrt{2E}} \left[ \sum_j \sqrt{m_j}\dot x_j(0) \ket j \ket 0 + i \sum_j \sqrt{\kappa_{jj}} x_j(0) \ket j \ket j +i \sum_{j<k} (\sqrt{\kappa_{jk}}(x_{j}(0) -x_k(0))\ket j \ket k\right]\;.
\end{align}
First we rotate a qubit and apply the state preparation oracle as
\begin{align}
    \ket 0 &\mapsto \frac 1 {\sqrt{m_{\max}\alpha^2 + 2\kappa_{\max} d \, \beta^2 }} (\sqrt{m_{\max}} \alpha \ket 0 + i \sqrt{2\kappa_{\max} d} \beta \ket {1}) \nn
    \label{eq:stateprepfirststep}
    & \mapsto \frac 1 {\sqrt{m_{\max}\alpha^2 + 2\kappa_{\max} d \, \beta^2 }} (\sqrt{m_{\max}} \alpha \ket 0 \sket{\dot{\vec x}(0)}+ i \sqrt{2\kappa_{\max} d} \beta \ket {1} \sket{{\vec x}(0)}) \, .
\end{align}
The way we have described the oracle for $\dot{\vec{x}}(0)$ and $\vec x(0)$ allows for the case where $\alpha$ or $\beta$ may be zero, in which case $\cU$ can be arbitrary on that subspace because it has no effect on the state above.
Then our goal is to apply $\sqrt{\bM}$ to the $\sket{\dot{\vec x}(0)}$ portion, and ${\bf B}^\dagger \sqrt{\bM}$ to the $\sket{{\vec x}(0)}$ portion.

The most challenging part is for applying ${\bf B}^\dagger \sqrt{\bM}$.
To do this, we essentially apply the same construction in Lemma~\ref{lem:blockencodingB} in App.~\ref{app:accessmodel}, but for block encoding 
${\bf B}^\dagger \sqrt{\bM}$ rather than ${\bf B}^\dagger$.
First consider the operation where we prepare a superposition over $d$ values in an ancilla to give
\begin{equation}
    \sket{{\vec x}(0)} \frac 1{\sqrt{d}} \sum_{\ell=1}^{d} \ket{\ell}.
\end{equation}
This preparation is simple if $d$ is a power of 2, and otherwise can be performed using $\mathcal{O}(\log d)=\mathcal{O}(n)$ gates via amplitude amplification (see App.~E.2 of \cite{Sanders2020}).
We can then apply the oracle for the positions of non-zero entries of ${\bf K}$ to map $k$ to the non-zero entry in row $j$, and obtain
\begin{equation}
    \frac 1{\sqrt{d}}  \sum_{j=1}^N x_j(0) \ket{j}  \sum_{k\in[N]: \kappa_{jk}\ne 0} \ket{k}.
\end{equation}
There may be less than $d$ values of $k$ for each $j$ such that $\kappa_{jk}\ne 0$, but the oracle may be chosen to give dummy values of $k$ to pad it out to $d$.
These will later be eliminated because $\kappa_{jk}$ is actually zero for those values of $j,k$.

In the usual way, we can use the oracle for the matrix entries of $\kappa_{jk}$ to output its value, and perform an inequality test with an ancilla in an equal superposition to apply a factor corresponding to $\sqrt{\kappa_{jk}/\kappa_{\max}}$ (which is no greater than 1) \cite{SandersPRL18}.
This will give us, with an amplitude corresponding to the amplitude for success,
\begin{equation}
    \frac 1{\sqrt{\kappa_{\max} d}}  \sum_{j,k} \sqrt{\kappa_{jk}}\, x_j(0) \ket{j} \ket{k}.
\end{equation}
We then perform an inequality test, and perform a controlled swap based on the result of the inequality test.
This gives the state
\begin{equation}
    \frac 1{\sqrt{\kappa_{\max} d}} \left[ \sum_{j=1}^N \sqrt{\kappa_{jj}}\, x_j(0) \ket{j}  \ket{j}\ket{0}
    +\sum_{k>j} \sqrt{\kappa_{jk}}\, x_j(0) \ket{j} \ket{k}\ket{0}
    +\sum_{k<j} \sqrt{\kappa_{jk}}\, x_j(0) \ket{j} \ket{k}\ket{1} \right] ,
\end{equation}
where the third register is the qubit flagging the result of the inequality test.
Then we perform a $Z$ gate on that qubit and relabel the third sum to give
\begin{equation}
    \frac 1{\sqrt{\kappa_{\max} d}} \left[ \sum_{j=1}^N \sqrt{\kappa_{jj}}\, x_j(0) \ket{j}  \ket{j}\ket{0}
    +\sum_{j<k} \sqrt{\kappa_{jk}}\, \ket{j} \ket{k}[x_j(0)\ket{0}-x_k(0)\ket{1}] \right]  .
\end{equation}
Projecting onto the $\ket{+}$ state on the ancilla qubit then gives
\begin{equation}
    \frac 1{\sqrt{2\kappa_{\max} d}} \left[ \sum_{j=1}^N \sqrt{\kappa_{jj}}\, x_j(0) \ket{j}  \ket{j}
    +\sum_{j<k} \sqrt{\kappa_{jk}}\, [x_j(0)-x_k(0)] \ket{j} \ket{k} \right] ,
\end{equation}
where the amplitude is indicating the amplitude for success.

To apply $\sqrt{\bM}$ to the $\sket{\dot{\vec x}(0)}$ portion, we can simply perform the same inequality testing procedure from \cite{SandersPRL18} to apply a factor of $\sqrt{m_j/m_{\max}}$, and give
\begin{equation}
    \sket{\dot{\vec x}(0)} \mapsto \frac 1 {\sqrt{m_{\max}}} \sum_{j=1}^N \sqrt{m_j} \dot x_j(0) \ket j  .
\end{equation}
Combined, the preparation on these two parts of the state gives
\begin{align}
    & \frac 1 {\sqrt{m_{\max}\alpha^2 + 2\kappa_{\max} d \, \beta^2}} (\sqrt{m_{\max}} \alpha \ket 0 \sket{\dot{\vec x}(0)}+ i 2\sqrt{\kappa_{\max} d} \beta \ket {1} \sket{{\vec x}(0)}) \nn
    & \mapsto \frac 1 {\sqrt{m_{\max}\alpha^2 + 2\kappa_{\max} d \, \beta^2}} \left[ \sum_{j=1}^N \sqrt{m_j}\,  \dot x_j(0) \ket 0 \ket j \ket 0
\right. \nn & \quad \left. +i \sum_{j=1}^N \sqrt{\kappa_{jj}} \, x_j(0) \ket 1 \ket{j} \ket{j}  + i\sum_{j<k} \sqrt{\kappa_{jk}}\,  (x_j(0) - x_k(0)) \ket 1 \ket{j}   \ket{k}\right] , \label{eq:subnormsta}
\end{align}
which is the correct state (Eq.~\eqref{eq:initialstateapp}), but subnormalised indicating that it is not produced deterministically and we need to use amplitude amplification.

The number of amplitude amplification rounds scales as the inverse of the amplitude, so
the complexity in terms of $\alpha,\beta,E$ is
\begin{align}
    \cO\left( \sqrt{\frac{m_{\max}\alpha^2+2 \kappa_{\max} d\, \beta^2}{2 E}}\right)\, .
\end{align}
We assume we know the constants for this Lemma to simplify the amplitude amplification. It is also possible to perform amplitude amplification when the amplitude is unknown but bounded \cite{fixedpoint}.
Note that
\begin{align}
    \frac 1 2 m_{\max} \alpha^2 &= \frac 1 2 \sum_{j=1}^N m_{\max} (\dot x_j(0))^2  \\
    & =: K_{\max}\;,
\end{align}
which is the kinetic energy of the system of oscillators at $t=0$ if all masses were $m_j=m_{\max}$. Also,
\begin{align}
    \frac 1 2 \kappa_{\max} \beta^2 &= \frac 1 2 \sum_{j=1}^N \kappa_{\max} (x_j(0))^2 \\
    & =: U_{\max}
\end{align}
is the potential energy of a system of $N$ uncoupled oscillators at $t=0$ if all spring constants were $\kappa_{jj}=\kappa_{\max}$ and $\kappa_{jk}=0$ if $j \ne k$. If $E_{\rm max}=K_{\max}+U_{\max}=\frac {m_{\max}} 2 \|\dot{\vec{x}}(0))\|^2 +\frac {\kappa_{\max}} 2 \|\vec{x}(0)\|^2$ is the total energy of such a system, it satisfies
\begin{align}
     m_{\max} \alpha^2 + 2 \kappa_{\max} d\, \beta^2 \le 4 E_{\max} d\, .
\end{align}
The number of amplitude amplification rounds is then $\cO(\sqrt{E_{\max}d/E })$.
Each amplitude amplification round makes two uses of $\cS$, $\cS^\dagger$, and also one use of $\cU$ and $\cU^\dagger$.
Hence, the overall query complexity of our algorithm that prepares $\sket{\psi(0)}$ is
\begin{align}
  Q_{\rm ini}= \cO\left( \sqrt{\frac{E_{\max}d}{E} }\right) \;.
\end{align}

To find the accuracy of the block encoding we need to account for the error due to applying the factors of $\sqrt{\kappa_{jk}/\kappa_{\max}}$ and $\sqrt{m_j/m_{\max}}$ by inequality testing.
If we give the maximum errors in these factors as $\delta$ (corresponding to $\log(1/\delta)$ bits in the inequality testing), then the error in Eq.~\eqref{eq:subnormsta} is upper bounded as
\begin{align}
& \hspace{1.25em} \frac 1 {\sqrt{m_{\max}\alpha^2 + 2\kappa_{\max} d \, \beta^2 }} \ 
    \Biggl\| \sum_{j} \delta \, \sqrt{{m_{\max}}}\dot x_j(0) \ket 0 \ket j \ket 0
+ \sum_{j} \delta \, \sqrt{\kappa_{\max}} x_j(0) \ket 1 \ket{j} \ket{j} \nn  
& \hspace{13em}
+ \sum_{\substack{j<k\\ \kappa_{jk}\ne 0}} \delta \, \sqrt{\kappa_{\max}}(x_j(0) - x_k(0)) \ket 1 \ket{j} \ket{k}\Biggr\| \nn
& \le \frac 1 {\sqrt{m_{\max}\alpha^2 + 2\kappa_{\max} d \, \beta^2 }}\left[ m_{\max}\left\| \sum_{j=1}^N \delta \, \dot x_j(0) \ket j \right\|^2
+ \kappa_{\max} \left\| \sum_{\substack{j\le k\\ \kappa_{jk}\ne 0}} \delta \, x_j(0) \ket{j}   \ket{k}
-\sum_{\substack{j> k\\ \kappa_{jk}\ne 0}} \delta \, x_j(0) \ket{k} \ket{j}
\right\|^2 \right]^{1/2} \nn
&\le \frac \delta {\sqrt{m_{\max}\alpha^2 + 2\kappa_{\max} d \, \beta^2 }} \left[ m_{\max} \left\| \dot{\vec{x}}_j(0) \right\|^2
+ \kappa_{\max} \left\| \sqrt{2}\sum_{j,k;\kappa_{jk}\ne 0} x_j(0) \ket{j}   \ket{k}\right\|^2 \right]^{1/2}\nn
&\le \frac \delta {\sqrt{m_{\max}\alpha^2 + 2\kappa_{\max} d \, \beta^2 }} \left[ m_{\max}\left\| \dot{\vec{x}}(0) \right\|^2
+ 2\kappa_{\max} d\left\| \vec{x}(0)\right\|^2 \right]^{1/2} \nn
 &= \delta \; .
\end{align}
The inequality on the third line is obtained by noting that we can move from the state on the third line to that on the second line by the procedure described above, with an inequality test between $j$ and $k$, a controlled swap and phase, then projection onto $\ket{+}$ on the ancilla.
Thus, $\delta$ corresponds to the error \emph{before} amplitude amplification, and it is amplified by a factor of $\cO(\sqrt{E_{\max}d/E})$.

If we are aiming for error $\epsilon$ in the state preparation, we should therefore take
\begin{equation}
    \delta = \cO\left(\epsilon \sqrt{\frac{E}{E_{\max}d} } \right) .
\end{equation}
Because the state preparation by inequality testing uses squares, we have a gate complexity that is the square of $\log(1/\delta)$ for each step of amplitude amplification, and therefore the gate complexity from this source is
\begin{equation}
    \cO\left( \sqrt{\frac{E_{\max}d}{E} } \log^2 \left(\frac{E_{\max}d}{E \epsilon} \right)\right).
\end{equation}
Moreover, there are $\cO(\log(N))$ gates needed to implement an inequality test and controlled swap for each step of amplitude amplification.
Using $d\le N$, we can give the overall gate complexity as 
\begin{equation}
    \cO\left( \sqrt{\frac{E_{\max}d}{E} } \log^2 \left(\frac{E_{\max}N}{E \epsilon} \right)\right),
\end{equation}
which are the stated results. 
\end{proof}

The factor $E_{\rm max}/E$ in the complexity can become large when the masses $m_j$ or spring constants $\kappa_{jk}$ lack uniformity. In this case, the state $\sket{\psi(0)}$ is more complicated, and one has to ``pay extra'' for its preparation.
In addition, the query complexity in Lem.~\ref{lemma:stateprep} matches a lower bound for some instances. For example, consider the case where $x_j(0)=0$ for all $j \in[N]$, which implies $\beta^2=0$, and $\dot x_j(0)=1/\sqrt N$ for all $j \in [N]$, which implies $\alpha^2=1$. Let the masses be $m_j=1$, for some unknown $j$, and $m_{j'}=0$ otherwise. Then, $\sket{\psi(0)}$ is simply a state $\ket j$ in a computational basis, and our state preparation algorithm
would output the unknown $j$ with high probability, as in the unstructured search problem~\cite{Gro96}. A lower bound on the query complexity for this case is $\Omega(\sqrt N)$~\cite{bennett1997strengths}. Since $E_{\max}=\frac 1 2 m_{\max} \alpha^2=\frac 1 2$, $E=\frac 1 2 (\dot x_j(0))^2=\frac 1 {2N}$, and $d=1$ for this instance, the query complexity of our approach is  $\cO(\sqrt N)$, matching the lower bound.

\section{Other encodings}\label{app:otherencodings}

In Sec.~\ref{sec:quantumevolution}, we choose a specific encoding for the position and velocities as a quantum state $\ket \psi$ in Eq.~\eqref{eq:quantumstate}, which satisfies Eq.~\eqref{eq:schrodinger}. 
Let $\ket{\psi(t)} \in \mathbb C^{N+M}$ be written as
\begin{equation}
    \ket{\psi(t)}=\begin{pmatrix}
       \vec \nu(t)\\ i\vec \mu(t)
    \end{pmatrix},
\end{equation}
where $\vec \nu(t) \in \mathbb C^N$ and $i\vec \mu(t) \in \mathbb C^{M}$. We made the choice of $\vec \nu(t)= \dot{\vec y}(t)$ and $\vec \mu(t)={\bf B}^\dagger \vec y(t)$
to present our main results. Nevertheless, other choices also work, and we now discuss another solution (and problem). This provides us with a new encoding that is different from the one we used in Problem~\ref{problem:oscillators}, and is closely related to the encoding used in Ref.~\cite{Costa19}.

Let $\vec \nu(t)={\bf P} \vec y(t)$, where ${\bf P}$ projects out 
the components of $\vec y(t)$ corresponding to the null space of ${\bf A}$ (or ${\bf B}^\dagger$). 
Let $\vec \mu(t)=-{\bf B}^+ {\bf P}\dot{\vec y}(t)$,
where ${\bf B}^+$ is the Moore-Penrose pseudo-inverse of ${\bf B}$. This is an $M\times N$ matrix that
 satisfies ${\bf B}{\bf B}^+{\bf B}={\bf B}$ and
${\bf B}^+={\bf B}^\dagger ({\bf B}{\bf B}^\dagger)^+=
{\bf B}^\dagger {\bf A}^+$.
We claim that this choice also satisfies Eq.~\eqref{eq:schrodinger}. Specifically, we want it to satisfy
\begin{align}
\label{eq:schrodinger2}
\begin{pmatrix} \dot{ \vec \nu} (t) \cr  i\dot{\vec \mu} (t)\end{pmatrix} 
=-i {\bf H} \begin{pmatrix} \vec \nu (t) \cr  i{\vec \mu} (t)\end{pmatrix} 
= \begin{pmatrix}  -{\bf B} \vec \mu(t) \cr i {\bf B}^\dagger \vec \nu(t)\end{pmatrix}.
\end{align}

We then have
\begin{align}
\dot{\vec \nu}(t)
= {\bf P}\dot{\vec y}(t) 
= {\bf A}{\bf A}^+{\bf P}\dot{\vec y}(t) 
={\bf B} {\bf B}^+ {\bf P}\dot{\vec y}(t) 
= -  {\bf B} \vec \mu(t) \;,
\end{align}
which establishes the first component of Eq.~\eqref{eq:schrodinger2}. Note that $ {\bf A}{\bf A}^+{\bf P}={\bf P}$, since ${\bf A}^+$
does not act on eigenvectors of ${\bf A}$ of eigenvalue zero.
Using Newton's equations (Eq.~\eqref{eq:matrixnewton}), we have
\begin{align}
i \dot{\vec \mu}(t) 
= -i {\bf B}^+{\bf P}\ddot{\vec y}(t) 
 = i {\bf B}^+{\bf P} {\bf A}{\vec y}(t) 
 = i {\bf B}^+ {\bf A} {\bf P} {\vec y}(t) 
 = i {\bf B}^\dagger {\bf A}^+ {\bf A}  {\bf P} {\vec y}(t) 
 = i {\bf B}^\dagger {\bf P} {\vec y}(t) 
 = i {\bf B}^\dagger \vec \nu(t) \;,
\end{align}
which establishes the second component in Eq.~\eqref{eq:schrodinger2}.

Hence, evolution under ${\bf H}$ also allows us to solve the following variation of Problem~\ref{problem:oscillators} and~\ref{problem:harmonicapprox}:
\begin{problem}
\label{problem:oscillators2}
Let ${\bf A} \succcurlyeq 0$ be an $N \times N$ real-symmetric,  PSD, and $d$-sparse matrix, and ${\bf B}$ an $N\times M$  matrix that satisfies ${\bf B}{\bf B}^\dagger={\bf A}$.
Define the normalized state
\begin{align}
\label{eq:encodedstate3}
\ket{\psi(t)} := \frac 1 {\sqrt {2F}} \begin{pmatrix} {\bf P}  {\vec y}(t) \cr - i {\bf B}^+ {\bf P} \dot{\vec y}(t) \end{pmatrix} \;,
\end{align}
where $F>0$ is a constant, 
 ${\bf P}$ is the projector onto the subspace orthogonal to the null space of ${\bf A}$, and ${\bf B}^+$ is the Moore-Penrose pseudo-inverse of ${\bf B}$.
Assume we are given oracle access to ${\bf A}$ and to a unitary $\cW$ that prepares 
the initial state $\sket{\psi(0)}$.
Given $t$ and $\epsilon$, the goal is to output a state that is $\epsilon$-close to $\ket{\psi(t)}$ in Euclidean norm.
\end{problem}

Note that the normalizing constant is
\begin{align}
F= \frac 1 2 \vec y(t)^T {\bf P}\vec y(t) + \frac 1 2  \dot{\vec y}(t)^T {\bf P} ({\bf B}^+)^\dagger {\bf B}^+ {\bf P}\dot{\vec y}(t) \;,
\end{align}
which satisfies
\begin{align}
2\dot F&=\dot{\vec y}(t)^T {\bf P}\vec y(t) +{\vec y}(t)^T {\bf P}\dot{\vec y}(t) + \ddot{\vec y}(t)^T {\bf P} ({\bf B}^+)^\dagger {\bf B}^+ {\bf P} \dot{\vec y}(t)+ \dot{\vec y}(t)^T {\bf P} ({\bf B}^+)^\dagger {\bf B}^+ {\bf P}\ddot{\vec y}(t) \nn
& = \dot{\vec y}(t)^T {\bf P}\vec y(t) +{\vec y}(t)^T {\bf P}\dot{\vec y}(t) - {\vec y}(t)^T {\bf A} {\bf P} ({\bf B}^+)^\dagger {\bf B}^+ {\bf P} \dot{\vec y}(t)- \dot{\vec y}(t)^T {\bf P} ({\bf B}^+)^\dagger {\bf B}^+ {\bf P} {\bf A}  {\vec y}(t)\nn
& = \dot{\vec y}(t)^T {\bf P}\vec y(t) +{\vec y}(t)^T {\bf P}\dot{\vec y}(t) - {\vec y}(t)^T {\bf P} {\bf A} ({\bf B}^+)^\dagger {\bf B}^+ {\bf P} \dot{\vec y}(t) -
\dot{\vec y}(t)^T {\bf P} ({\bf B}^+)^\dagger {\bf B}^+ {\bf A} {\bf P}{\vec y}(t) \nn
& = \dot{\vec y}(t)^T {\bf P}\vec y(t) +{\vec y}(t)^T {\bf P}\dot{\vec y}(t) - {\vec y}(t)^T {\bf P} {\bf A} ({\bf A}^+)^\dagger {\bf B}{\bf B}^+ {\bf P} \dot{\vec y}(t) -
\dot{\vec y}(t)^T {\bf P} ({\bf B}^+)^\dagger {\bf B}^\dagger {\bf A}^+{\bf A} {\bf P}{\vec y}(t)\nn
& = \dot{\vec y}(t)^T {\bf P}\vec y(t) +{\vec y}(t)^T {\bf P}\dot{\vec y}(t) - {\vec y}(t)^T {\bf P} {\bf B}{\bf B}^+ {\bf P} \dot{\vec y}(t) -
\dot{\vec y}(t)^T {\bf P} ({\bf B}^+)^\dagger {\bf B}^\dagger  {\bf P}{\vec y}(t)\nn
& = \dot{\vec y}(t)^T {\bf P}\vec y(t) +{\vec y}(t)^T {\bf P}\dot{\vec y}(t) - {\vec y}(t)^T {\bf P}  \dot{\vec y}(t) -
\dot{\vec y}(t)^T {\bf P} {\vec y}(t)\nn
& = 0\;,
\end{align}
where we used ${\bf B}{\bf B}^+ {\bf P} ={\bf A} {\bf A}^+ {\bf P}={\bf P}$.
Then, $F$ is independent of $t$ and refers to another conserved quantity different from $E$.
We can use our quantum algorithm to prepare
$\ket{\psi(t)}$ in Eq.~\eqref{eq:encodedstate3} by simulating ${\bf H}$ in Eq.~\eqref{eq:hamiltonian}. The complexity of Hamiltonian simulation is the same as that in Thm.~\ref{thm:main}.

In the encoding used for Problem~\ref{problem:oscillators2}, some amplitudes
are proportional to ${\bf P}\vec y(t)$, and we have more direct access to the displacements of the oscillators
than that using the encoding in Problem~\ref{problem:oscillators}. However, 
preparing $\sket{\psi(0)}$ in this case
is expected to be more costly as it involves the action of ${\bf B}^+$. This is the reason why the complexity in the quantum algorithm of 
Ref.~\cite{Costa19} for simulating the wave equation can be dominated by that of the initial state preparation. 
That case is a special instance of Problem~\ref{problem:oscillators2} where ${\bf A}$
corresponds to the (discretized) Laplacian.

\section{Related work and optimality of our approach}\label{app:relatedwork}

Our quantum algorithm 
provides the solution to a second-order differential equation, i.e., $\ddot {\vec y}(t)=-{\bf A} {\vec y}(t)$, encoded in the amplitudes of a quantum state $\sket{\psi(t)}$ defined in Problems~\ref{problem:oscillators} and~\ref{problem:harmonicapprox}. We do this via the classical-to-quantum reduction given in Sec.~\ref{sec:quantumevolution}, which results in a first-order differential equation, corresponding to a Schr\"odinger equation with a time-independent Hamiltonian ${\bf H}$ that depends on ${\bf B}$, where ${\bf A}={\bf B}{\bf B}^\dagger$.
Prior works have also considered quantum algorithms for solving first-order differential equations
using a variety of approaches. A prominent example is Ref.~\cite{Ber14} and related results (cf.~\cite{childs2021high,Krovi2023}) that use the quantum algorithm for linear systems of equations~\cite{HHL,CKS17,CostaPRXQ22}. Here we argue that those approaches, while possibly applicable to our problem, generally yield inefficient quantum algorithms for Problems~\ref{problem:oscillators},~\ref{problem:kineticestimation},~\ref{problem:harmonicapprox}, and~\ref{problem:potentialestimation} due to issues arising from the encoding. Indeed, by formulating 
the problem as a Hamiltonian simulation problem, 
and using optimal methods for the latter, we argue that our approach is the optimal one for solving these problems. Among other useful features, our approach does not necessitate a clock register to encode the solution at all times as in Ref.~\cite{Ber14}. 
More details follow.

One standard approach to formulate
the second-order differential equation as a first-order one in our case would be to consider
\begin{align}
\label{eq:firstordera}
   \dot {\vec v}(t) =\begin{pmatrix} \bf 0 & -{\bf A}  \cr \one_N & {\bf 0}\end{pmatrix} {\vec v}(t) 
   \;,
\end{align}
where
\begin{align}
    {\vec v}(t) := \begin{pmatrix}
        \dot {\vec y}(t) \cr  {\vec y}(t)
    \end{pmatrix}
\end{align}
is a vector in $\mathbb R^{2N}$ that encodes the coordinates of all oscillators. (We assume that the choice of units is set from the beginning so that calculations are done with real numbers.)
Another standard approach would be to consider
\begin{align}
\label{eq:firstorderb}
   \dot {\vec w}(t) =\begin{pmatrix} \bf 0 & -\one_N \cr {\bf A} & {\bf 0}\end{pmatrix} {\vec w}(t) 
   \;,
\end{align}
where
\begin{align}
    {\vec w}(t) := \begin{pmatrix}
        \dot {\vec y}(t) \cr {\bf A}{\vec y}(t)
    \end{pmatrix}
\end{align}
is also a vector in $\mathbb R^{2N}$.
Equations~\eqref{eq:firstordera} and \eqref{eq:firstorderb} are also homogeneous 
first-order differential equations, whose solutions can be obtained by applying an exponential operator to $\vec v(0)$ or $\vec w(0)$. 
Note that $\frac {d}{dt}\|{\vec v}(t)\|\ne 0$ and $\frac {d}{dt}\|{\vec w}(t)\|\ne 0$
in general, so we cannot directly apply Hamiltonian simulation methods using these encodings, as the evolution of these vectors is not unitary.

To solve differential equations that do not conserve the norm,
Refs.~\cite{Ber14,childs2021high,Krovi2023}
propose a range of quantum algorithms that, for this application, would output (normalized) quantum states
\begin{align}
    \ket{\vec v(t)}&:=\frac 1 {\sqrt{\| \dot {\vec y}(t)\|^2 + \| {\vec y}(t)\|^2}}\begin{pmatrix}
        \dot {\vec y}(t) \cr  {\vec y}(t)
    \end{pmatrix} \; ,
\end{align}
or 
\begin{align}
    \ket{\vec w(t)}&:=\frac 1 {\sqrt{\| \dot {\vec y}(t)\|^2 + \|{\bf A} {\vec y}(t)\|^2}}\begin{pmatrix}
        \dot {\vec y}(t) \cr  {\bf A}{\vec y}(t)
    \end{pmatrix} \; .
\end{align}
In Refs.~\cite{Ber14,childs2021high,Krovi2023} this is done by approximating the solutions to the differential equations as solutions to systems of linear equations. The complexity of this approach depends on several parameters, in particular the condition number of the matrix to be inverted. 
In Ref.~\cite{An2022b} a solution is found by giving the solution to the differential equation in exponential form, and then using the LCU approach to approximate the exponential; a related approach that approximates the exponential operator is given in Ref.~\cite{CS16}. 
That approach is not applicable here, because it requires the matrix to be normal.
Here the matrices -- the 2$\times$2 block matrices including ${\bf A}$ -- are not normal, so the approach cannot be used.
Nevertheless, even disregarding
such difficulties, we show that this way of encoding the coordinates of the oscillators as in Eqs.~\eqref{eq:firstordera} or~\eqref{eq:firstorderb} is readily problematic.

To observe this, 
it suffices to consider the evolution of a single normal mode of frequency $\omega_k$, i.e., the eigenvector of ${\bf A}$ of eigenvalue $(\omega_k)^2$. We obtain
\begin{align}
    \vec y(t) = \vec{a}_k \cos(t \omega_k + \phi_k)\;,
\end{align}
where $\phi_k \in \mathbb R$ is the initial phase and $\vec a_k \in \mathbb R^N$ is the eigenvector of ${\bf A}$. 
Computing the time average we obtain
\begin{align}
    \langle \| \vec y(t)\|^2 \rangle 
     &= \frac 1 2
    \| \vec a_k\|^2
\end{align}
and
\begin{align}
    \langle \| \dot {\vec y}(t)\|^2 \rangle 
    &= \frac 1 2
    |\omega_k|^2\| \vec a_k\|^2 \nn
    & =   |\omega_k|^2  \langle \| \vec y(t)\|^2 \rangle \;.
    \end{align}
    (The latter is also the time average of the  kinetic energy.)
In a case where ${\bf A}$ gives rise to normal modes of low frequency, we have $|\omega_k| \ll 1$ (in the corresponding units), and then
\begin{align}
  \langle \| \dot{\vec y}(t)\|^2 \rangle \ll  \langle \| \vec y(t)\|^2 \rangle \;.
\end{align}
The implication is that the support 
of $\ket{\vec v(t)}$ is, on average, mostly concentrated 
on the subspace spanned by basis vectors that encode $\vec y(t)$. These do not 
encode the terms appearing in the kinetic energy or $\dot{\vec y}(t)$.
Hence, the complexity of solving Problem~\ref{problem:kineticestimation} using this encoding,
where $\ket{\vec v(t)}$ is prepared rather than $\ket{\psi(t)}$,
is at least linear in $1/|\omega_k|$, which is the factor needed to increase the desired amplitudes to a constant.  
This can become unbounded if $|\omega_k| \rightarrow 0$. 

In the second case we note that the support of $\ket{\vec w(t)}$
is, on average, mostly concentrated on the subspace spanned by basis vectors that encode $\dot{\vec y}(t)$ instead. That is, the case $\langle \|{\bf A}\vec y (t) \|^2\rangle \ll \langle \| \dot{\vec y }(t) \|^2\rangle $ can occur when there are normal modes of low frequency. To solve Problem~\ref{problem:oscillators} or Problem~\ref{problem:harmonicapprox}, and especially Problem~\ref{problem:potentialestimation}, the second component of the state must be proportional to $ {\bf B}^\dagger \vec y(t)$ or $\sqrt{\bf A} \vec y(t)$, which requires the application of the pseudo-inverse of ${\bf B}$ or ${\sqrt{\bf A}}$ to $ \ket{\vec w(t)}$. The condition number of these pseudo-inverses is also polynomial in $1/|\omega_k|$; for example $\|(\vec a_k)^T {\bf B}\|=|\omega_k| \ll 1$ for low frequencies. This implies that the complexity of solving 
Problem~\ref{problem:oscillators}, Problem~\ref{problem:harmonicapprox}, or Problem~\ref{problem:potentialestimation} using this encoding, 
where $\ket{\vec w(t)}$ is prepared rather than $\ket{\psi(t)}$, is at least linear in $1/|\omega_k|$, which is the complexity of the most efficient methods to implement the pseudo-inverse~\cite{CKS17}.
As in the previous case, this can also become unbounded
if $|\omega_k| \rightarrow 0$.

Similar observations follow directly from the results for the complexity in Refs.~\cite{Ber14,childs2021high}.
There the complexity is polynomial in the condition number of the matrix that diagonalises the matrix appearing in the differential equations, which is here that with the blocks $-{\bf A}$ and $\one_N$ in Eq.~\eqref{eq:firstordera}, or the negative of these for $\vec w(t)$ in Eq.~\eqref{eq:firstorderb}.
The matrix that diagonalises this matrix has condition number
\begin{equation}
    \max(|\omega_k|,|\omega_k|^{-1}) \, .
\end{equation}
Thus it will become unbounded when $|\omega_k| \rightarrow 0$, as described above.
Technically, we cannot use the result as given in Ref.~\cite{Ber14} directly, because the stability condition it uses requires the matrices appearing in the differential equations to not have eigenvalues exactly on the imaginary axis, as is the case here.

An alternative approach to avoid dependence on the condition number of the diagonalising matrix is that using the log-norm \cite{Krovi2023}.
There we would need to subtract the identity times the log-norm from the original matrix.
Here, the log-norm is
\begin{equation}
    \max |1-\omega_k^2|/2 \, .
\end{equation}
This means it is always positive, and so subtracting a multiple of the identity would result in a norm that exponentially decreases.
Hence the approach of Ref.~\cite{Krovi2023} would yield a complexity that is exponential in time.

At a high level, our encoding is motivated by the conservation of energy 
in time, and places equal emphasis on those terms that encode the kinetic energy ($\dot{\vec y}(t)$) and the potential energy (${\bf B}^\dagger{\vec y}(t)$) of the oscillators. The other two encodings  
do not have this feature: $\ket{\vec v(t)}$ underrepresents the kinetic energy terms and $\ket{\vec w(t)}$ underrepresents the potential energy terms, bringing the issues
discussed above.

Last, we provide a comment on the relation 
between our results and
a closely related result in Ref.~\cite{Costa19}
for simulating the wave equation. 
In Ref.~\cite{Costa19} it is shown that the wave equation, a second-order and homogeneous differential equation, can also be mapped to a Schr\"odinger equation. Their construction is related to ours in that it uses a factorization of the (discrete) Laplacian as ${ L}={\bf B}{\bf B}^\dagger$, and indeed it is well-known that the wave equation is one example of a classical system of coupled oscillators where all masses and springs are uniform, and where the couplings between oscillators are geometrically local (e.g., on the square grid). However, the encoding used in Ref.~\cite{Costa19} is different to ours; for example, $N$ amplitudes are reserved to encode the intensity of the wave, which corresponds to $\vec y(t)$ in our case. (The component of $\vec y(t)$ on the eigenvector of eigenvalue 0 of $L$ is projected out.) 
The other amplitudes are proportional to ${\bf B}^+ \dot{\vec y}(t)$, where ${\bf B}^+$ is the pseudo-inverse of ${\bf B}$. 
That is, the state prepared by the algorithm 
of Ref.~\cite{Costa19} is of the form
\begin{align}
    \ket{\phi(t)} \propto \begin{pmatrix} \vec y(t) \cr
        {\bf B}^+ \dot{\vec y}(t)
    \end{pmatrix} \;,
\end{align}
where the constant of proportionality is also time-independent. This is essentially the same encoding as the one discussed in Appendix~\ref{app:otherencodings}, since ${\bf P}\vec y(t)=\vec y(t)$ by assumption. The length of this vector is preserved in time since the evolution is unitary. Nevertheless,
one implication of using this encoding to solve Problem~\ref{problem:oscillators} or Problem~\ref{problem:kineticestimation} for this example is the need to invert ${\bf B}^+$ initially, to obtain amplitudes proportional to ${\bf B}^+\dot{\vec y}(0)$. The condition number can be large, i.e., it is $\cO(\poly(N))$ for the example of the wave equation, and hence the initial state preparation step ($t=0$)
can be inefficient.
Indeed, Ref.~\cite{Costa19} manages to give evidence of a polynomial quantum speedup only.
In addition, when considering the wave equation, the system is geometrically local and implies that $N$ is polynomial in the evolution time $t$ (i.e, $t$ is exponential in $n$). In our problem, however, we can treat a significantly larger set of instances: we do not impose uniform masses, uniform spring constants, or even geometrically-local interactions, and we can allow for times $t=\cO(\poly(n))$. This generality together with our improved encoding are key to obtaining our claimed exponential quantum speedups for the problems defined.

\end{document}